\newcommand{\algorithmfootnote}[2][\footnotesize]{%
  \let\old@algocf@finish\@algocf@finish
  \def\@algocf@finish{\old@algocf@finish
    \leavevmode\rlap{\begin{minipage}{\linewidth}
    #1#2
    \end{minipage}}%
  }%
}
\def\ind{\mathbbm{1}}   
\newcommand{\bbR}{\mathbb{R}} 
\newcommand{\bbS}{\mathbb{S}} 
\newcommand{\bbP}{\mathbb{P}} 
\newcommand{\id}{\mathrm{id}}
\newcommand{\bK}{\mathbf{K}} 
\newcommand{\betamin}{C_{\mathrm{min}}}
\def\din{d_{\mathrm{in}}}              
\def\post{\pi_n} 			 
\def\Pa{\mathrm{Pa}} 
\def\Ch{\mathrm{Ch}}  
\newcommand{\se}{\mathsf{e}} 
\newcommand{\cE}{\mathcal{E}}  
\newcommand{\cG}{\mathcal{G}}  
\newcommand{\cA}{\mathcal{A}}
\newcommand{\cB}{\mathcal{B}}
\newcommand{\cC}{\mathcal{C}}
\newcommand{\cD}{\mathcal{D}}
\newcommand{\lmin}{\lambda_{\mathrm{min}}}
\newcommand{\lmax}{\lambda_{\mathrm{max}}}
\def\vmin{\underline{\nu}}
\def\vmax{\overline{\nu}}
\def\umin{\underline{\mu}}
\def\umax{\overline{\mu}}
\newcommand{\cN}{\mathcal{N}}  
\newcommand{\proj}{\Phi}  
\newcommand{\oproj}{\Phi^{\perp}}  
\DeclareMathOperator\argmax{arg\,max}
\DeclareMathOperator\argmin{arg\,min}
\DeclareMathOperator\diag{diag}
\DeclareMathOperator\trace{tr}
\def\sX{\mathsf{X}} 
\def\dag{\mathcal{G}_p}
\def\model{\mathcal{M}_p}
\newcommand{\cZ}{\mathcal{Z}}
\newcommand{\cJ}{\mathcal{J}}
\newcommand{\cK}{\mathcal{K}}
\def\X{X}
\def\RRS{\xi}
\def\MAP{\mathrm{MAP}}
\def\T{\mathrm{T}} 
\def\SSE{\mathrm{RSS}}
\def\minus{\text{-}}
\def\c{\mathrm{c}} 
\theoremstyle{remark}
\newtheorem{remark}{Remark}
\newtheorem{assumption}{Assumption}
\newtheorem{proposition}{Proposition}
\newtheorem{theorem}{Theorem}
\newtheorem{corollary}{Corollary}
\newtheorem{lemma}{Lemma}
\title{Order-based Structure Learning without Score Equivalence} 
\author{Hyunwoong Chang$^{1}$, James Cai$^{2}$ and  Quan Zhou$^{1,}$\thanks{Corresponding author: quan@stat.tamu.edu}}
\date{}
\affil{$^{1}$Department of Statistics, Texas A\&M University\\
$^{2}$Department of Veterinary Integrative Bioscience, Texas A\&M University }
\begin{document}
\maketitle

\begin{abstract}
We propose an empirical Bayes formulation of the structure learning problem, where the prior specification assumes that all node variables have the same error variance, an assumption known to ensure the identifiability of the underlying causal directed acyclic graph (DAG).  
To facilitate efficient posterior computation, we approximate the posterior probability of each ordering by that of a best DAG model, which naturally leads to an order-based Markov chain Monte Carlo (MCMC) algorithm. 
Strong selection consistency for our model in high-dimensional settings is proved under a condition that allows heterogeneous error variances, and the mixing behavior of our sampler is theoretically investigated.  
Further, we propose a new iterative top-down algorithm, which quickly yields an approximate solution to the structure learning problem and can be used to initialize the MCMC sampler.   
We demonstrate that our method outperforms other state-of-the-art algorithms under various simulation settings, and conclude the paper with a single-cell real-data study illustrating practical advantages of the proposed method. 
\end{abstract}
\noindent
\textit{Keywords:}
\small
Directed acyclic graphs; Empirical Bayes methods; Strong selection consistency; Markov chain Monte Carlo methods; Non-decomposable scores. 

\section{Introduction}\label{sec:intro}
\normalsize
We consider Bayesian structure learning of a directed acyclic graph (DAG) model from observational data. 
Bayesian algorithms for structure learning are often classified as score-based in the literature, since they assign a posterior probability to each candidate DAG, the logarithm of which can be interpreted as a score~\citep{drton2017structure}.  
A Markov equivalence class is a set of all DAGs that encode the same set of conditional independence relations among node variables. Without a priori knowledge, we cannot distinguish between two Markov equivalent DAGs using only observational data~\citep{koller2009probabilistic}. 
If a Bayesian model yields the same score for DAGs in the same equivalence class, we say it is score equivalent, which is widely considered a desirable property~\citep{andersson1997characterization}.  
Most Bayesian structure learning methods used in practice are score equivalent~\citep{geiger2002parameter}.    

Since the number of $p$-node DAGs grows super-exponentially with $p$, an exact evaluation of the posterior distribution is impossible unless $p$ is extremely small, and Markov chain Monte Carlo (MCMC) methods are commonly employed to generate samples from the posterior distribution.   
As a classical example, structure MCMC, which was proposed in the seminal work of~\citet{madigan1995bayesian}, is a random walk Metropolis-Hastings algorithm on the DAG space that uses single-edge addition, deletion, and reversal as proposal moves. However, it is known that this algorithm can often suffer from computational inefficiency due to the considerable time it spends sampling DAGs within the same equivalence class~\citep{andersson1997characterization, chickering2002learning}. Even if the data is very informative on the conditional independence relations among all variables, we are only able to learn the equivalence class of the underlying true DAG model, which can easily be very large and takes the chain a long time to explore.  
In order to overcome slow mixing behavior caused by equivalence classes, many DAG MCMC samplers have been proposed, which typically introduce new DAG operations that can realize jumps between very different DAGs, enabling the chain to move more efficiently across equivalence classes~\citep{grzegorczyk2008improving, su2016improving}. 
Another strategy is to devise MCMC samplers on some other spaces that might be easier to explore than the DAG space. Indeed, one can directly search on the equivalence class space so that redundant moves between Markov equivalent DAGs are avoided~\citep{castelletti2018learning, zhou2021complexity}. But this approach is not commonly used in the Bayesian literature, and one likely reason is that, unlike DAG MCMC samplers, the implementation of graph operations for equivalence classes can be highly complicated.

A more popular approach is to perform MCMC sampling on the order space~\citep{friedman2003being, agrawal2018minimal, kuipers2022efficient}. Due to the acyclicity constraint, every $p$-node DAG has at least one consistent ordering of the $p$ nodes  such that node $i$ precedes node $j$  whenever the edge $i \rightarrow j$ is in the DAG. 
Order-based MCMC methods are largely motivated by the following observation: the main computational challenge in structure learning lies in the uncertainty of order estimation, since once the ordering of variables is fixed, structure learning can be reduced to a collection of variable selection problems that are often considered to have a much smaller complexity. 
It is generally believed that the mixing of order MCMC is better than that of structure MCMC, because the search space is smaller and the posterior distribution on the order space tends to be smoother~\citep{friedman2003being}. 
However, the problem of traversing large equivalence classes still exists. 
To see this, assume again that all conditional independence relations can be learned from the data so that the posterior concentrates on one equivalence class. But any two DAGs in this equivalence class must have different orderings since at least one edge is flipped. This implies that the posterior distribution on the order space concentrates on a set at least as large as this equivalence class. 

To mitigate the potential mixing problem caused by traversing large equivalence classes, we propose to impose identifiability conditions  so that within each equivalence class, the posterior mass tends to concentrate on only one DAG. Consequently, the overall posterior distribution tends to have less and sharper modes. To this end, we follow the work of~\citet{peters2014identifiability} to consider Gaussian structural equation models with equal error variances.  
Intuitively, by assuming equal error variances, the data becomes informative on edge directions so that an MCMC sampler can quickly learn the best DAG in its equivalence class. For example, consider two correlated variables $\sX_1, \sX_2$. The DAGs $\sX_1 \rightarrow \sX_2$ and $\sX_2 \rightarrow \sX_1$ are Markov equivalent, and in general, we cannot determine the causal direction if only observational data is available. But the equal variance assumption forces  the posterior score to favor $\sX_1 \rightarrow \sX_2$ if $\sX_2$ has a larger marginal variance than $\sX_1$.  Though a score equivalent Bayesian procedure allows us to make posterior inferences by averaging over Markov equivalent DAGs, 
this advantage is often merely theoretical due to its slow convergence, even when dealing with a moderately large number of node variables. 
Our simulation study and real data analysis will show that the use of equal variance assumption does provide practical advantages, and it improves the posterior inference accuracy unless there is a huge degree of heterogeneity among error variances. 

There is a rapidly growing literature on the identifiability conditions for structure learning~\citep{shimizu2006linear, hoyer2008nonlinear, peters2011identifiability, peters2014identifiability, strieder2021confidence, drton2017structure, glymour2019review}. In particular, two deterministic search algorithms have been proposed recently for structure learning with equal error variances~\citep{ghoshal2018learning,chen2019causal}, and they are shown to be advantageous in terms of computational cost and scale well to high-dimensional data. 
But to our knowledge, the corresponding Bayesian theory and methodology is largely underdeveloped. 
Aiming to fill this gap, we formulate an empirical Bayes model under the equal variance assumption and obtain a posterior score that distinguishes between Markov equivalent DAGs. 
We prove a strong selection consistency result for our model, which shows that the posterior probability of the true DAG tends to one in probability under mild high-dimensional conditions. 
In particular, while our prior distribution encodes the equal variance constraint, the consistency result holds under a weaker assumption known as the minimum-trace condition~\citep{aragam2019globally}. 
Further, we extend the consistency result to cases where errors follow sub-Gaussian distributions, which include more interesting settings such as mixed discrete-Gaussian DAG models. 

The posterior score derived from our model is non-decomposable (see Remark~\ref{rmk:decomp}), which is expected since, under the equal variance assumption, the marginal likelihood of a DAG model should depend on how close the residual variances of the $p$ nodes are to each other. 
This poses new computational challenges and again makes our method very different from the existing Bayesian literature, where decomposable scores are almost always used because the decomposability enables one to evaluate the posterior probability of a DAG by local calculations at each node~\citep{chickering2002learning}. 
  
To numerically evaluate the posterior distribution of our empirical Bayes model, in the same spirit of the minimal I-MAP MCMC of~\citet{agrawal2018minimal}, we approximate the posterior probability of an ordering by that of the best consistent DAG and then build a sampling algorithm on the order space. 
We show that, under some conditions on the edge weights, the chain will never get stuck at a sub-optimal local mode for exponentially many iterations in expectation, which partially explains why this order MCMC scheme may perform well in practice. 
Further, we propose a generalized iterative version of the top-down algorithm of~\citet{chen2019causal}. This algorithm is deterministic and quickly finds a likely ordering of the variables, which can be used as a warm start for our order MCMC sampler. 
When estimating edge inclusion probabilities, we tune our estimators via a conditional expectation calculation so that we can reduce the estimation variance caused by picking one single best DAG for each ordering.  
Lastly, though the non-decomposable score of our model cannot be evaluated locally, we are able to devise an implementation strategy that  makes the posterior evaluation for our model as efficient as that with a decomposable score. The key idea is to store the search paths of the forward-backward stepwise selection at each node, which can be reused in finding the best DAG consistent with a given ordering.  

\section{An empirical Bayes model for order-based structure learning}\label{sec:model}

\subsection{Notation and terminology}\label{subsec:note}
We set up the notation and terminology to be used throughout the paper. Let $G = (V, E)$ denote a DAG, where $V$ is a node set and $E \subset V \times V$ is a set of directed edges that form no cycle. Without loss of generality, for a $p$-node DAG, we assume $V = [p] = \{1, \dots, p\}$. 
For ease of notation, we write $ \{i \rightarrow j \} \in G$ to mean that $(i, j) \in E$, and use $ G \cup  \{i \rightarrow j \} $ (respectively $G \setminus  \{i \rightarrow j \} $) to denote the DAG obtained by adding (respectively removing) the edge $i \rightarrow j$. 
We use $|G|$ to denote then number of edges in $G$. 
We denote by $\bbS^p$ the set of all bijections from $[p]$ to $[p]$. 
An element $\sigma \in \bbS^p$ is said to be a topological ordering for a DAG $G$ if the following holds: for any indices $k < l$, the edge between the nodes $\sigma(k)$ and $\sigma(l)$ is  directed as $\sigma(k) \rightarrow \sigma(l)$, if it exists in $G$. 
Let $\sigma^{-1}$ denote the inverse function of $\sigma$, and for each node $j \in [p]$, let 
\begin{align}\label{eq:potential}
    P_j^{\sigma} = \{i\colon  \sigma^{-1}(i) < \sigma^{-1}(j)\}
\end{align}
denote the set of potential parents of node $j$ under the ordering $\sigma$, i.e., all nodes preceding $j$ in $\sigma$. 
Let $\cG_p$ be the collection of all $p$-node DAGs and $\cG_p^\sigma$ be the collection of all $p$-node DAGs consistent with topological ordering $\sigma$; that is,
$  \cG_p^\sigma = \{ G  \in \cG_p \colon  \{i \rightarrow j \} \in G  \text{ implies } \sigma^{-1}(i) < \sigma^{-1}(j) \}$. 
Given a node $j$, we use $\Pa_j(G)$ and $\Ch_j(G)$ to denote the set of its parent nodes and that of its child nodes, respectively, in the DAG $G$.  If the underlying DAG is clear from the context, we simply write $\Pa_j$ and $\Ch_j$.  
Finally, given a matrix $A \in \bbR^{a \times b} $, $j \in [b]$, $J \subseteq [b]$ and $I \subseteq [a]$, $A_j$ denotes the $j$-th column of $A$, $A_J$ denotes the submatrix of $A$ containing columns indexed by $J$, and $A_{I, j}$ denotes the subvector of $A_j$ with entries $\{A_{ij} \colon i \in I \}$. 
We use $|J|$ to denote the cardinality of the set $J$. 
 
\subsection{Model specification}\label{subsec:model}
Let $\sX = (\sX_1, \dots, \sX_p)$ denote a $p$-dimensional random vector, and denote by $X$ an $n \times p$ data matrix, each row of which is an independent copy of $\sX$. 
For each  $\sigma \in \bbS^p$  and $G \in \cG_p^\sigma$, consider the following structural equation model for the random vector $\sX$, 
\begin{equation}~\label{eq:ln.str.eq}
    \sX_j = B_{\Pa_j(G), j}^\T \sX_{\Pa_j(G)} + \se_j, \quad \se_j  \mid \omega \overset{\text{i.i.d}}{\sim}  N(0, \omega) \text{ for } j = 1, \dots, p, 
\end{equation}
where $\Pa_j(G) \subseteq P_j^\sigma$ for each $j$, and  $B$ is a $p \times p$ matrix. Entries of $B$ that are not involved in~\eqref{eq:ln.str.eq}   are set to zero. 
$B$ can be seen as the weighted adjacency matrix of the DAG $G$ such that  $\{i \rightarrow j\} \in G$ if  $|B_{ij}| > 0$.  

We use the following empirical prior on the parameter $(\sigma, G, B, \omega)$, where $\pi_0$ denotes the prior density function: 
\begin{align}
    B_{\Pa_j(G), j} \mid  G,  \omega \overset{\mathrm{ind}}{\sim} \;&  N_{|\Pa_j(G)|}\left(\hat{ B}_{\Pa_j(G), j},\frac{\omega}{\gamma}  (X_{\Pa_j(G)}^\T X_{\Pa_j(G) })^{-1}\right), \quad \forall \, j \in [p],  \label{prior:B}\\ 
    \pi_0(\omega \mid \sigma) \propto \;& \omega^{-\frac{\kappa}{2}-1}, \label{prior:omega} \\
    \pi_0(G, \sigma) \propto \;&  \left(p^{c_0} \right)^{-|G|} \ind_{ \{ \hat{G}_\sigma \} }(G), \label{prior:G}  
\end{align}
where $\hat{B}_{\Pa_j(G),j}$ is the least-squares estimator of $B_{\Pa_j(G),j}$, $c_0, \gamma, \kappa$ are hyperparamters of the prior, and $\hat{G}_\sigma$ in~\eqref{prior:G} is the best estimate for $G$ among $\cG_p^\sigma$; we will detail how to obtain $\hat{G}_\sigma$ later. 
This prior is doubly empirical. 
First, given $G$ and $\omega$, we use an empirical prior on $B_{\Pa_j(G), j}$ for each $j$ in~\eqref{prior:B}, where the conditional prior mean  depends on the data.   
Following~\citet{martin2017empirical} and~\citet{lee2019minimax}, when computing the posterior distribution, we raise the data likelihood to the power of $\alpha$, where $\alpha \in (0,1)$ is a constant, so that we can reduce the influence of the data that is inflated by the usage of the empirical prior.  
\citet{lee2019minimax} suggests setting $\alpha$ close to 1 to make the $\alpha$-likelihood behave similarly to the standard likelihood in finite sample scenarios.
Observe that the covariance in~\eqref{prior:B} is identical to that of Zellner's g-prior,   proportional to the inverse Fisher information matrix for $B_{\Pa_j(G),j}$~\citep{tadesse2021handbook}. 
An alternative approach to specifying the prior is to use the fractional Bayes factor~\citep{carvalho2009objective, castelletti2021bayesian}.  This yields a fractional posterior with the value of $\alpha$ determined automatically, but the resulting posterior is more difficult to calculate than the proposed posterior.  
Second, according to~\eqref{prior:G},  
the conditional prior distribution of $G$ given $\sigma$ is again empirical: it assigns  unit mass to some $\hat{G}_\sigma$ that  can be seen as the solution to a DAG selection problem given ordering $\sigma$. 
This implies that the marginal prior distribution of $G$ has support $\hat{\cG} = \{\hat{G}_\sigma \colon \sigma \in \bbS^p \}$. 
For moderately large $p$, searching the entire space $\cG_p$ is impossible, but the empirical prior~\eqref{prior:G}  reduces the size of the search space to that of the order space $\bbS^p$. Unfortunately,   $| \bbS^p | = p!$  is still super-exponential in $p$, making it challenging to devise an efficient MCMC sampler.

\begin{remark}
The use of the empirical prior~\eqref{prior:G} makes our approach very different from traditional Bayesian structure learning methods, where posterior inference is performed by averaging over all DAG models that satisfy certain sparsity constraints.  
The seminal order-based MCMC sampler of~\citet{friedman2003being} imposes a uniform conditional prior given $\sigma$ on all DAGs satisfying degree constraints in $\cG_p^\sigma$.  
But calculating the un-normalized marginal posterior probability of an ordering requires summation over all possible DAGs, which is infeasible unless  $p$ is small or the degree constraint is highly demanding. Further, the technique used in~\citet[Eq. (8)]{friedman2003being} to expedite this calculation is not applicable in our case since our score is not decomposable; see Remark~\ref{rmk:decomp}.  
Therefore, we prefer using the empirical prior~\eqref{prior:G} for its computational efficiency. 
A similar approach is taken in \citet{agrawal2018minimal}, which uses empirical conditional independence tests to construct a minimal independence map for each ordering and restricts the search space to the set of minimal independence maps.   
Henceforth, we will always use DAG selection to refer to the problem of identifying the best DAG with given ordering.    
\end{remark}

Let $\post$ denote the posterior distribution given the observed data matrix $X$. By a standard normal-inverse-gamma calculation that integrates out the parameters $B$ and $\omega$, we get 
\begin{equation}\label{def.posterior}
    \post(G, \sigma) \propto  e^{ \phi(G)} \ind_{ \{ \hat{G}_\sigma \} }(G), 
\end{equation}
where  $\phi(G)$ is called the score of $G$ and is given by
\begin{equation}~\label{eq:post_score}
\begin{aligned}
&  \phi(G)   =  -|G| c_0 \log p  - \frac{|G|}{2} \log [ (1 + \alpha/\gamma) ] -  \frac{\alpha p n + \kappa}{2} \log  \left(\sum_{j=1}^p \SSE_j(G) \right),  \\
&  \text{where } \SSE_j(G)  = X_j^\T \oproj_{\Pa_j(G)} X_j, \quad 
  \oproj_S = I -  X_S(X_S^\T X_S)^{-1}X_S.
\end{aligned}
\end{equation}
We will also sometimes refer to $\phi(G)$ as the posterior score. 
For a detailed derivation of \eqref{def.posterior}, see Section~\ref{subsec:post} in the supplementary material. The marginal posterior probability of an ordering $\sigma$ and that of a DAG $G$ are
\begin{equation}\label{eq:marginal.post}
    \post(\sigma) \propto e^{\phi(\hat{G}_\sigma )}, \quad \post(G) \propto e^{ \phi(G ) }  \sum_{\sigma \in \bbS^p} \ind_{\{ \hat{G}_\sigma \}} (G).
\end{equation} 
For our model, $ \post(G)$ is not exactly proportional to the exponentiation of the score of $G$ due to the factor $\sum_{\sigma \in \bbS^p} \ind_{\{ \hat{G}_\sigma \}} (G)$, and in our high-dimensional analysis we will show this term is negligible under mild assumptions. 

In the rest of this work, we consider the following choice for $\hat{G}_\sigma$, 
\begin{equation}\label{def.map}
    \hat{G}_{\sigma}^{\MAP}(\din ) = \argmax_{G \in \cG_p^\sigma(\din)} \phi(G),   \quad \forall \, \sigma \in \bbS^p, 
\end{equation}
where $\cG_{p}^{\sigma}(\din) = \{G \in \cG_{p}^\sigma : |\Pa_j(G)| \leq \din \text{ for all } j\in [p]\}$ is the collection of all $p$-node DAGs with maximum in-degree bounded by $\din$.  
For our high-dimensional analysis, we will impose the condition $\din \log p = o(n)$, 
which is commonly used in the literature on high-dimensional DAG selection~\citep{cao2019posterior, lee2019minimax}. 
The superscript MAP indicates that $\hat{G}_{\sigma}^{\MAP} $ is the DAG with the largest posterior score among $\cG^\sigma_p(\din)$, i.e., the maximum a posteriori estimate. 
 
\begin{remark}\label{rmk:decomp}
In most existing methods for Bayesian structure learning, the posterior score of a DAG $G$ takes a decomposable form in the sense that it can be written as the sum of $p$ terms, where the $i$-th term only involves node $i$ and its parent set and thus can be evaluated locally. 
But our posterior score given in~\eqref{eq:post_score} is not decomposable due to the equal variance assumption used in the prior: integrating out $\omega$ results in the logarithm of the sum of $p$ residual sum of squares  ($\SSE$) terms  in~\eqref{eq:post_score}.   
This non-decomposable score is able to discriminate between Markov equivalent DAGs, and as we will prove shortly, given sufficiently large sample size, the posterior distribution of our model concentrates on only the unique true DAG.   
\end{remark}

\subsection{Strong model selection consistency}\label{subsec:consistency}
We consider a high-dimensional setting where $n$ tends to infinity and both $p = p(n)$ and $\din = \din(n)$ may grow with $n$.  Strong model selection consistency means that the posterior probability of the true model converges to $1$ in probability with respect to the true probability measure from which the data is generated. This is often regarded as one of the most important theoretical guarantees for a high-dimensional Bayesian model selection procedure. In the DAG literature, it was proven for DAG selection with known ordering~\citep{cao2019posterior, lee2019minimax} and structure learning up to equivalence class~\citep{zhou2021complexity}.  
To the best of our knowledge, there is no strong selection consistency result on Bayesian structure learning under an identifiability condition. 

Though the equal variance assumption was used in the prior specification, for our consistency analysis, we consider a more general setting. 
Assume the data is generated according to the structural equation model 
\begin{equation}\label{eq:true.sem}
    \sX_j = (B^*_{\Pa_j(G^*), j})^\T \sX_{\Pa_j(G^*)} + \se_j, \quad \se_j  \sim  N(0, \omega^*_j) \text{ for } j = 1, \dots, p, 
\end{equation}
where $G^*, B^*, \{\omega_j^*\}_{j=1}^p$ denote the true parameter values, and we assume $B^*_{ij} \neq 0$ if and only if $\{i \rightarrow j \} \in G$. 
Define $\Omega^* = \diag(\omega_1^*, \dots, \omega_p^*)$. 
Let $[\sigma^*]$ denote the set of all  orderings consistent with $G^*$, where $\sigma^*$ is some element in $[\sigma^*]$ interpreted as the true ordering. 
Thus, $G^* \in \cG^\sigma_p$ if and only if $\sigma \in [\sigma^*]$.   
Let $\bbP^*$ denote the probability measure corresponding to the structural equation model~\eqref{eq:true.sem}. 
Observe that  the covariance matrix of the random vector $\sX$ can be written as $\Sigma^* =  \Sigma(B^*, \Omega^*)$, where  
\begin{align}\label{eq:mcd} 
\Sigma(B, \Omega) =  (I_p - B^\T)^{-1} \Omega (I_p - B)^{-1}. 
\end{align}
This is known as the modified Cholesky decomposition.  
This decomposition of $\Sigma^*$ is not unique, as we explain in the following remark.   

\begin{remark}\label{remark:minimap}
For each ordering $\sigma \in \bbS^p$, there exists a unique tuple $(B_\sigma^*, \Omega_\sigma^*)$ such that  $B_\sigma^*$ is the weighted adjacency matrix of a DAG in $\cG^\sigma_p$, $\Omega_\sigma^*$ is a diagonal matrix with all diagonal entries being strictly positive, and $\Sigma^* = \Sigma(B_\sigma^*, \Omega_\sigma^*)$. 
Write $\Omega^*_\sigma = \diag(\omega_1^\sigma, \dots, \omega_p^\sigma)$ and use $G_\sigma^*$ to denote the DAG with edge set $E_\sigma^* = \{(i,j): |(B_{\sigma}^*)_{ij}| > 0\}$ and define 
\begin{equation}\label{eq:def.dstar}
 d^* =  \max_{\sigma \in \bbS^p} \max_{j \in [p]} | \Pa_j(G^*_\sigma)|. 
\end{equation} 
\end{remark}

To prove that the empirical Bayes model specified in Section~\ref{sec:model} has strong model selection consistency in high-dimensional settings, we make the following two  assumptions. 

\renewcommand*{\theassumption}{\Alph{assumption}}

\begin{assumption}[Minimum-trace  condition]\label{A:omega}
There exists a universal constant $\eta \in (0, \infty)$ such that $  \min_{\sigma \notin [\sigma^*]}   \trace(\Omega_\sigma^*) / \trace(\Omega^*)   > 1 + \eta^{-1},$ where $\trace$ denotes the trace.  
\end{assumption}

\begin{assumption}[Consistency of DAG selection given true ordering]\label{A:freq}
The estimator $\hat{G}_{\sigma}$ satisfies $\bbP^*( \cap_{\sigma \in [\sigma^*]} \{ \hat{G}_{\sigma} = G^*  \}  ) \geq 1 - \zeta(p)$ for some $\zeta(p) \rightarrow 0$. 
\end{assumption}

The first assumption includes the equal variance assumption as a special case. 
To see this, suppose that $\Omega^* = \diag(\omega^*, \dots, \omega^*)$ for some  $\omega^* > 0$. 
Since the determinant of $\Sigma^*$ satisfies $\mathrm{det}(\Sigma^*) =  (\omega^*)^p = \prod_{j=1}^p \omega_j^\sigma$ for all $\sigma \in \bbS^p$, we have $p \omega^* \leq \sum_{j=1}^p \omega_j^\sigma$ by the inequality of arithmetic and geometric means. That is,  the true ordering $\sigma^*$ satisfies $\trace (\Omega_{\sigma^*}^*) = \min_{\sigma} \trace (\Omega_\sigma^*)$. Hence, there always exists some $\eta(n)$  such that $\min_{\sigma \notin [\sigma^*]}   \trace(\Omega_\sigma^*) / \trace(\Omega^*)  > 1 + \eta(n)^{-1}.$ 
Assumption~\ref{A:omega} just requires that $\eta(n)^{-1}$ can be bounded away from zero so that we can replace it with some universal constant $\eta$. 
Under the equal variance assumption, we can rewrite Assumption~\ref{A:omega} as follows, which  has been used in~\citet{van2013ell} and is known as the omega-min condition. 

\setcounter{assumption}{0}
\renewcommand*{\theassumption}{\Alph{assumption}'}
\begin{assumption}[Assumption A with equal variances] \label{A1} 
Suppose $\Omega^* \!= \!\diag(\omega^*, \dots, \omega^*)$, where $\omega^* > 0$ is the  error variance shared by all node variables.
There exists a universal constant $\eta \in (0, \infty)$ such that $\min_{\sigma \notin [\sigma^*]}   p^{-1}  \sum_{j = 1}^p  ( \omega_j^{\sigma} /  \omega^* )   > 1 + \eta^{-1}$. 
\end{assumption}
\renewcommand*{\theassumption}{\Alph{assumption}}
\setcounter{assumption}{2}

\begin{remark}\label{min.trace}
Recall our score function  given in~\eqref{eq:post_score} and that $\SSE_j / n$ is an estimate of the error variance $\omega^\sigma_j$.  So our method essentially aims to select the DAG that provides the tightest fit to the data. 
More precisely, the score~\eqref{eq:post_score} aims to learn the best DAG in $\cG_{p}^\sigma$ where $\sigma$ minimizes $\trace (\Omega_\sigma^*)$, the sum of error variances; such a DAG is called the minimum-trace DAG. 
Our strong consistency result, which only requires Assumption~\ref{A:omega} instead of Assumption~\ref{A1}, confirms that though the equal variance assumption was used to derive~\eqref{eq:post_score}, our method has the theoretical guarantee under a more general setting.  
We refer readers to~\citet{aragam2019globally} for a general theory on structure learning using  minimum-trace DAGs. 
\end{remark}

\begin{remark}\label{rmk:conjecture}
An interesting open question is, without  the equal variance assumption, what choices of $(B^*, \Omega^*)$ can satisfy the minimum-trace condition so that the true model is identifiable.  
We conjecture that if for some  $\sigma^* \in \bbS^p$, we have  $\omega^{\sigma^*}_{\sigma^*(1)}\leq \omega^{\sigma^*}_{\sigma^*(2)} \leq \dots \leq \omega^{\sigma^*}_{\sigma^*(p)}$, then  $\trace (\Omega_{\sigma^*}^*) = \min_{\sigma} \trace (\Omega_\sigma^*)$. 
This weakly increasing variance condition falls under the broader identifiability conditions presented in~\cite{park2020identifiability}, which extend beyond the equal variance assumption.
We have conducted extensive numerical experiments, which suggest that the conjecture is likely to be true, but a proof for every $p \geq 2$  seems highly challenging. Simulation studies are presented in Section C.2 of the supplement.  
\end{remark}

The second assumption says that when we are given an ordering $\sigma \in [\sigma^*]$, the pre-specified DAG selection procedure is able to identify the true DAG with high probability. 
This is a very mild assumption since if the ordering is known, one can often apply an existing consistent algorithm for high-dimensional variable selection to select the parent set of node $j$ for each $j \in [p]$ separately~\citep{ben2011high, yu2017learning, shojaie2010penalized, cao2019posterior, lee2019minimax}. 
We do not need any assumption on the behavior of $\hat{G}_\sigma$ when $\sigma \notin [\sigma^*]$.  
Among many possible DAG selection methods, we use the  estimator defined in~\eqref{def.map} for the following reason. If some other DAG selection method is used, for any $\sigma \notin [\sigma^*]$, there is no guarantee that $\hat{G}_\sigma$ has a sufficiently large posterior score compared with other DAGs in $\cG_p^\sigma$, and the resulting posterior distribution on the order space $\bbS^p$ could be very irregular and contain more sub-optimal local modes. 
However, no existing consistency result can be readily applied to the estimator~\eqref{def.map} due to the non-decomposable posterior score it uses. 
We prove in the following proposition that it does have strong consistency for DAG selection, and it satisfies Assumption~\ref{A:freq} with $\zeta(p) = 4p^{-1}$.  
All the three conditions assumed in Proposition~\ref{prop:freq} are commonly used in the literature: (C\ref{A:eigen}) is known as the restricted eigenvalue condition, (C\ref{A:prior}) assumes prior parameters are properly chosen, and (C\ref{A:beta-min}) is often called the $\beta$-min condition~\citep{lee2019minimax}. 
Except universal constants, all parameters are allowed to depend on $n$. 

\begin{proposition}\label{prop:freq} 
Suppose $\max_j |\Pa_j(G^*)| \leq \din$, and the following conditions hold. 
\begin{enumerate}[({C}1)]
    \item There exist $\vmin, \vmax > 0$ and a universal constant $\delta > 0$ such that
    \begin{align*}
        \frac{\vmin}{(1-\delta)^2} \leq \lmin(\Sigma^*) \leq \lmax(\Sigma^*)\leq \frac{\vmax}{(1+\delta)^2}, 
    \end{align*}
    where $\lmin, \lmax$ are the smallest and largest eigenvalues, respectively.  \label{A:eigen}
    \item The sparsity parameter $\din$ satisfies $\din \log p  = o(n)$, and prior parameters satisfy that $\kappa \leq n p, 0 \leq \alpha/\gamma \leq p^2 -1,$ $c_0 > \rho (\alpha+1) \, \max_{i\neq j} (\omega_j^*/ \omega_i^*)$, and $\rho > 4\din + 6$.  \label{A:prior}
    \item For the true weighted adjacency matrix $B^*$, \label{A:beta-min}
\begin{align*}
    \betamin =  \min \{|(B^*)_{ij}|^2 : (B^*)_{ij} \neq 0 \} \geq 16 c_0 \frac{\vmax^2 \log p}{\alpha \vmin^2 n}. 
\end{align*} 
\end{enumerate}
Consider the posterior score given in~\eqref{eq:post_score} and the  estimator  defined in~\eqref{def.map}.
For sufficiently large $n$, with probability at least $1 - 4p^{-1}$, all the following three events happen. 
\begin{enumerate}[(i)]
    \item For any $\sigma \in [\sigma^*]$, $G \in \cG_p^\sigma(2 \din)$,  $j \in [p]$ such that $ \Pa_j(G^*) \subset \Pa_j(G)$, there exists some $G' \in \cG^\sigma_p$ such that $\phi(G') > \phi(G)$ and $G' = G \setminus \{i \rightarrow j\}$ for some $i \in [p]$. 
    \item For any $\sigma \in [\sigma^*]$, $G \in \cG_p^\sigma(2\din)$, $j \in [p]$ such that $\Pa_j(G^*)  \not\subseteq \Pa_j(G)$, there exists some $G' \in \cG^\sigma_p$ such that $\phi(G') > \phi(G)$ and $G' = G \cup \{i \rightarrow j\}$ for some $i \in [p]$. 
    \item For any $\sigma \in [\sigma^*]$,  $\hat{G}_{\sigma}^\MAP = G^*$. 
\end{enumerate}
\end{proposition}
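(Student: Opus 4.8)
Fix a true ordering $\sigma\in[\sigma^*]$. The crucial structural observation is that, since $\sigma$ is consistent with $G^*$, every potential parent of node $j$ is a non-descendant of $j$ in $G^*$; hence the $n$-vector of structural residuals at $j$, namely $X_j-X_{\Pa_j(G^*)}B^*_{\Pa_j(G^*),j}$, has i.i.d.\ $N(0,\omega^*)$ entries and is independent of $X_{P_j^\sigma}$. This lets one reuse, node by node, the Gaussian-regression estimates from the decomposable DAG-selection literature \citep{cao2019posterior,lee2019minimax}. For a single-edge move at node $j$ the penalty and $\tfrac12\log(1+\alpha/\gamma)$ parts of $\phi$ in \eqref{eq:post_score} change by exactly $\pm(c_0\log p+\tfrac12\log(1+\alpha/\gamma))$, while only the $j$-th summand of $\sum_k\SSE_k$ changes, by some $\Delta\ge0$; using $\log(1+x)\le x$, $-\log(1-x)\ge x$ and $\kappa\le np$, the score difference is controlled once we have a two-sided uniform bound on $\sum_k\SSE_k$ and a one-sided bound on $\Delta$.

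\noindent\textbf{The uniform SSE sandwich.} The first and most load-bearing step is to show, on an event of probability $\ge1-2p^{-1}$, that $c_1np\le\sum_k\SSE_k(G)\le c_2np$ simultaneously over every DAG $G$ with maximum in-degree at most $3\din$, for constants $c_1,c_2$ depending only on $\vmin,\vmax,\delta$. The upper bound is immediate from $\SSE_k(G)\le\|X_k\|^2$ and concentration of $\|X_k\|^2/n$ around $\Sigma^*_{kk}$. For the lower bound, $\SSE_k(G)\ge X_k^\T\oproj_{\Pa_k(G)\cup\Pa_k(G^*)}X_k$, whose expectation is the conditional variance of $\sX_k$ given at most $3\din$ coordinates, bounded below by $\lmin(\Sigma^*)$; one uniformizes over the $O(p\binom{p}{3\din})=e^{O(\din\log p)}$ node/parent-set pairs using the restricted-eigenvalue condition (C\ref{A:eigen}) and $\din\log p=o(n)$. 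This is what converts ``all DAGs and orderings'' into a feasible union bound and makes the non-decomposable denominator behave like a deterministic $\Theta(np)$ quantity, decoupling the per-node effects of edge moves.

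\noindent\textbf{Events (i) and (ii).} For (i), if $\Pa_j(G^*)\subsetneq\Pa_j(G)$ and $i\in\Pa_j(G)\setminus\Pa_j(G^*)$, then $\Pa_j(G)\setminus\{i\}\supseteq\Pa_j(G^*)$, so monotonicity of RSS gives $\Delta\le\SSE_j(\Pa_j(G^*))-\SSE_j(\Pa_j(G))$, which equals the squared norm of the projection of the pure noise onto a subspace of dimension at most $2\din$; a $\chi^2$ tail union-bounded over $j$ and superfluous sets (again $e^{O(\din\log p)}$ of them) bounds this by $C\vmax\din\log p$ with probability $\ge1-p^{-1}$, and combining with the SSE lower bound gives $\phi(G\setminus\{i\to j\})-\phi(G)\ge\log p\,(c_0-C(\alpha+1)\vmax\din/c_1)>0$ under (C\ref{A:prior}). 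For (ii), let $T=\Pa_j(G^*)\setminus\Pa_j(G)\ne\emptyset$; the joint RSS reduction obtained by adding all of $T$ is, after controlling the noise cross-term via the independence of the structural residual from $X_{P_j^\sigma}$ and invoking (C\ref{A:beta-min}) so that the signal dominates, at least $c\,n\vmin\|B^*_{T,j}\|^2\ge c\,n\vmin|T|\betamin$; and an elementary linear-algebra bound (if $J=v^\T M^{-1}v$ is the joint reduction with $M$ the Gram matrix of $X_T$ residualized on $\Pa_j(G)$ and $v$ the corresponding inner products, then $\max_i v_i^2/M_{ii}\ge J/(|T|\,\kappa(M))$) shows the best \emph{single} addition captures at least a $1/(|T|\,\kappa)$ fraction of $J$, where $\kappa\le2\vmax/\vmin$ by (C\ref{A:eigen}). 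The factors of $|T|$ cancel, leaving a reduction $\gtrsim n\vmin^2\betamin/\vmax\ge16c_0\vmax\log p/\alpha$ by (C\ref{A:beta-min}), so with $-\log(1-x)\ge x$ and the SSE upper bound, $\phi(G\cup\{i\to j\})-\phi(G)>0$ for the maximizing $i$ (the constant $16$ in (C\ref{A:beta-min}) and $\log(1+\alpha/\gamma)\le2\log p$ supply the margin).

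\noindent\textbf{Event (iii) and the main obstacle.} Since $\max_j|\Pa_j(G^*)|\le\din$, we have $G^*\in\cG_p^\sigma(\din)$, so it suffices to show $\phi(\hat G_\sigma^{\MAP})<\phi(G^*)$ whenever $\hat G_\sigma^{\MAP}\ne G^*$, contradicting \eqref{def.map}. Starting from $\hat G_\sigma^{\MAP}$, repeatedly apply (i) to delete an edge at an over-fitting node or, failing that, (ii) to add an edge at an under-fitting node; $\phi$ strictly increases at each step. The iterates stay in $\cG_p^\sigma(2\din)$: deletions only lower in-degrees, while additions only ever install missing true parents, so a node never accumulates more than $\din$ false parents and hence an under-fitting node has in-degree at most $2\din-1$. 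As $\cG_p^\sigma(2\din)$ is finite, the process terminates at a DAG that neither over- nor under-fits at any node, which forces it to equal $G^*$; thus $\phi(\hat G_\sigma^{\MAP})<\phi(G^*)$ unless no move is possible, i.e.\ unless $\hat G_\sigma^{\MAP}=G^*$. Intersecting the three events yields the claimed probability $\ge1-4p^{-1}$ (the last $p^{-1}$ absorbing the Gaussian tail bounds used above). The genuine difficulty throughout is the non-decomposability of \eqref{eq:post_score}: a local move at $j$ perturbs the score through the \emph{global} factor $\log(\sum_k\SSE_k)$. The resolution --- sandwiching that sum between deterministic $\Theta(np)$ bounds uniformly over all low-in-degree DAGs and then linearizing the logarithm --- is what makes the classical per-node analysis applicable, and forcing the resulting constants to line up against the prior parameters in (C\ref{A:prior}) and the signal strength in (C\ref{A:beta-min}) is the most delicate part of the bookkeeping.
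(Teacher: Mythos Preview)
Your approach mirrors the paper's: both sandwich $\sum_k\SSE_k(G)$ between $\Theta(np)$ constants uniformly over low-in-degree DAGs (the paper via concentration events $\cA,\cB$), linearize the logarithm, and reduce to per-node bounds on the RSS change under a single-edge move. For (ii) your linear-algebra inequality $\max_i v_i^2/M_{ii}\ge J/(|T|\,\kappa(M))$ plays exactly the role of the paper's appeal to a lemma from \citet{zhou2021complexity} bounding the best single-addition signal reduction by $\|B^*_{S^*\setminus S}\|^2/|S^*\setminus S|\cdot n\vmin^2/\vmax$, followed by a triangle-inequality control of the noise cross-term. For (iii) your iterative-improvement argument is a valid alternative to the paper's two-step device (add \emph{all} missing true parents at node $j$ at once, landing in $\cG_p^\sigma(2\din)$, then delete all redundant parents at $j$, returning to $\cG_p^\sigma(\din)$ with strictly larger score).

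There is one place where your constants do not close under (C\ref{A:prior}) as stated. In (i) you bound the single-edge increase $\Delta=\SSE_j(G')-\SSE_j(G)$ by the full removal $\SSE_j(\Pa_j(G^*))-\SSE_j(\Pa_j(G))$, a noise projection of rank up to $2\din$; union-bounding a $\chi^2_{2\din}$ tail over $e^{O(\din\log p)}$ sets forces $\Delta\le C\omega^*\din\log p$ with $C$ well above $4$, and the resulting requirement $c_0> C(\alpha+1)\din$ (even taking $c_1\asymp\omega^*$) is \emph{not} implied by $c_0>(\alpha+1)\rho$ with $\rho>4\din+6$. The paper avoids this by bounding $\Delta$ directly: since $\Pa_j(G^*)\subset\Pa_j(G')\subset\Pa_j(G)$, the increment is exactly a rank-one noise projection $\epsilon_j^\T(\proj_{\Pa_j(G)}-\proj_{\Pa_j(G')})\epsilon_j$, and the dedicated event $\cC=\{z_j^\T(\proj_{S\cup\{k\}}-\proj_S)z_j\le\rho\log p\ \text{for all admissible }j,S,k\}$ gives $\Delta\le\omega^*\rho\log p$. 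Combined with $\sum_i\SSE_i(G)\ge np\omega^*/2$ this yields precisely $e^{\phi(G)-\phi(G')}\le p^{(\alpha+1)\rho-c_0}<1$, which is where the specific form of (C\ref{A:prior}) comes from. Your monotonicity detour is what costs you the margin; drop it and bound the rank-one projection directly.
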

\begin{proof}
See Section~\ref{proof:freq} in the supplementary material. 
\end{proof}


\begin{remark}\label{rmk:stepwise1}
For computational efficiency, to estimate $\hat{G}_\sigma^\MAP$, one may use a forward-backward stepwise selection to find $\Pa_j$ for each $j$ separately. 
This is outlined in Algorithm~\ref{alg:nodewise_step} in Section~\ref{subsec:stepwise} of the supplementary material. 
Since the posterior score is not decomposable, the stepwise selection at node $j$ depends on the values of $\{\SSE_i \colon i \neq j \}$. A simple solution is to  estimate $\SSE_i$ by  $X_i^\T X_i$ for each $i \neq j$. Then, parts~(i) and~(ii) of Proposition~\ref{prop:freq} imply that this procedure is consistent as long as for each $j$,  $|\Pa_j|$ is bounded by $\din$ at the end of the forward phase in Algorithm~\ref{alg:nodewise_step}. As shown in~\citet{an2008stepwise} and~\citet{zhou2010thresholded}, this condition on the output of forward selection can often be satisfied, with high probability, by choosing some $\din = O(\max_j |\Pa_j(G^*)|)$; i.e., $\din$ has the same order as the maximum in-degree of $G^*$. Actually, Proposition~\ref{prop:freq} implies that the following procedure is also consistent: starting from an arbitrary DAG $G$ with maximum in-degree bounded by $\din$, 
one performs stepwise selection at each node $j$ by setting 
$\SSE_i = \SSE_i(G)$ for each $i \neq j$. 
\end{remark}

\begin{remark}\label{rmk:stepwise2}
An alternative approach to performing forward-backward DAG selection with given ordering is to consider all the $p$ nodes jointly; see Algorithm~\ref{alg:dagwise} in Section~\ref{subsubsec:effective} of the supplementary material. 
In the forward phase, we add one best edge consistent with the given ordering in each iteration, while in the backward phase, we remove one edge in each iteration.  
Proposition~\ref{prop:freq} implies that this algorithm is also consistent for $\sigma \in [\sigma^*]$, provided that the maximum in-degree of any DAG on the search path is bounded by $\din$. 
\end{remark}     
  
The main result of this section is given in the following theorem. 

\begin{theorem}[Strong selection consistency]\label{thm.consistency}
Suppose Assumption~\ref{A:omega},~\ref{A:freq} hold, and assume that $d^* \leq \din$ and $\din \log p = o(n)$. 
Then $\post(G^*)$ converges in probability to 1 with respect to $\bbP^*$, where $\post$ is as given in~\eqref{eq:marginal.post}.   
\end{theorem}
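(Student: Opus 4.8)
The plan is to transfer the problem to the order space and then show that the posterior mass outside $[\sigma^*]$ is negligible. By~\eqref{eq:marginal.post}, $\post(\sigma)\propto e^{\phi(\hat{G}_\sigma)}$ and $\post(G^*)=\sum_{\sigma:\hat{G}_\sigma=G^*}\post(\sigma)$, so $1-\post(G^*)=\sum_{\sigma:\hat{G}_\sigma\neq G^*}\post(\sigma)$. Assumption~\ref{A:freq} supplies an event $\cA$ with $\bbP^*(\cA)\geq 1-\zeta(p)$ on which $\hat{G}_\sigma=G^*$ for every $\sigma\in[\sigma^*]$; on $\cA$ we have $\{\sigma:\hat{G}_\sigma\neq G^*\}\subseteq\bbS^p\setminus[\sigma^*]$ and $\sum_{\sigma}e^{\phi(\hat{G}_\sigma)}\geq\sum_{\sigma\in[\sigma^*]}e^{\phi(\hat{G}_\sigma)}=|[\sigma^*]|\,e^{\phi(G^*)}\geq e^{\phi(G^*)}$, whence
\[
1-\post(G^*)\ \leq\ \sum_{\sigma\notin[\sigma^*]}e^{\phi(\hat{G}_\sigma)-\phi(G^*)}\ \leq\ p!\cdot\max_{\sigma\notin[\sigma^*]}e^{\phi(\hat{G}_\sigma)-\phi(G^*)}.
\]
Since $\log p\leq\din\log p=o(n)$, it suffices to exhibit a constant $c>0$ with $\phi(\hat{G}_\sigma)-\phi(G^*)\leq -c\,pn$ uniformly over $\sigma\notin[\sigma^*]$ on an event of $\bbP^*$-probability tending to one; the right-hand side above is then at most $e^{p\log p-cpn}\to0$.

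\textbf{Bounding the score gap.} I would estimate $\phi(\hat{G}_\sigma)-\phi(G^*)$ through the $\SSE$ term of~\eqref{eq:post_score}, which is the only non-negligible piece. Discarding the two non-positive terms in $\phi(\hat{G}_\sigma)$ and keeping the exact expression~\eqref{eq:post_score} for $\phi(G^*)$,
\[
\phi(\hat{G}_\sigma)-\phi(G^*)\ \leq\ |G^*|c_0\log p+\tfrac{|G^*|}{2}\log(1+\alpha/\gamma)-\tfrac{\alpha pn+\kappa}{2}\,\log\frac{\sum_{j}\SSE_j(\hat{G}_\sigma)}{\sum_{j}\SSE_j(G^*)}.
\]
The first two terms are $o(pn)$, since $|G^*|\leq p\,d^*\leq p\din$ and $\din\log p=o(n)$ (and $\alpha/\gamma$ is at most polynomial in $p$). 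For the denominator of the log-ratio, $X_j=X_{\Pa_j(G^*)}B^*_{\Pa_j(G^*),j}+\EPS_j$ by~\eqref{eq:ln.str.eq}, where $\EPS$ is the $n\times p$ matrix of i.i.d.\ $N(0,\omega^*)$ errors, so $\SSE_j(G^*)=\|\oproj_{\Pa_j(G^*)}\EPS_j\|^2\leq\|\EPS_j\|^2$ and hence $\sum_j\SSE_j(G^*)\leq\|\EPS\|_F^2\leq(1+o(1))\,np\,\omega^*$ with probability tending to one.

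\textbf{The residual-sum-of-squares estimate and conclusion.} Each estimator $\hat{G}_\sigma$ considered — the MAP of~\eqref{def.map} and the stepwise variants of Remarks~\ref{rmk:stepwise1}--\ref{rmk:stepwise2} — lies in $\cG_p^\sigma(\din)$, so $\Pa_j(\hat{G}_\sigma)\subseteq P_j^\sigma$ has at most $\din$ elements. A uniform ``no-overfitting'' bound over all subsets of size $\leq\din$ — which holds because $\din\log p=o(n)$, via the concentration estimates behind Proposition~\ref{prop:freq} — gives $\SSE_j(\hat{G}_\sigma)/n\geq(1-o(1))\,\mathrm{Var}(\sX_j\mid\sX_{\Pa_j(\hat{G}_\sigma)})\geq(1-o(1))\,\omega_j^\sigma$ simultaneously over all $j$ and all $\sigma$. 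Summing over $j$ and invoking the omega-min condition (Assumption~\ref{A:omega}), $\sum_j\SSE_j(\hat{G}_\sigma)/n\geq(1-o(1))\sum_j\omega_j^\sigma\geq(1-o(1))(1+\eta^{-1})\,p\,\omega^*$ for every $\sigma\notin[\sigma^*]$, so the log-ratio in the last display is at least $\tfrac12\log(1+\eta^{-1})$ for $n$ large. Plugging this in and using $\alpha pn+\kappa\geq\alpha pn$,
\[
\phi(\hat{G}_\sigma)-\phi(G^*)\ \leq\ o(pn)-\tfrac{\alpha}{4}\log(1+\eta^{-1})\,pn\ \leq\ -\tfrac{\alpha}{8}\log(1+\eta^{-1})\,pn
\]
uniformly over $\sigma\notin[\sigma^*]$ for $n$ large, i.e.\ the required estimate with $c=\tfrac{\alpha}{8}\log(1+\eta^{-1})>0$. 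Combining with the first display and $\bbP^*(\cA)\to1$ yields $\post(G^*)\to1$ in $\bbP^*$-probability.

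\textbf{Anticipated main obstacle.} The hardest step should be the uniform lower bound on $\sum_j\SSE_j(\hat{G}_\sigma)$ for $\sigma\notin[\sigma^*]$: the selected parent sets there are data-dependent and are not controlled by Assumption~\ref{A:freq}, so one must rule out overfitting by \emph{any} subset of size $\leq\din$ in \emph{every} one of the $p$ node regressions, simultaneously over all $p!$ orderings. This is where a bounded-spectrum/restricted-eigenvalue hypothesis on $\Sigma^*$ and the scaling $\din\log p=o(n)$ genuinely enter (as in Proposition~\ref{prop:freq}); the rest of the argument is bookkeeping on~\eqref{eq:post_score}.
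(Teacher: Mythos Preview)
Your proposal follows the same route as the paper's proof: reduce to the order space via Assumption~\ref{A:freq}, bound $e^{\phi(\hat{G}_\sigma)-\phi(G^*)}$ for $\sigma\notin[\sigma^*]$ through the $\SSE$ ratio and the omega-min condition, and absorb the $p!$ union bound using $\din\log p=o(n)$. The arithmetic on~\eqref{eq:post_score} is handled correctly.

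One point deserves comment. You flag the uniform no-overfitting bound $\SSE_j(\hat{G}_\sigma)/n\geq(1-o(1))\,\omega_j^\sigma$ as the crux, and assert that it requires a restricted-eigenvalue hypothesis on $\Sigma^*$ ``as in Proposition~\ref{prop:freq}''. The theorem does not assume any eigenvalue condition, and the paper's proof does not use one here --- only $d^*\leq\din$ and $\din\log p=o(n)$ are needed. The device is to \emph{enlarge} the conditioning set so that the signal is exactly eliminated:
\[
\SSE_j(\hat{G}_\sigma)\ \geq\ X_j^\T\oproj_{\Pa_j(\hat{G}_\sigma)\cup\Pa_j(G^*_\sigma)}X_j\ =\ (\epsilon_j^\sigma)^\T\oproj_{\Pa_j(\hat{G}_\sigma)\cup\Pa_j(G^*_\sigma)}\epsilon_j^\sigma,
\]
which is the squared norm of a projection of pure $N_n(0,\omega_j^\sigma I)$ noise onto a subspace of dimension at least $n-2\din$. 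Chi-squared tail bounds and a union bound over the at most $p^{\din+1}\cdot p^{2\din+1}$ pairs $(z_j^\sigma,S)$ then give the claim (this is exactly the event $\cD$ in the supplement, controlled by Lemma~\ref{L:event_theorem}). Your intermediate step through the population conditional variance $\mathrm{Var}(\sX_j\mid\sX_{\Pa_j(\hat{G}_\sigma)})$ would instead require uniform control of sample Gram inverses, which is why you reach for an eigenvalue assumption; that detour is avoidable once you enlarge the set to kill the signal.
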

\begin{proof}
See Section~\ref{proof:consistency} in the supplementary material.
\end{proof}
\begin{remark}
The proof can be further extended to cases where the errors $\se_j$, $j=1, \dots, p$ in~\eqref{eq:true.sem} 
follow a sub-Gaussian distribution. As any bounded random variable is sub-Gaussian, this relaxation covers scenarios where some variables are normally distributed and others are discrete and bounded~\citep{lauritzen1992propagation}. 
The proof is given in Section~\ref{proof:subgaussian} in the supplementary material. 
Some inequalities cannot be obtained as sharply as in the Gaussian case, because zero correlation does not imply independence in the sub-Gaussian case. 
\end{remark}
Consider the marginal posterior distribution on the order space $\bbS^p$. The following corollary shows  that the posterior mass concentrates on the set of orderings consistent with $G^*$, and the posterior probabilities of all other orderings vanish.  

\begin{corollary}\label{cor.consistency}
Under the setting of Theorem~\ref{thm.consistency}, $ \post([\sigma^*])$ converges in probability to 1 with respect to $\bbP^*$. 
\end{corollary}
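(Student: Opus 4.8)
The plan is to deduce the corollary directly from Theorem~\ref{thm.consistency} by showing that, on the high-probability event supplied by Assumption~\ref{A:freq}, the order-level mass $\post([\sigma^*])$ and the DAG-level mass $\post(G^*)$ are in fact \emph{equal}. Write $N_G = \sum_{\sigma \in \bbS^p} \ind_{\{\hat{G}_\sigma\}}(G)$ for the number of orderings whose best DAG is $G$, and recall from~\eqref{def.posterior}--\eqref{eq:marginal.post} that $\post(\sigma) = e^{\phi(\hat{G}_\sigma)}/\tZ$ and $\post(G) = e^{\phi(G)} N_G/\tZ$ with the common normalizing constant $\tZ = \sum_{\sigma \in \bbS^p} e^{\phi(\hat{G}_\sigma)} = \sum_{G \in \hat{\cG}} e^{\phi(G)} N_G \in (0,\infty)$.

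First I would record the deterministic fact $N_{G^*} \le |[\sigma^*]|$: since $\hat{G}_\sigma \in \cG_p^\sigma$ for every $\sigma$, the identity $\hat{G}_\sigma = G^*$ forces $G^* \in \cG_p^\sigma$, i.e.\ $\sigma \in [\sigma^*]$, so at most $|[\sigma^*]|$ orderings can have $G^*$ as their best DAG. Next, on the event $E := \bigcap_{\sigma \in [\sigma^*]}\{\hat{G}_\sigma = G^*\}$ of Assumption~\ref{A:freq} (which has $\bbP^*$-probability at least $1 - \zeta(p)$) this bound is attained, $N_{G^*} = |[\sigma^*]|$, and moreover $\phi(\hat{G}_\sigma) = \phi(G^*)$ for every $\sigma \in [\sigma^*]$. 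Hence, on $E$,
\[
    \post([\sigma^*]) = \sum_{\sigma \in [\sigma^*]} \frac{e^{\phi(\hat{G}_\sigma)}}{\tZ} = \frac{|[\sigma^*]|\, e^{\phi(G^*)}}{\tZ} = \frac{N_{G^*}\, e^{\phi(G^*)}}{\tZ} = \post(G^*).
\]
Finally, for any $\varepsilon > 0$, $\bbP^*\big(\post([\sigma^*]) < 1 - \varepsilon\big) \le \bbP^*(E^c) + \bbP^*\big(\post(G^*) < 1 - \varepsilon\big) \le \zeta(p) + \bbP^*\big(\post(G^*) < 1 - \varepsilon\big)$, and the right-hand side tends to $0$ by Assumption~\ref{A:freq} and Theorem~\ref{thm.consistency}; together with the trivial bound $\post([\sigma^*]) \le 1$ this gives $\post([\sigma^*]) \to 1$ in $\bbP^*$-probability.

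There is essentially no serious obstacle here: the corollary is a short consequence of the theorem, and the only point requiring (minor) care is the bookkeeping linking the DAG-level statement to the order-level statement through the counting identity $N_{G^*} = |[\sigma^*]|$ on $E$, which in turn rests on the simple observation that no ordering outside $[\sigma^*]$ can select $G^*$. If one wished to avoid invoking Assumption~\ref{A:freq} directly, one would instead have to show that $\sum_{\sigma \notin [\sigma^*]} e^{\phi(\hat{G}_\sigma)}$ is negligible relative to $\sum_{\sigma \in [\sigma^*]} e^{\phi(\hat{G}_\sigma)}$, which is precisely what the proof of Theorem~\ref{thm.consistency} establishes; as stated, however, the assumption lets us bypass this.
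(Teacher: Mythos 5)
Your proof is correct and takes essentially the same route as the paper, whose entire argument is the identity $\post(G^*) = \sum_{\sigma \in [\sigma^*]} \post(G^*, \sigma) = \sum_{\sigma \in [\sigma^*]} \post(\sigma)$ combined with Theorem~\ref{thm.consistency}; your bookkeeping with $N_{G^*}$ and the event $E$ is just a more explicit version of the same observation that only orderings in $[\sigma^*]$ can have $\hat{G}_\sigma = G^*$. In fact you do not even need to condition on $E$: the deterministic bound $\post(G^*) = \sum_{\sigma \in [\sigma^*]} \post(G^*,\sigma) \leq \sum_{\sigma \in [\sigma^*]} \post(\sigma) = \post([\sigma^*]) \leq 1$ already yields the corollary directly from Theorem~\ref{thm.consistency}.
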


\begin{proof}
This follows from Theorem~\ref{thm.consistency} and $\post(G^*) =  \sum_{\sigma \in [\sigma^*]} \post(G^*, \sigma) =  \sum_{\sigma \in [\sigma^*]} \post(\sigma)$. 
\end{proof}

\section{Posterior sampling via order MCMC}\label{sec:implement} 

\subsection{Metropolis-Hastings algorithms on the order space}\label{subsec:MH} 
To generate posterior samples for our model, we use random walk Metropolis-Hastings algorithms on the order space $\bbS^p$. 
For each $\sigma \in \bbS^p$, let  $\mathbf{K}(\sigma,  \cdot)$ denote the proposal distribution at state $\sigma$. 
We consider three types of random walk proposals: adjacent transposition, which is a standard choice for order-based MCMC methods~\citep{friedman2003being, agrawal2018minimal},  random transpositions and random-to-random shuffles, which are more commonly seen in the literature on random walks on symmetric groups~\citep{levin2017markov, bernstein2019cutoff}. 
All three types of proposals correspond to defining $\mathbf{K}(\sigma,  \cdot)$ by 
\begin{equation}\label{eq:proposal.matrix}
    \mathbf{K}(\sigma,  A ) = \frac{ | \cN(\sigma) \cap  A| }{|\cN(\sigma)|}, \quad \forall \, A \subseteq \bbS^p, 
\end{equation}
for some set $\cN(\sigma) \subset \bbS^p$.  
We refer to $\cN(\sigma)$ as the neighborhood of $\sigma$, and now we formally define this set for each type of proposal. 
Let $(\cdot)_{\mathrm{c}}$ denote an ordering in the cycle notation; for example, $\mu = (a, b, c)_{\mathrm{c}}$ is the ordering given by $\mu(a) = b, \mu(b) = c, \mu(c) =a $ and $\mu(k) = k$ for every $k \notin \{a, b, c\}$. 
Let $\circ$ denote the composition of two orderings; that is, $\tau = \sigma \circ \mu$ is defined by $\tau(i) = \sigma(\mu(i))$.   
Then, we can use  $\sigma \circ (i, j)_{\mathrm{c}}$ to denote the ordering obtained by interchanging the $i$-th and the $j$-th elements of $\sigma$ while keeping the others unchanged. 
Let $ \sigma \circ \RRS(i, j)$ denote the ordering obtained by inserting the $i$-th element of $\sigma$ to the $j$-th position, where $\RRS(i, j)$ is defined by  
$  \RRS(i, j) = (i, i + 1, \dots, j)_{\mathrm{c}}$ if $i < j$, and $\RRS(i, j) =(i, i - 1, \dots, j)_{\mathrm{c}}$ if $i > j$. 
Define the adjacent transposition neighborhood by 
\begin{align*}
    \cN_{\mathrm{adj}}(\sigma) &\; = \{  \sigma' \in \bbS^p 
    \mid   \sigma' = \sigma \circ  (i, i + 1)_{\mathrm{c}} ,  \; 
    i \in [p - 1] \};
\end{align*}
that is, $\cN_{\mathrm{adj}}(\sigma)$ is the set of all orderings that can be obtained from $\sigma$ by one adjacent transposition. 
Similarly, we denote the neighborhood corresponding to random transpositions by  $\cN_{\mathrm{rtp}}$ and that corresponding to random-to-random shuffles by  $\cN_{\mathrm{rrs}}$, which are defined by 
\begin{align*} 
    \cN_{\mathrm{rtp}}(\sigma) &\; = \{ \sigma' \in \bbS^p 
    \mid  \sigma' = \sigma \circ  (i, j)_\mathrm{c}, \; i < j, \, \text{and } i,j \in [p] \}, \\
    \cN_{\mathrm{rrs}}(\sigma) &\; = \{ \sigma' \in \bbS^p 
    \mid \sigma' = \sigma \circ  \RRS(i,j), \;  i \neq j, \, \text{and } i,j \in [p] \}. 
\end{align*}  
We provide an illustration of the three proposals in the supplementary material A.4. Observe that all the three neighborhood relations defined above are symmetric: if $\sigma' \in \cN(\sigma)$, then $\sigma \in \cN(\sigma')$. 
Therefore, by the Metropolis rule, the transition matrix of the algorithm can be calculated by 
\begin{align}\label{eq:transition}
    \mathbf{P}(\sigma, \sigma') = 
    \begin{cases}
     \mathbf{K}(\sigma, \sigma') \min\left\{1, \frac{\post(\sigma') \mathbf{K}(\sigma', \sigma) }{\post(\sigma) \mathbf{K}(\sigma, \sigma')}\right\}, \quad & \text{ if } \sigma' \neq \sigma, \\
     1 - \sum_{\tau \neq \sigma} \mathbf{P}(\sigma, \tau), \quad & \text{ if } \sigma' = \sigma,
    \end{cases}
\end{align}
where $\post(\sigma)$ is the marginal posterior probability and also the stationary probability of $\sigma$. The Hastings ratio $\bK(\sigma',\sigma) / \bK(\sigma,\sigma') = 1$ for all the three neighborhood relations we consider.  
As explained in Section~\ref{sec:model}, once we select an ordering $\sigma \in \bbS^p$, we can find the associated $\hat{G}_\sigma$ by a pre-specified DAG selection method. 
Further, given a stationary Markov chain $(\sigma_t)_{t \geq 1}$  with transition matrix $\mathbf{P}$, $\{\hat{G}_{\sigma_t}\}_{t \geq 1}$ can be seen as correlated samples drawn from the marginal posterior distribution on the DAG space given in~\eqref{eq:marginal.post}, which is just the pushforward of the marginal posterior distribution on $\bbS^p$ under the mapping $\sigma \mapsto \hat{G}_\sigma$. 

The choice of the neighborhood $\cN(\cdot)$ may affect the mixing of the chain significantly. In order to achieve efficient local exploration, the neighborhood size needs to be small. All the three types of proposals considered are desirable in this regard, since the corresponding neighborhood sizes grow at most quadratically in $p$:  $|\cN_{\mathrm{adj}}(\sigma)| = p-1$, and  $|\cN_{\mathrm{rtp}}(\sigma)| = |\cN_{\mathrm{rrs}}(\sigma)| = p(p-1)/2$. 
However, if the neighborhood size is too small, the chain might get stuck at sub-optimal local modes, where a local mode refers to a state with posterior probability larger than that of any neighboring state.    
We will present a simulation study in Section~\ref{subsec:mixing} which confirms that all three proposals yield good mixing of the sampler for moderately large $p$.   

In general, theoretical analysis of the mixing behavior of order-based MCMC methods is very difficult. 
Existing results on the mixing of MCMC for high-dimensional model selection problems suggest that if the posterior distribution is unimodal and tails decay sufficiently fast, an MCMC sampler is expected to mix rapidly~\citep{yang2016computational, zhou2021complexity, chang2022rapidly}; this intuition is highly similar to the rapid mixing of the algorithms with log-concave targets on continuous spaces~\citep{mangoubi2017rapid, dwivedi2018log}.  
However, to rigorously prove a rapid mixing result for our problem seems very difficult.  
One possible strategy is to assume a permutation $\beta$-min condition~\citep{aragam2019globally}, but such a permutation $\beta$-min condition is very restrictive since it requires all nonzero edge weights to be sufficiently large no matter what topological ordering we assume; in our context, this condition means that $\hat{G}_\sigma$ is equal to $G^*_\sigma$ for any $\sigma \in \bbS^p$.  
Here we choose to consider a contrasting setting where all the edge weights of the true DAG $G^*$ are not too large.  
This is probably more realistic and complements the existing theory, though still being moderately restrictive; see Remark~\ref{rmk:disc.conditions} below.  
We are able to prove that the acceptance probability cannot be extremely small for any state proposed from $ \cN_{\mathrm{adj}}(\cdot)$; see Remark~\ref{rmk:swap}. 
That is, by using adjacent transpositions, the chain is able to escape from any sub-optimal local mode, if there is any, in a relatively short amount of time.  
Observe that for any $\sigma \in \bbS^p$,  $\cN_{\mathrm{adj}}(\sigma)$ is a proper subset of both $\cN_{\mathrm{rtp}}(\sigma)$ and $\cN_{\mathrm{rrs}}(\sigma)$.  
Hence, our result partly explains why all the three proposals appear to work well.  
   
\begin{proposition}\label{prop:lower_bound}
Assume (C\ref{A:eigen}) in Proposition~\ref{prop:freq} and the following conditions hold. 
\begin{enumerate}[({C}1')] 
    \item The true covariance matrix $\Omega^* = \diag(\omega^*, \dots, \omega^*)$ for some universal constant $\omega^* > 0$, and  the edge weights of $G^*$ satisfy 
    \begin{align*}
    \max_{i, j \in [p]} |B^*_{ij}|^2 =  O \left(\frac{\vmax^2 \log p}{\vmin^2 n}\right). 
    \end{align*}   \label{c2.beta.min}
    \item The parameter $\din$ satisfies $d^* \leq \din$ and  
    \begin{align*}
        \din^{2} \frac{\vmax^2 \log p}{\vmin^2 n} \rightarrow 0  \text{ as } n \rightarrow \infty. 
    \end{align*} 
    \label{c2.din}  
\end{enumerate} 
\vspace{-0.5cm}
Let $\cN_{\rm{rev}}(G)$ denote the set of all DAGs that can be obtained by applying one edge reversal to $G$, and $c > 0$ be an arbitrary universal constant. Then, for sufficiently large $n$, 
\begin{align*}
    \max_{\sigma \in \bbS^p} \max_{G_1 \in \cG_p^\sigma (\din)} \max_{G_2 \in \cN_{\mathrm{rev}} (G_1) } \frac{\exp(\phi(G_1))}{\exp(\phi(G_2))} \leq  p^{c \vmax^2/\vmin^3},   
\end{align*}
with probability at least $1 - 6p^{-1}$. 
\end{proposition}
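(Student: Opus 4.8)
An edge reversal preserves the number of edges, so $|G_1|=|G_2|$ and the two cardinality-dependent terms of the score~\eqref{eq:post_score} cancel, leaving
\begin{align*}
\frac{\exp(\phi(G_1))}{\exp(\phi(G_2))}=\left(\frac{\sum_{k=1}^p\SSE_k(G_2)}{\sum_{k=1}^p\SSE_k(G_1)}\right)^{(\alpha pn+\kappa)/2}.
\end{align*}
Hence it suffices to show that, uniformly over $\sigma$, over $G_1\in\cG_p^\sigma(\din)$ and over the reversed edge, $\sum_k\SSE_k(G_2)\le\big(1+C\vmax^2\log p/(\vmin^3 pn)\big)\sum_k\SSE_k(G_1)$ on an event of probability $\ge 1-6p^{-1}$; then $\log(1+x)\le x$ and $\alpha pn+\kappa=O(pn)$ give the claim. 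Write the reversed edge as $i\to j$, so $G_2=(G_1\setminus\{i\to j\})\cup\{j\to i\}$; only $\Pa_i$ and $\Pa_j$ change, with $\Pa_j(G_2)=\Pa_j(G_1)\setminus\{i\}=:T$. Since $\SSE_i(G_2)\le\SSE_i(G_1)$ (more regressors at node $i$) and $\oproj_T-\oproj_{T\cup\{i\}}$ is the rank-one projection onto the residual of $X_i$ on $X_T$,
\begin{align*}
\sum_k\SSE_k(G_2)-\sum_k\SSE_k(G_1)\le\SSE_j(G_2)-\SSE_j(G_1)=\frac{(X_i^\T\oproj_T X_j)^2}{X_i^\T\oproj_T X_i}.
\end{align*}
So I must (a) lower bound $\sum_k\SSE_k(G_1)$ and $X_i^\T\oproj_T X_i$, and (b) upper bound $(X_i^\T\oproj_T X_j)^2$, all uniformly.

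\textbf{The two uniform bounds.}
For (a): for any $\sigma$, any node $\ell$, and any $S\subseteq P_\ell^\sigma$ with $|S|\le\din$, the conditional variance satisfies $\Var(\sX_\ell\mid\sX_S)\ge\lmin(\Sigma^*)\ge\vmin/(1-\delta)^2$, and writing the regression residual in terms of the (conditionally Gaussian, and independent of $X_{P_\ell^\sigma}$) SEM noise shows $X_\ell^\T\oproj_S X_\ell$ is, conditionally, a $\Var(\sX_\ell\mid\sX_S)\cdot\chi^2_{n-|S|}$ variable. A $\chi^2$ lower-tail bound and a union bound over the at most $p\cdot p^{\din}$ pairs $(\ell,S)$ — costing only $O(\din\log p)=o(n)$ in the exponent — give $X_\ell^\T\oproj_S X_\ell\ge\tfrac12 n\vmin$ simultaneously for large $n$, hence both $\sum_k\SSE_k(G_1)\ge\tfrac12 pn\vmin$ and $X_i^\T\oproj_T X_i\ge\tfrac12 n\vmin$; the companion bound $\max_\ell X_\ell^\T X_\ell\le 2n\vmax$ follows similarly. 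For (b): since $G_1\in\cG_p^\sigma$ we have $\{i\}\cup T\subseteq\Pa_j(G_1)\subseteq P_j^\sigma$, so $X_j=\sum_{k\in P_j^\sigma}(B^*_\sigma)_{kj}X_k+E_j$ with $E_j$ having i.i.d.\ $N(0,\omega_j^\sigma)$ entries independent of $\{X_k:k\in P_j^\sigma\}$, and $\oproj_T X_k=0$ for $k\in T$, whence
\begin{align*}
X_i^\T\oproj_T X_j=\sum_{k\in P_j^\sigma\setminus T}(B^*_\sigma)_{kj}\,X_i^\T\oproj_T X_k+X_i^\T\oproj_T E_j .
\end{align*}
In the first sum at most $d^*\le\din$ coefficients are nonzero and $|X_i^\T\oproj_T X_k|\le\|\oproj_T X_i\|\sqrt{2n\vmax}$; here one uses a uniform edge-weight bound $\max_\sigma\max_{k,\ell}|(B^*_\sigma)_{k\ell}|^2=O(\vmax^2\log p/(\vmin^2 n))$, obtained from (C\ref{c2.beta.min}) and (C\ref{A:eigen}) by expanding $(I-B^*)^{-1}$ in the MCD~\eqref{eq:mcd} and using (C\ref{c2.din}) to make that expansion converge geometrically (this is where $d^*\le\din$ enters). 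For the second term, conditionally on $\{X_k:k\in P_j^\sigma\}$ we have $X_i^\T\oproj_T E_j\sim N(0,\omega_j^\sigma\|\oproj_T X_i\|^2)$, so a Gaussian tail bound and a union bound over the at most $p^{\din+2}$ triples $(T,i,j)$ give $(X_i^\T\oproj_T E_j)^2\le\omega_j^\sigma\|\oproj_T X_i\|^2\cdot O(\din\log p)$ uniformly. Dividing by $X_i^\T\oproj_T X_i\ge\tfrac12 n\vmin$ and using $\omega_j^\sigma\le\vmax$ together with (C\ref{c2.beta.min})--(C\ref{c2.din}), one arrives at $\SSE_j(G_2)-\SSE_j(G_1)=O\!\left(\vmax^2\log p/\vmin^2\right)$.

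\textbf{Assembly.}
On the intersection of the events above (which, after bookkeeping all the union bounds, has probability $\ge 1-6p^{-1}$),
\begin{align*}
\log\frac{\exp(\phi(G_1))}{\exp(\phi(G_2))}\le\frac{\alpha pn+\kappa}{2}\cdot\frac{\SSE_j(G_2)-\SSE_j(G_1)}{\sum_k\SSE_k(G_1)}=O(pn)\cdot\frac{O(\vmax^2\log p/\vmin^2)}{\tfrac12 pn\vmin}=O\!\left(\frac{\vmax^2}{\vmin^3}\log p\right),
\end{align*}
and exponentiating, the maximum is at most $p^{c\vmax^2/\vmin^3}$ once $n$ is large enough for the hidden constant to drop below $c$.

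\textbf{Main obstacle.}
The real work is the \emph{uniformity}: the three maxima range over super-exponentially many triples $(\sigma,G_1,\text{reversed edge})$, yet the union bounds must cost only logarithmic factors, and — crucially — the increment $\SSE_j(G_2)-\SSE_j(G_1)$ must be shown to be of order $\mathrm{poly}(\vmax/\vmin)\cdot\log p$ rather than the trivial $O(n)$; this is exactly where the small-signal assumption (C\ref{c2.beta.min}) is indispensable. Two further subtleties are that (i) $\Pa_j(G_1)$ and $T$ are \emph{arbitrary} subsets of size $\le\din$, so one cannot invoke any oracle parent structure and must instead lean on the fact that the SEM noise $E_j$ is independent of the whole predecessor block $X_{P_j^\sigma}\supseteq X_{\Pa_j(G_1)}$; and (ii) the edge-weight smallness must be transferred from $B^*$ to the weighted adjacency matrices $B^*_\sigma$ of \emph{every} ordering, for which (C\ref{c2.din}) (smallness of $\din$ relative to $n$) is the essential input.
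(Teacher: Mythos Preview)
Your reduction to the single rank-one increment is where the argument breaks. By writing $\SSE_i(G_2)\le\SSE_i(G_1)$ and discarding that gain, you are left to bound only
\[
X_j^\T(\proj_{T\cup\{i\}}-\proj_T)X_j=\frac{(X_i^\T\oproj_T X_j)^2}{X_i^\T\oproj_T X_i},
\]
and this single term is \emph{not} $o(\vmax^2\log p/\vmin^2)$. Even with the sharpest covariance control (the paper's event $\cJ$ together with Lemma~\ref{L:asym}), the leading contribution is $n(\omega^*)^{-1}(X_i^\T X_j/n)^2=O(\vmax^2\log p/\vmin^2)$ with a \emph{fixed} implied constant. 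Your own sketch gives something strictly worse: the Cauchy--Schwarz step $|X_i^\T\oproj_T X_k|\le\|\oproj_T X_i\|\sqrt{2n\vmax}$ costs a factor $\din\sqrt{\vmax}$ per term, and the Gaussian union bound on $X_i^\T\oproj_T E_j$ over $p^{O(\din)}$ configurations costs an additional $\din$, so the increment is $O(\din^2\vmax^3\log p/\vmin^2)+O(\din\vmax\log p)$, not the $O(\vmax^2\log p/\vmin^2)$ you assert. Either way the final log-ratio is at best $C\,(\vmax^2/\vmin^3)\log p$ for a constant $C$ that does \emph{not} tend to zero with $n$, so your closing sentence ``once $n$ is large enough for the hidden constant to drop below $c$'' is unjustified, and the conclusion for arbitrary $c>0$ fails.

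The paper's proof keeps both terms and exploits the cancellation you threw away. Writing the difference as
\[
\frac{(X_j^\T\oproj_T X_i)^2}{X_i^\T\oproj_T X_i}-\frac{(X_j^\T\oproj_S X_i)^2}{X_j^\T\oproj_S X_j},
\]
the key lemma (Lemma~\ref{L:asym}, proved from (C\ref{c2.beta.min}')--(C\ref{c2.din}') via a trek expansion of $\Sigma^*$) shows $\Sigma^*$ is nearly $\omega^* I$: diagonal entries are $\omega^*+O(\theta/\din)$ and off-diagonals are $O(\sqrt{\theta}/\din)$, with $\theta=\din^2\vmax^2\log p/(\vmin^2 n)$. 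On the concentration event $\cJ$ this transfers to empirical inner products, so both fractions above have the same leading piece $n(\omega^*)^{-1}(X_i^\T X_j/n)^2$; after cancellation the remainder is $n\cdot o(\theta/\din^2)=o(\vmax^2\log p/\vmin^2)$, which is exactly what delivers the arbitrary $c$. Your SEM decomposition of $X_j$ alone does not produce this symmetry, so to repair the argument you would need to retain the node-$i$ term and either reproduce the covariance asymptotics of Lemma~\ref{L:asym} or find another mechanism that makes the two rank-one increments match to leading order.
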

\begin{proof}
See Section~\ref{proof:lower_bound} in the supplementary material.
\end{proof}

\begin{remark}\label{rmk:swap}
To see the implication of this result on the mixing of our order MCMC, consider $\sigma = (1, 2, \dots, p)$, and let $\tau =  \sigma \circ (i, i+1)_{\mathrm{c}} $ for some $i$. 
Recall that we use $\hat{G}_\sigma = \hat{G}_\sigma^{\rm{MAP}}$ where $\hat{G}_\sigma^{\rm{MAP}}$ is defined in~\eqref{def.map}. Hence, $\post(\sigma)/\post(\tau) \leq \exp(\phi( \hat{G}_\sigma)) / \exp(\phi(G'))$ where $G'$ is the DAG that results from reversing the edge $i \rightarrow (i+1)$ of $\hat{G}_\sigma$; if the edge does not exist, then $G' = \hat{G}_\sigma$. 
Assuming $\vmax, \vmin$ are bounded, Proposition~\ref{prop:lower_bound} implies that with high probability $\post(\sigma)/\post(\tau)$ is bounded from above by $p^c$ where $c > 0$ is arbitrary, as long as $G' \in \cG_p^\tau(\din)$.  For the schemes we propose on $\bbS^p$, this further implies that an adjacent transposition proposal has acceptance probability greater than $p^{-c}$, and thus the chain cannot get trapped at a local mode for exponentially many iterations in expectation. 
\end{remark}

\begin{remark}\label{rmk:disc.conditions}
The purpose of Proposition~\ref{prop:lower_bound} is to theoretically analyze the posterior landscape when we probably do not have posterior concentration at the true model and Proposition~\ref{prop:freq} no longer holds. In particular, Proposition~\ref{prop:lower_bound} does not require  any assumption on the hyperparameters of our model, so the nonzero entries in $B^*$ may or may not be detected,  depending on the choice of $c_0$. Condition (C1') essentially requires that no signal size has a strictly larger order than the detection threshold given in condition (C3) of Proposition~\ref{prop:freq}. This is restrictive but arguably represents a scenario  of more practical interest than Proposition~\ref{prop:freq}, since in reality signals of small or moderate sizes are common. 
It is possible to construct a scenario where the assumptions of Propositions~\ref{prop:freq} and~\ref{prop:lower_bound} both hold. 
For example, assume $d^* = O(1)$, which is referred to as the ultra-high sparsity regime in the literature~\citep{van2013ell}. 
Then we can set $\din = O(1)$, which implies that  we can choose $c_0 = O(1)$ to satisfy condition  (C2) of Proposition~\ref{prop:freq}. 
Assuming $\vmax, \vmin$ are bounded for convenience, in order to satisfy condition (C3)  of Proposition~\ref{prop:freq} and condition (C1') of Proposition~\ref{prop:lower_bound}, we just need to require that the order of any nonzero entry $B^*_{ij}$ is exactly given by $n^{-1} \log p $. 
\end{remark}

\subsection{Iterative top-down initialization}\label{subsec:ITD}
Standard theory yields that the Markov chain defined in~\eqref{eq:transition} converges to the marginal posterior distribution on $\bbS^p$ in total variation distance regardless of the initial state.  
However, the actual mixing rate of the chain we observe depends on the initial state~\citep[Proposition 1]{sinclair1992improved}, and in general, it is desirable to start the chain at a state with reasonably high posterior probability. 
Since the size of   $\bbS^p$ grows super-exponentially in $p$, choosing a warm start for our sampler can significantly improve the performance of posterior estimation with MCMC samples. 
We propose an initialization method for our order MCMC sampler, called iterative top-down, which aims to quickly find the topological ordering of the true data-generating DAG $G^*$.

\begin{algorithm}[b!]
\caption{Score-based top-down algorithm}
\label{alg:STD}
\KwInput{A positive vector 
$\SSE= (\SSE_1, \dots, \SSE_p)$ (for all displayed algorithms, we assume the data  $X$ and parameters $(c_0, \gamma, \alpha, \kappa, \din)$ are given).}
$\hat{\sigma} \leftarrow\, \argmin_{j \in [p]} \SSE_j$\\ 
\While{$|\hat{\sigma}| < p$}
{
\For{$j \in [p] \backslash \hat{\sigma}$}
{
$S \leftarrow\, \argmax_{S_j \subset \hat{\sigma} \colon  |S_j| \leq \din} \phi_j(S_j, \sum_{i\neq j} \SSE_i )$   \\
\tcp{$\phi_j\left(S, R\right)=-|S| \log \left\{p^{c_0} \sqrt{(1+\alpha / \gamma)}\right\}-\frac{\alpha p n+\kappa}{2} \log \left(R +X_j^{\mathrm{T}} \Phi_S^{\perp} X_j\right)$}
$\SSE_j \leftarrow\, X_j^\T \oproj_S X_j$ 
} 
$j_0 \leftarrow\, \argmin_{j \in [p] \backslash \hat{\sigma}} \SSE_j$ \\
$\hat{\sigma}\leftarrow\, (\hat{\sigma}, j_0)$
}
\KwOutput{An ordering $\hat{\sigma}$, a vector of estimated residual sums of squares $\mathrm{RSS}$.}
\end{algorithm}

\begin{algorithm}[t!]
\caption{Iterative top-down algorithm}
\label{alg:ITD}  
$(\hat{\sigma}^{\mathrm{ITD}}, \SSE) \leftarrow\,  \mathrm{STD}(X_1^\T X_1, \dots, X_p^\T X_p)$ \tcp{STD refers to Algorithm~\ref{alg:STD}} 
\While{$1$}{
$(\tilde{\sigma}, \SSE') \leftarrow\, \mathrm{STD}(\SSE)$ \\
\If{$\hat{\sigma}^{\mathrm{ITD}} \neq \tilde{\sigma}$}{
 $\SSE \leftarrow\, \SSE' $ \\ $\hat{\sigma}^{\mathrm{ITD}}\leftarrow\, \tilde{\sigma}$
}\Else{\textbf{return} $\hat{\sigma}^{\mathrm{ITD}}$}
} 
\KwOutput{An ordering $\hat{\sigma}^{\mathrm{ITD}}$}
\end{algorithm}

Our method is based on the top-down method  proposed by~\citet{chen2019causal}, which we now briefly explain.  
We say a node in a DAG is a source if the node has no parents. 
If the data is generated according to~\eqref{eq:ln.str.eq}, due to the equal variance assumption, a  source node always has the smallest marginal variance, and any node with at least one parent has a strictly larger marginal variance. 
The top-down method first identifies a source node of $G^*$, which always exists, sets it to $\hat{\sigma}(1)$ and then removes it from $G^*$. The resulting subgraph is also a DAG, and thus we can set $\hat{\sigma}(2)$ to a source node of this subDAG; how to identify the source node is explained in the next paragraph. 
Repeating this procedure $p$ times, we obtain $\hat{\sigma}$,  the top-down estimator for the ordering.   

Suppose that in the first $k$ iterations of the top-down method we have identified $\sigma(j) = j$ for $j = 1, \dots, k$. Then in the $(k+1)$-th iteration, we need to estimate the variance of each remaining node that cannot be explained by the first $k$ nodes, and pick the node with the smallest unexplained variance, which we infer as a source node of the subDAG of the remaining $p - k$ nodes. 
\citet{chen2019causal} estimated the unexplained variance of the node $j$ (assuming $j > k$) by $\min_{S \subseteq [k], |S| = \din} X_j^\T \oproj_{S} X_j$, but they noted that a variable selection procedure may be applied as well.  
Since our purpose is to find a warm start for our order MCMC sampler, we estimate the unexplained variance of a node by performing a variable selection procedure that aims to maximize the score~\eqref{eq:post_score}. 
One caveat is that since our score is non-decomposable, when inferring the parent set of node $j$, we need to know the residual sums of squares of all the other $p - 1$ nodes. 
This motivates us to propose the iterative top-down method, detailed in Algorithm~\ref{alg:ITD}, which iteratively applies the top-down procedure and updates all the $p$ residual sums of squares. 
We prove below that under a condition similar to that of~\citet[Theorem 2]{chen2019causal},  the iterative top-down algorithm identifies an ordering consistent with $G^*$ with high probability. In our simulation studies, we observe that the algorithm usually converges within $5$ iterations.   

\begin{theorem}\label{thm:ITD}
Suppose the conditions in Proposition~\ref{prop:freq} hold, and let $\epsilon \in (0, 1)$. 
If 
\begin{align*}
    n > \{\vmax (\din + 1)(\vmin + 3 \omega^*(1+1/\betamin)) / \vmin^2 \}^2 3200 ( \log(p^2 -p) - \log(\epsilon/4)),
\end{align*}
then for sufficiently large $n$, Algorithm~\ref{alg:ITD} returns an ordering in $[\sigma^*]$ with probability at least $1 - \epsilon$.
\end{theorem}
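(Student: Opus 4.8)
The plan is to carry out the analysis on a single high-probability event $\cE$ and to reduce the theorem to the claim that, on $\cE$, \emph{every} call to the score-based top-down routine (Algorithm~\ref{alg:STD}) made inside Algorithm~\ref{alg:ITD} returns an ordering in $[\sigma^*]$; granting that, the outer loop can only ever adopt, and hence eventually return, such an ordering. I would take $\cE$ to be the intersection of the event of Proposition~\ref{prop:freq} with a concentration event for the quadratic forms $X_j^\T\oproj_S X_j$, calibrated so the total failure probability is at most $\epsilon$ under the stated sample-size bound. The first ingredient is the population separation underlying the top-down idea, exactly as in~\citet{chen2019causal}. If $A\subseteq[p]$ is a prefix of some ordering in $[\sigma^*]$ (equivalently, an ancestral set of $G^*$) and $j\notin A$, then under~\eqref{eq:ln.str.eq} with equal error variances $\mathrm{Var}(\sX_j\mid\sX_A)=\omega^*$ whenever $\Pa_j(G^*)\subseteq A$ (i.e.\ $j$ is a source of the sub-DAG of $G^*$ on $[p]\setminus A$), whereas if some $i\in\Pa_j(G^*)$ lies outside $A$, then conditioning first on $\Pa_j(G^*)\setminus\{i\}$ and invoking (C\ref{A:eigen}),
\begin{equation*}
  \min_{S\subseteq A,\,|S|\le\din}\mathrm{Var}(\sX_j\mid\sX_S)\;\ge\;\omega^*+(B^*_{ij})^2\lmin(\Sigma^*)\;\ge\;\omega^*+\betamin\,\vmin .
\end{equation*}
Taking $A=\emptyset$ gives the analogous statement for the marginal variances $\Sigma^*_{jj}$ that govern line~1 of Algorithm~\ref{alg:STD}; the factor $\vmin+3\omega^*(1+1/\betamin)$ appearing in the theorem is a crude upper bound for the residual variances $\mathrm{Var}(\sX_j\mid\sX_S)$ that the estimated $\SSE$ values track, and the $1/\betamin$ enters because a weak true edge produces a separation of order $\betamin$ only.

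For the concentration step I would apply a Gaussian quadratic-form (Hanson--Wright type) tail bound to show that $X_j^\T\oproj_S X_j/n$ concentrates around $\mathrm{Var}(\sX_j\mid\sX_S)$ uniformly over $j\in[p]$ and $S\subseteq[p]\setminus\{j\}$ with $|S|\le\din$ (when $\Pa_j(G^*)\subseteq S$ this conditional variance equals $\omega^*$ and, since then $\oproj_S X_j=\oproj_S\se_j$, one has $X_j^\T\oproj_S X_j\sim\omega^*\chi^2_{n-|S|}$ exactly), and parts (i)--(ii) of Proposition~\ref{prop:freq} together with Remark~\ref{rmk:stepwise1} pin down which parent sets can actually be selected inside Algorithm~\ref{alg:STD}, so the estimated $\SSE_j$ are controlled too. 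A union bound over the at most $p(p-1)$ pairwise node comparisons that can arise, with an extra factor $\din+1$ for the candidate parent sets within each comparison, and a choice of deviation level that is a small constant times $\betamin\vmin$ divided by an appropriate power of $\vmax(\din+1)/\vmin^2$, reduces the requirement to a bound of the form $n\gtrsim\{\vmax(\din+1)(\vmin+3\omega^*(1+1/\betamin))/\vmin^2\}^2(\log(p^2-p)-\log(\epsilon/4))$, i.e.\ the stated bound (the constant $3200$ and the $\epsilon/4$ split absorbing the union bound and the four events of Proposition~\ref{prop:freq}). With $\cE$ the intersection of all these events, $\bbP^*(\cE)\ge1-\epsilon$.

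On $\cE$ I would prove by induction on $k$ that the first $k$ elements of the ordering built by one run of Algorithm~\ref{alg:STD} form an ancestral set of $G^*$. For $k=1$ this is the marginal-variance separation: $\argmin_j\SSE_j$ (with $\SSE_j=X_j^\T X_j$ on the first call, and on later calls with any input whose source coordinates lie within $\omega^*(1\pm o(1))\cdot n$) is a global source, since the $\betamin\vmin$ gap dominates the fluctuation. For the step, let $A$ be the current ancestral prefix; for $j\notin A$, Algorithm~\ref{alg:STD} sets $\SSE_j=X_j^\T\oproj_{S_j}X_j$ with $S_j\subseteq A$ the $\phi_j$-maximiser. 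Applying parts (i)--(ii) of Proposition~\ref{prop:freq} to any completion $\sigma\in[\sigma^*]$ of $A$ (for which $P_j^\sigma\supseteq A$), and using Remark~\ref{rmk:stepwise1} to cover the admissible choices of the ``denominator'' $\sum_{i\ne j}\SSE_i$, one gets $S_j=\Pa_j(G^*)$ and $\SSE_j/n\approx\omega^*$ when $j$ is a source of the sub-DAG on $[p]\setminus A$, and $\SSE_j/n\ge\omega^*+\betamin\vmin-o(1)$ otherwise (a penalty can only drop edges, which inflates the residual, so the lower bound survives). Hence $\argmin_{j\notin A}\SSE_j$ is a source of that sub-DAG, $A$ grows to another ancestral set, and after $p$ steps the output lies in $[\sigma^*]$.

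Running this through the outer loop: the first STD call inside Algorithm~\ref{alg:ITD} uses the input $(X_1^\T X_1,\dots,X_p^\T X_p)$ and hence returns an ordering in $[\sigma^*]$; every later call is fed an RSS vector whose entries have the form $X_j^\T\oproj_S X_j$ with $|S|\le\din$, which lies in the regime covered by Remark~\ref{rmk:stepwise1}, so the same induction applies and every STD call returns an ordering in $[\sigma^*]$; since Algorithm~\ref{alg:STD} is a deterministic function of $X$, the loop can only ever adopt orderings in $[\sigma^*]$ and must return one of them. The step I expect to be the main obstacle is the inductive step together with the calibration: one must show that the penalized, \emph{non-decomposable} node-wise selection still recovers $\Pa_j(G^*)$ for every remaining source and for \emph{every} ancestral prefix, and that the estimated residuals stay separated from those of non-sources by a margin the quadratic-form fluctuations --- only of order $\sqrt{\din\log p/n}$ after the union bound --- cannot erase; pushing this margin down to the $\betamin$ scale is exactly what pins the precise sample-size condition. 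A secondary point to be careful about is verifying that the RSS vectors arising across iterations of Algorithm~\ref{alg:ITD} all stay within the regime covered by Proposition~\ref{prop:freq} and Remark~\ref{rmk:stepwise1}, so that the single-pass analysis is reusable verbatim.
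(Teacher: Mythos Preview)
Your proposal is correct in its overall architecture and matches the paper's strategy: work on a single high-probability event, use Proposition~\ref{prop:freq} to guarantee that the nodewise maximiser inside Algorithm~\ref{alg:STD} returns $\Pa_j(G^*)$ whenever $\Pa_j(G^*)$ is available, and then run the ancestral-prefix induction of~\citet{chen2019causal} to show every STD call outputs an ordering in $[\sigma^*]$. The paper's proof is in fact much terser than yours: it writes down the event, verifies the nodewise argmax statement from the proof of Proposition~\ref{prop:freq}, and then simply says ``this follows by an argument completely analogous to the proof of Theorem~2 of~\citet{chen2019causal}''.

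The one substantive technical difference is your choice of concentration ingredient. You propose Hanson--Wright-type control of $X_j^\T\oproj_S X_j/n$ directly, uniformly over $(j,S)$. The paper instead takes the \emph{entrywise} sample-covariance event
\[
\cK=\Big\{\max_{i,j}\big|\hat{\Sigma}_{ij}-\Sigma^*_{ij}\big|\le\delta\Big\},\qquad
\delta=\frac{\vmin^2\,\betamin}{(\din+1)\big(\vmin\betamin+3\omega^*(1+\betamin)\big)},
\]
and bounds $\bbP^*(\cK^\c)$ via~\citet{ravikumar2011high}, which is exactly the event Chen et al.\ use and which produces the constants $3200$, $\log(p^2-p)$, and $\epsilon/4$ in the sample-size condition. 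This also explains the $(\din+1)$ factor, which in the paper is \emph{not} a union bound over candidate parent sets as you suggest, but the error-propagation factor when converting a $\delta$-perturbation of $\Sigma^*$ into a perturbation of a $(\din{+}1)$-dimensional conditional variance. Your route would work too, but reproducing the stated constants would require redoing Chen et al.'s perturbation analysis rather than a straight union bound over parent sets; the paper's route sidesteps this by citing their argument directly.
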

\begin{proof}
See Section~\ref{proof:ITD} in the supplementary material.
\end{proof}
 
\subsection{Reducing variance of edge estimation}\label{subsec:RB} 
One potential limitation of our order MCMC sampler is that it does not take into account the uncertainty in DAG selection with given ordering. So we propose to estimate edge posterior inclusion probabilities using a conditioning scheme. 
Let $\sigma^{(t)}$ denote the $t$-th sample from our order MCMC sampler, and $\Gamma^{(t)}$ denote the adjacency matrix of the DAG $G^{(t)} = \hat{G}_{\sigma^{(t)}}$ such that $\Gamma^{(t)}_{ij} = 1$ if  $\{ i \rightarrow j \} \in G^{(t)}$ and $\Gamma^{(t)}_{ij} = 0$ otherwise. 
The posterior inclusion probability of edge $i \rightarrow j$  can be estimated by $ T^{-1} \sum_{t=1}^T  \Gamma_{ij}^{(t)}$ where $T$ denotes the number of MCMC samples. 
To improve this estimator,  for each pair $(\sigma^{(t)}, G^{(t)})$, we calculate  $\hat{\Gamma}^{(t)} = \hat{\Gamma}(\sigma^{(t)}, G^{(t)})$, where the function $\hat{\Gamma}$ is given by 
\begin{equation}\label{eq:RB} 
    \hat{\Gamma}_{ij}(\sigma, G)  = \frac{ e^{ \phi ( G  \cup \{i \rightarrow j\} ) }  }{e^{ \phi ( G \cup \{i \rightarrow j\} ) } + e^{ \phi (G  \setminus \{i \rightarrow j \} )} } \ind_{P_j^{\sigma }}(i), \quad \forall \, i, j \in [p]. 
\end{equation}   
We can now estimate the posterior inclusion probability of edge $i \rightarrow j$  by $\hat{\Gamma}_{ij}^{\mathrm{RB}} = T^{-1} \sum_{t=1}^T  \hat{\Gamma}_{ij}^{(t)}$. 
The superscript RB indicates that, in a general sense, this can be seen as a Rao-Blackwellized-type estimator~\citep{robert2021rao}. 
In our numerical experiments, we find  this scheme helps reduce the variance of edge posterior inclusion probability estimates.   

\section{Simulation studies}\label{sec:simulation} 

\subsection{Mixing behavior}\label{subsec:mixing}
We first present a numerical example which illustrates how the choice of neighborhood and 
score equivalence property affect the mixing behavior of order MCMC samplers. We generate a 20-node random DAG $G^*$ where any two distinct nodes are connected by an edge with  probability $0.1$, and sample the edge weight $B^*_{ij}$ for each $i \rightarrow j$ in $G^*$ uniformly from $[-1,-0.5] \cup [0.5, 1]$. 
Then, we simulate the data matrix $X$ using the structural equation model in~\eqref{eq:ln.str.eq} with $n = 1,000$ and error variance $\omega^* = 1$. 
 
We implement the order MCMC sampler described in Section~\ref{sec:implement} with $\cN =  \cN_{\mathrm{adj}}, \cN_{\mathrm{rtp}}$ or $\cN_{\mathrm{rrs}}$.  
To impartially compare the three types of proposal, we need to take into account the computational complexity of sampling from each type of neighborhood.   
Consider a proposal move from $\sigma$ to $\sigma' = \sigma \circ (i, j)_{\mathrm{c}}$ for some $i < j$. 
In Section~\ref{subsubsec:effective} of the supplementary material,  we present a stepwise procedure for selecting the parent set of a given node in Algorithm~\ref{alg:nodewise_step}, and 
describe how to efficiently obtain $\hat{G}_{\sigma'}$ from $\hat{G}_\sigma$ by applying Algorithm~\ref{alg:nodewise_step} at nodes $\sigma(i), \sigma(i+1), \dots, \sigma(j)$. 
Hence, an adjacent transposition always requires performing Algorithm~\ref{alg:nodewise_step} at two nodes, while for a random transposition, which randomly samples $\sigma'$ from $\cN_{\mathrm{rtp}}(\sigma)$ with equal probability, on average we need to perform Algorithm~\ref{alg:nodewise_step}  at $(p+4)/3 \approx p / 3$ nodes, and the same holds true for a random-to-random shuffle.  
So, when we run the sampler defined in~\eqref{eq:transition} for $T$ iterations, we say the effective number of iterations is $2 T$ if $\cN = \cN_{\mathrm{adj}}$, and $pT/3$ if $\cN = \cN_{\mathrm{rtp}}$ or $\cN = \cN_{\mathrm{rrs}}$. 
We let the effective number of iterations be $10,000$ for all three samplers in our simulation; that is,  we run our sampler with $\cN = \cN_{\mathrm{adj}}$ for $5,000$ iterations, and the samplers with $\cN = \cN_{\mathrm{rtp}}$ and $\cN = \cN_{\mathrm{rrs}}$ for $1,500$ iterations.   
We plot the trajectories for 30 runs with random initialization in the panels (a), (b), (c) of Fig.~\ref{fig:mixing},  from which we see that all three proposals work well. 
We have also tried $n = 100$ and observed good mixing performance, probably because with a smaller sample size the posterior distribution tends to be flatter~\citep{agrawal2018minimal}; we display the result in Section~\ref{subsec:mixing_supp} of the supplementary material. 
Given that adjacent transposition appears to yield the best mixing, it will be used for all the remaining numerical studies. 
 
To compare our method with a score equivalent procedure, we consider the following posterior score, which is decomposable and yields the same value for Markov equivalent DAGs, 
\begin{align}\label{eq:equivalent_score}
    \phi_{\mathrm{eq}} (G) =  -|G| c_0 \log p  - \frac{|G|}{2} \log [ (1 + \alpha/\gamma) ] -  \frac{\alpha n + \kappa}{2} \sum_{j=1}^p \log  \left(\SSE_j(G) \right). 
\end{align}
This score can be derived by a slight modification of our model: instead of assuming equal error variances, use an error variance parameter $\omega_j$ for each $\se_j$ in~\eqref{eq:ln.str.eq}  and put an inverse-gamma prior on $\omega_j$~\citep{zhou2021complexity}.   
To sample from the corresponding posterior distribution, we use the minimal I-MAP MCMC sampler of~\citet{agrawal2018minimal}, which is also a Metropolis-Hastings algorithm defined on the order space and proposes moves from the adjacent transposition neighborhood $\cN_{\mathrm{adj}}(\cdot)$; compared with our method, the main difference is that the minimal I-MAP MCMC uses conditional independence tests to find $\hat{G}_\sigma$. 
We run the minimal I-MAP MCMC  for 10,000 iterations, and plot 30 trajectories with random initialization in Fig.~\ref{fig:mixing}(d). 
Comparing it with Fig.~\ref{fig:mixing}(a), we see that our sampler with non-decomposable score mixes better in the sense that all 30  trajectories are able to visit some $\sigma \in [\sigma^*]$, while the minimal I-MAP MCMC may get stuck at local modes depending on the initialization. 
As we have explained in Section~\ref{sec:intro}, score equivalence is likely to make the posterior distribution on the order space (or the DAG space) difficult to explore due to the existence of large equivalence classes. This simple numerical study verifies that the use of identifiability conditions does simplify the posterior distribution so that MCMC samplers tend to mix faster. 
In Section~\ref{subsec:mixing_supp} of the supplementary material, we show that the same observation can still be made if we simulate $X$ using unequal error variances.

\begin{figure}[t!]
    \centering 
    \subfigure[]{\includegraphics[width=0.24\textwidth]{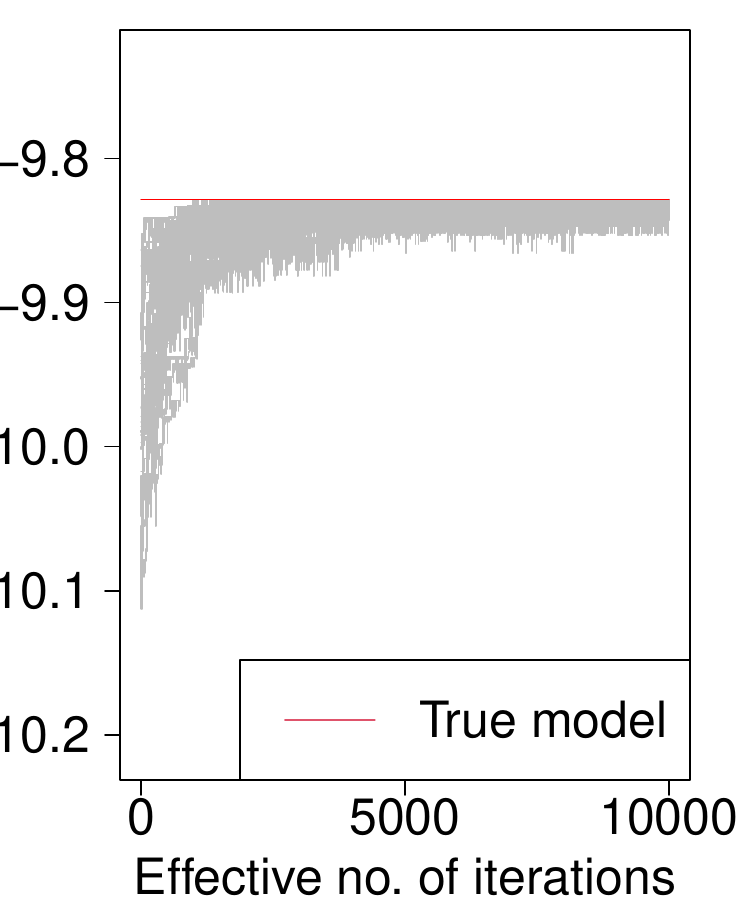}}
    \subfigure[]{\includegraphics[width=0.24\textwidth]{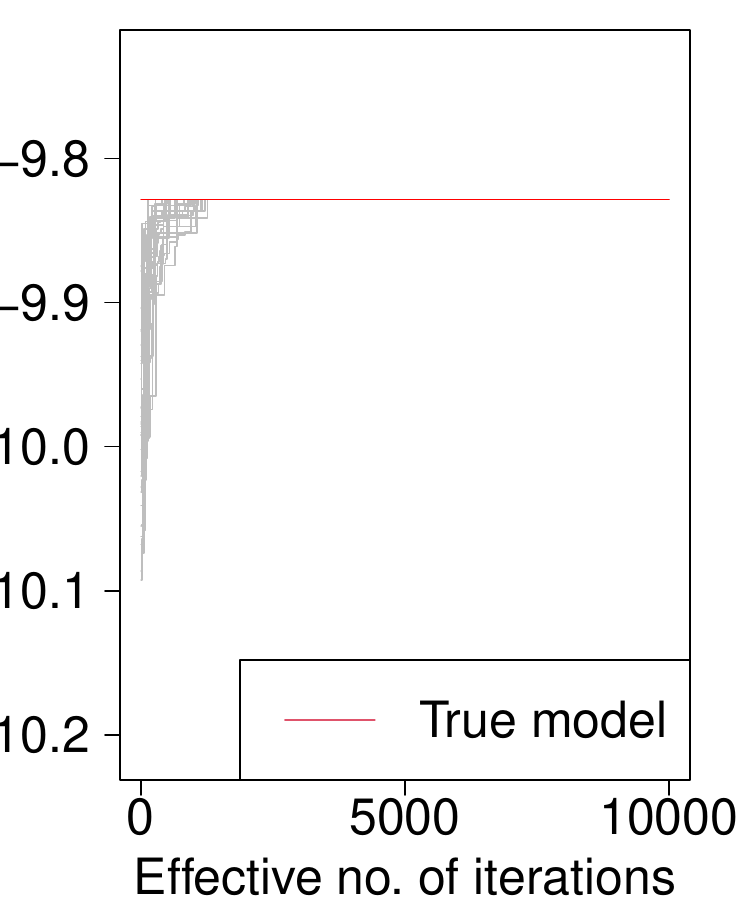}}
    \subfigure[]{\includegraphics[width=0.24\textwidth]{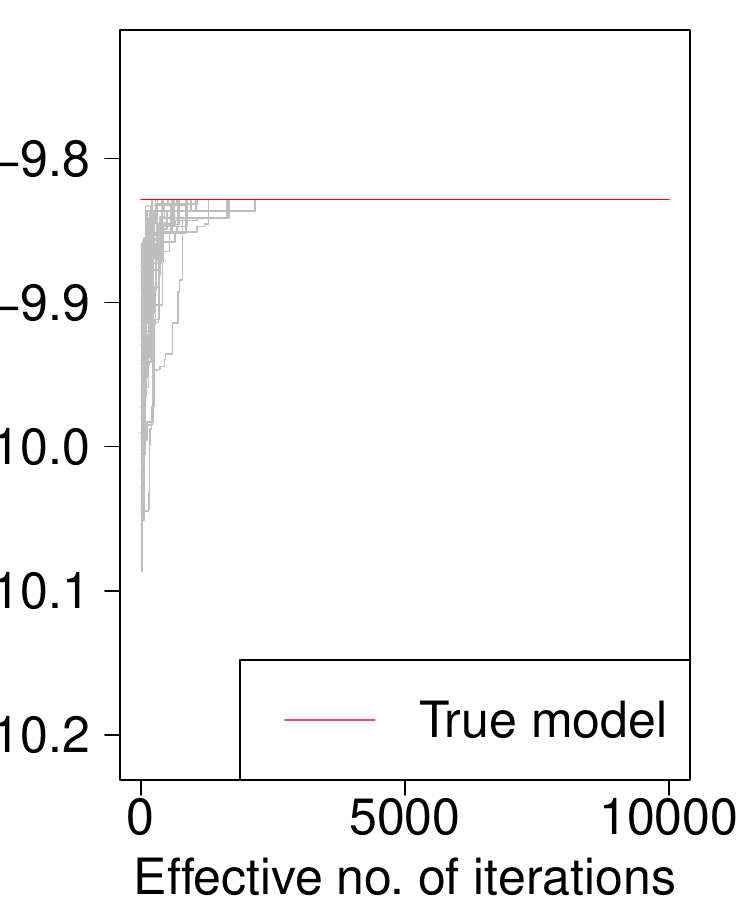}}
    \subfigure[]{\includegraphics[width=0.24\textwidth]{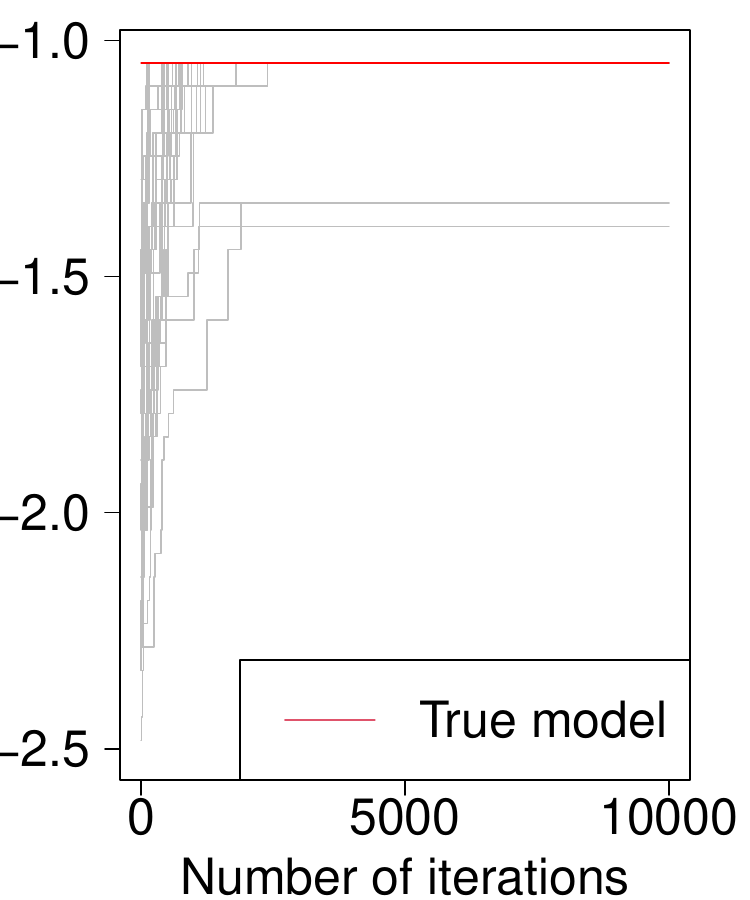}}
    \caption{Log posterior probability $\times 10^{-4}$ versus the  effective number of iterations in 30 MCMC runs with random initialization. The red line gives the log posterior probability of the true ordering $\sigma^*$. Panel (d) is for the minimal I-MAP MCMC with decomposable score. Panels (a), (b), (c) correspond to our method with three types of proposals: (a) adjacent transposition, (b) random transposition, (c) random-to-random shuffle. We have checked that, for our method, all $30 \times 3 = 90$ runs have successfully reached the red line. } 
    \label{fig:mixing}
\end{figure}

\subsection{Performance evaluation}\label{subsec:compare} 
We conduct simulation studies to empirically evaluate the performance of the proposed order MCMC sampler.  
We still use $G^*$ to denote the true $p$-node DAG that governs the data generating process described in~\eqref{eq:ln.str.eq} and let $\Gamma^*$ be its adjacency matrix. 
Let $\hat{G}$ and $\hat{\Gamma}$ denote the corresponding estimators, and for our method, we always use $\hat{\Gamma} = \hat{\Gamma}^{\mathrm{RB}}$ where $\hat{\Gamma}^{\mathrm{RB}}$ is defined in Section~\ref{subsec:RB}. 
Entries of $\hat{\Gamma}$ are edge posterior inclusion probability estimates and thus take value in $[0, 1]$, while  $\Gamma^* \in \{0, 1\}^{p \times p}$.  
We use four performance metrics to evaluate an estimator. The structural Hamming distance (HD) between $G^*$ and $\hat{G}$ is the number of different edges between $G^*$ and $\hat{G}$, which equals $\sum_{i,j} |\Gamma^*_{ij} - \hat{\Gamma}_{ij}|$. 
False negative rate (FNR) and false discovery rate (FDR) are defined as ($\sum_{i,j} \Gamma^*_{ij} (1- \hat{\Gamma}_{ij})) / |G^*| \times 100 \% $ and $(\sum_{i,j} (1 - \Gamma^*_{ij}) \hat{\Gamma}_{ij}) / |\hat{G}| \times 100  \% $, respectively.  
The fourth metric, percentage of flipped edges, is calculated as $(\sum_{i,j} \Gamma^*_{ji} \hat{\Gamma}_{ij}) / |G^*| \times 100\%$.  
We compare our method with two competing algorithms, the top-down method~\citep{chen2019causal} and the algorithm of~\citet{ghoshal2018learning}, and we follow the suggestions given in the two papers to choose the tuning parameters. These two algorithms are reported to have better performance than others. 
For our method, we fix $\alpha = 0.99, \gamma = 0.01, \kappa = 0$, $c_0 = 3$ and run MCMC for $3,000$ iterations for each simulated data set and discard the first 1,500 samples as burn-in.   
We always use the following procedure to generate the true DAG $G^*$. 
We fix the true ordering to be $\sigma^* = (1, \dots, p)$, and for each pair $(i, j)$ such that $i < j$,  we add edge $i \rightarrow j$ to $G^*$ with probability $p_{\mathrm{edge}} = 3/(2p-2)$. Hence, the expected number of edges of $G^*$ is $3p/4$. The DAG $G^*$ is resampled for each simulated data set.

\begin{table}[t!]
\begin{adjustbox}{width=1\textwidth}
\small
\begin{tabular}{cccccccc}
Method   & Signal & \multicolumn{3}{c}{$\mathrm{Uniform}([-1, -0.3] \cup [0.3, 1])$} & \multicolumn{3}{c}{$\mathrm{Uniform}([-1, -0.1] \cup [0.1, 1])$} \\ 
   & $n$      & 100               & 500             & 1000            & 100               & 500             & 1000            \\ 
Proposed 
         & HD     & \text{10.0$\pm$0.5}     & \text{0.8$\pm$0.2}    & 0.1$\pm$0.1      & \text{13.9$\pm$0.7}     & 5.2$\pm$0.3      & \text{3.0$\pm$0.3}   \vspace{-1mm}  \\
         & FNR & \text{33.3$\pm$1.5}     & 1.6$\pm$0.4     & 0.2$\pm$0.1     & \text{47.6$\pm$1.7}     & 16.4$\pm$1.1     & 8.4$\pm$1.0    \vspace{-1mm}  \\
         & FDR    & \text{3.2$\pm$0.8}      & \text{1.4$\pm$0.4}    & \text{0.2$\pm$0.1}    & \text{2.4$\pm$0.6}      & \text{2.6$\pm$0.5}    & \text{2.5$\pm$0.4}   \vspace{-1mm}  \\
         & Flip   & \text{1.9$\pm$0.5}      & \text{1.2$\pm$0.3}    & 0.2$\pm$0.1      & \text{1.1$\pm$0.3}      & \text{2.3$\pm$0.5}    & \text{2.2$\pm$0.4}   \vspace{-1mm}  \\
         & Time   & 13.3$\pm$0.2       & 13.6$\pm$0.2     & 13.3$\pm$0.2     & 12.3$\pm$0.2       & 13.2$\pm$0.2     & 13.4$\pm$0.2      \\
TD      
         & HD     & 11.9$\pm$0.8       & 1.5$\pm$0.4      & 0.3$\pm$0.2      & 16.0$\pm$0.9       & 5.9$\pm$0.6      & 4.1$\pm$0.6     \vspace{-1mm}  \\
         & FNR & 37.8$\pm$1.8       & 2.3$\pm$0.5     & 0.3$\pm$0.2     & 52.5$\pm$1.7       & 15.0$\pm$1.3     & 8.2$\pm$0.9   \vspace{-1mm}   \\
         & FDR    & 6.4$\pm$1.3        & 2.8$\pm$0.7      & 0.7$\pm$0.4      & 7.0$\pm$1.4        & 5.9$\pm$1.0      & 5.8$\pm$1.1   \vspace{-1mm}    \\
         & Flip   & 3.6$\pm$0.8        & 1.8$\pm$0.5      & 0.3$\pm$0.2      & 2.9$\pm$0.7        & 4.1$\pm$0.7      & 4.1$\pm$0.6   \vspace{-1mm}    \\
         & Time   & 0.6$\pm$0.0        & 0.5$\pm$0.0      & 0.5$\pm$0.0      & 0.5$\pm$0.0        & 0.6$\pm$0.0      & 0.5$\pm$0.0      \\
LISTEN  
         & HD     & 12.6$\pm$0.7       & 2.1$\pm$0.5      & 0.9$\pm$0.4      & 16.3$\pm$0.9       & 6.5$\pm$0.6      & 4.2$\pm$0.6    \vspace{-1mm}   \\
         & FNR & 39.5$\pm$1.7       & 3.1$\pm$0.7     & 1.0$\pm$0.4     & 52.2$\pm$1.7       & 15.9$\pm$1.1     & 8.9$\pm$1.0   \vspace{-1mm}   \\
         & FDR    & 7.6$\pm$1.5        & 3.9$\pm$1.1      & 1.8$\pm$0.8      & 8.7$\pm$1.7        & 7.0$\pm$1.1      & 5.5$\pm$1.0   \vspace{-1mm}    \\
         & Flip   & 3.6$\pm$0.7        & 2.6$\pm$0.7      & 1.0$\pm$0.4      & 3.3$\pm$0.7        & 4.6$\pm$0.7      & 4.3$\pm$0.7   \vspace{-1mm}    \\
         & Time   & 0.5$\pm$0.0        & 0.5$\pm$0.0      & 0.6$\pm$0.0      & 0.6$\pm$0.0        & 0.6$\pm$0.0      & 0.5$\pm$0.0      \\ 
\end{tabular}
\end{adjustbox}
\caption{Uniform signal case with $p = 40$. TD and LISTEN refer to the top-down algorithm and the algorithm of~\citet{ghoshal2018learning}, respectively. Each entry gives mean $\pm$ 1 standard error. Time is measured in seconds.}
\label{table:uniform}
\end{table}

We first generate the data from the structural equation models given in~\eqref{eq:ln.str.eq}. We fix $p = 40$, set $\omega^* = 1$, and draw the edge weight  $B^*_{ij}$ for each edge $i \rightarrow j$ in $G^*$ independently from some distribution $F$.   
We let sample size $n$ be $100, 500$ or $1,000$, and repeat 30 times for each choice. 
In Table~\ref{table:uniform},  we present the result for $F$ being the uniform distribution on $[-1, -0.3]\cup [0.3, 1]$ and that for $F$ being the uniform distribution on $[-1, -0.1]\cup [0.1, 1]$. 
The result for $F$ being the standard Gaussian distribution is displayed in Section~\ref{subsec:compare_supp} in the supplementary material. 
Table~\ref{table:uniform} shows that our method outperforms the other two methods in all settings by any of the four performance metrics, and in most cases, our method is better by a margin of at least one standard error.

\begin{table}[!h]
\centering
\begin{tabular}{ccclcccc}
$p$   & $n$ & $d$     &          & FNR         & FDR        & Flip       & Time          \vspace{1mm}  \\
7   & 60                    & 1.549 &&  22.9$\pm$3.8 & 8.9$\pm$2.5 & 4.8$\pm$1.4 & 2.3$\pm$0.1     \\
14  & 90                    & 1.897 && 11.3$\pm$1.8 & 2.7$\pm$0.8 & 2.2$\pm$0.7 & 3.7$\pm$0.1     \\
28  & 120                   & 2.191 &&  4.6$\pm$0.7  & 0.6$\pm$0.2 & 0.4$\pm$0.2 & 6.2$\pm$0.1     \\
56  & 150                   & 2.449 && 2.7$\pm$0.3  & 0.5$\pm$0.2 & 0.4$\pm$0.2 & 15.1$\pm$0.2    \\
112 & 180                   & 2.683 &&  1.2$\pm$0.2  & 0.2$\pm$0.1 & 0.1$\pm$0.0 & 64.4$\pm$1.2    \\
224 & 210                   & 2.898 &&  0.8$\pm$0.1  & 0.2$\pm$0.1 & 0.1$\pm$0.0 & 377.2$\pm$5.5   \\
448 & 240                   & 3.098 &&  0.5$\pm$0.1  & 0.1$\pm$0.1 & 0.1$\pm$0.0 & 2896.4$\pm$48.1
\end{tabular}
\caption{Simulation under a high-dimensional regime. Each entry gives mean $\pm$ 1 standard error.  Time is measured in seconds.} 
\label{table:highdim}
\end{table}

Next, we examine the performance of our method with varying $n$ and $p$. To emulate a high-dimensional asymptotic regime where $n$ grows linearly and $p$ increases exponentially, we consider $7$ settings where $n = 30(k+1)$ and $p =7 \cdot 2^{k-1}$ in the $k$-th setting. When $k=6$ or $7$, we have $p > n$. We generate $G^*$ with $p_{\mathrm{edge}} =  d / (p-1)$, where $d = 0.2 \sqrt{n}$ is the expected number of neighbors for each node. We sample the edge weight $B^*_{ij}$ for each $i \rightarrow j$ in $G^*$ uniformly from $[-1,-0.5] \cup [0.5, 1]$ and set the error variance $\omega^* = 1$. We use the same values for $\alpha, \gamma, \kappa, c_0$ and run 3,000 MCMC iterations with 1,500 discarded samples as burn-in. The result of 30 replicates is summarized in Table~\ref{table:highdim}, from which we see that FNR, FDR and flip rates all decrease as $p$ increases.   
Further, the method is considerably scalable as it completes 3,000 iterations within an hour even when $p = 448$.

Lastly, we generate $X$ by assuming each $\se_j$ in the structural equation models~\eqref{eq:ln.str.eq} has variance $\omega_j$; thus, the equal variance assumption is violated.    
We repeat the simulation study presented in the left column of Table~\ref{table:uniform} by sampling $\omega_j$ independently from the uniform distribution on $[0.7, 1.3]$ for each $j$, and we observe that the advantage of the proposed method is more significant; see Section~\ref{subsec:compare_supp} in the supplementary material for the result. 
To further examine how the heterogeneity of error variances affects the performance of our method, we fix  $n = 500$ and $p = 40$, and sample $\omega_j$ from $\mathrm{Uniform}([1-b, 1+b])$ for $b = 0, 0.1, \dots, 0.9$. We plot the distribution of the HD metric over 30 replicates against $b$ in Fig.~\ref{fig:misspecification}.  
The proposed order MCMC sampler again performs uniformly better than competing algorithms. Besides, our method appears to be more robust, especially when $b$ is not too large,  which is probably due to the use of model averaging in Bayesian posterior inference.  

\begin{figure}[t!]
    \centering
    \includegraphics[width=1\textwidth]{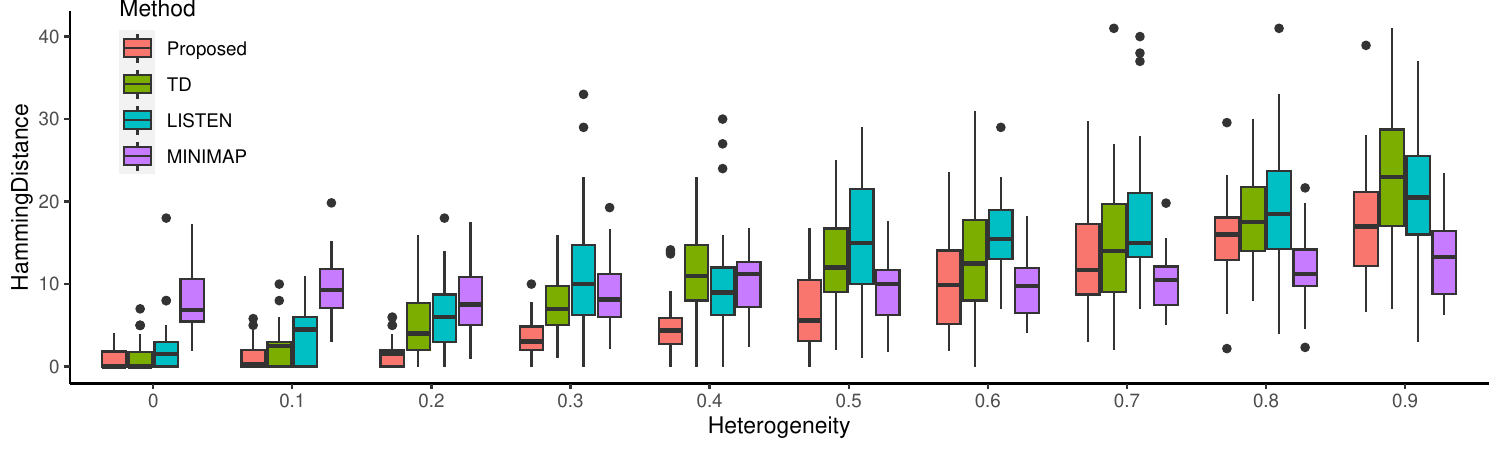}
    \caption{Boxplots for heterogeneous error variance case with $n = 500, p = 40$. We sample error variances from $\mathrm{Uniform}([1 -b, 1+b])$ for $b = 0, 0.1, \dots, 0.9$ and nonzero edge weights from $\mathrm{Uniform}([-1, -0.3] \cup [0.3, 1])$. The $x$-axis indicates the heterogeneity parameter $b$, and the $y$-axis represents the Hamming distance between the estimated DAG and $G^*$. 
    MINIMAP is the minimal I-MAP MCMC that uses score~\eqref{eq:equivalent_score}, and thus it is score equivalent.}
    \label{fig:misspecification} 
\end{figure}

\subsection{Quantification of the bias caused by the equal variance assumption}\label{subsec:bias} 

When the true data generating process does not satisfy the equal variance assumption, our method is expected to have some bias. This is confirmed in Fig.~\ref{fig:misspecification}, from which we see that HD increases with the heterogeneity of error variances. 
For comparison, we have also included in Fig.~\ref{fig:misspecification} the score-equivalent minimal I-MAP MCMC with  score given by~\eqref{eq:equivalent_score}. 
Since this score does not encode the equal variance assumption, the minimal I-MAP MCMC sampler  cannot determine the direction of an edge if reversing it yields another Markov equivalent DAG. 
This can be clearly seen from Fig.~\ref{fig:misspecification}: the performance of the minimal I-MAP MCMC does not change significantly with the heterogeneity level $b$, and it always has HD away from zero.    
When the heterogeneity level $b = 0.6$, which implies that the ratio between the maximum and minimum error variances can be as large as $4$, the minimal I-MAP MCMC has a comparable performance to our method, and when $b \geq 0.7$, the minimal I-MAP MCMC performs better.

\begin{table}[t]
\centering
\begin{tabular}{cccccccc}
Method         &      & $b =$ 0                   & $b =$ 0.3         & $b =$ 0.5         & $b =$ 0.7         & $b =$ 0.9   &  $\mathrm{IG}(3, 2)$    \\\vspace{-1mm} 
Proposed & HD   & 0.1$\pm$0.0   & 0.5$\pm$0.2  & 1.6$\pm$0.4  & 2.1$\pm$0.5  & 2.6$\pm$0.5 &  3.3$\pm$0.8\\\vspace{-1mm} 
         & SHD  & 0.0$\pm$0.0   & 0.1$\pm$0.0  & 0.3$\pm$0.1  & 0.4$\pm$0.1  & 0.4$\pm$0.1 & 0.5$\pm$0.2 \\
         & Flip & 1.1$\pm$0.7    & 4.0$\pm$1.5  & 10.0$\pm$2.4 & 13.4$\pm$3.0 & 18.5$\pm$3.9 & 21.1$\pm$4.1 \\\vspace{-1mm} 
MINIMAP  & HD   & 3.0$\pm$0.3    & 2.5$\pm$0.2  & 2.6$\pm$0.3  & 2.6$\pm$0.2  & 2.7$\pm$0.2 & 2.6$\pm$0.2 \\\vspace{-1mm} 
         & SHD  & 0.5$\pm$0.1    & 0.3$\pm$0.1  & 0.4$\pm$0.1  & 0.4$\pm$0.1  & 0.4$\pm$0.1 & 0.3$\pm$0.1\\\vspace{-1mm} 
         & Flip & 23.0$\pm$2.9  & 22.3$\pm$3.1 & 23.4$\pm$3.2 & 23.7$\pm$3.2 & 24.7$\pm$3.1 & 23.7$\pm$3.0
\end{tabular}
\caption{Analysis of the posterior distributions for $p=7$.  MINIMAP uses score~\eqref{eq:equivalent_score}, and thus it is score equivalent.
The posterior inclusion probabilities of all edges are calculated exactly for both methods. The error variances are sampled from  $\mathrm{Uniform}([1-b, 1+b])$ or $\text{inverse-Gamma}(3,2)$.  
Each entry gives mean $\pm$ 1 standard error.}
\label{table:bias}
\end{table}

In order to better quantify the bias of our method,   we exactly calculate the matrix $\Gamma$ whose $(i,j)$-th element gives the  posterior inclusion probability of the edge $i\rightarrow j$. We fix $p = 7$ so that we can enumerate all possible orderings, and the exact posterior inclusion probabilities corresponding to scores \eqref{eq:post_score} and \eqref{eq:equivalent_score} can be calculated as \vspace{-3mm}
\begin{align*}
    \Gamma_{ij} = \sum_{\sigma \in \bbS^p}  \frac{e^{\phi(\hat{G}_\sigma)}}{\sum_{\sigma \in \bbS^p} e^{\phi(\hat{G}_\sigma)} }\ind(\{i \rightarrow j\} \in \hat{G}_\sigma),  \quad
    \Gamma_{ij}^{\mathrm{eq}} = \sum_{\sigma \in \bbS^p}  \frac{e^{\phi_\mathrm{eq}(\hat{G}_\sigma^{\mathrm{M}})}}{\sum_{\sigma \in \bbS^p} e^{\phi_\mathrm{eq}(\hat{G}_\sigma^{\mathrm{M}})} }\ind(\{i \rightarrow j\} \in \hat{G}_\sigma^{\mathrm{M}}),
\end{align*}
where $\hat{G}_\sigma$ and $\hat{G}_\sigma^{\mathrm{M}}$ are the estimated DAGs given an ordering $\sigma$ by our method and the minimal I-MAP method, respectively. We set $n = 100 \, p$ and $p_\mathrm{edge} = 3/(2p-2)$, sample nonzero edge weights from $\mathrm{Uniform}([-1, -0.3] \cup [0.3, 1])$,  and sample error variances from $\mathrm{Uniform}([1-b, 1+b])$ 
and the inverse gamma distribution $\mathrm{IG}(a_1, a_2)$. 
We set $a_1 = 3$, which is the smallest integer that yields a finite variance, and set $a_2 = 2$ so that  the expected value   equals 1. 
We generate 30 replicates for each simulation setting. 
In Table~\ref{table:bias}, we report  three metrics, HD, Flip, and the Hamming distance for skeletons (SHD); recall that the skeleton of a DAG is the undirected graph obtained by undirecting all edges. SHD is consistently close to zero throughout the simulation settings, which implies that the true skeleton is correctly identified by both methods regardless of the heterogeneity level $b$. Notably, in all the settings considered, even when $b = 0.9$ or in the inverse-gamma case, our method has a smaller flip rate than the minimal I-MAP method. That is, imposing the equal variance assumption does not increase the flip rate compared to a score-equivalent approach, which suggests that the computational gain resulting from this assumption is essentially obtained for free in this example.

\begin{figure}[t]
    \centering
    \includegraphics[width=0.60\textwidth]{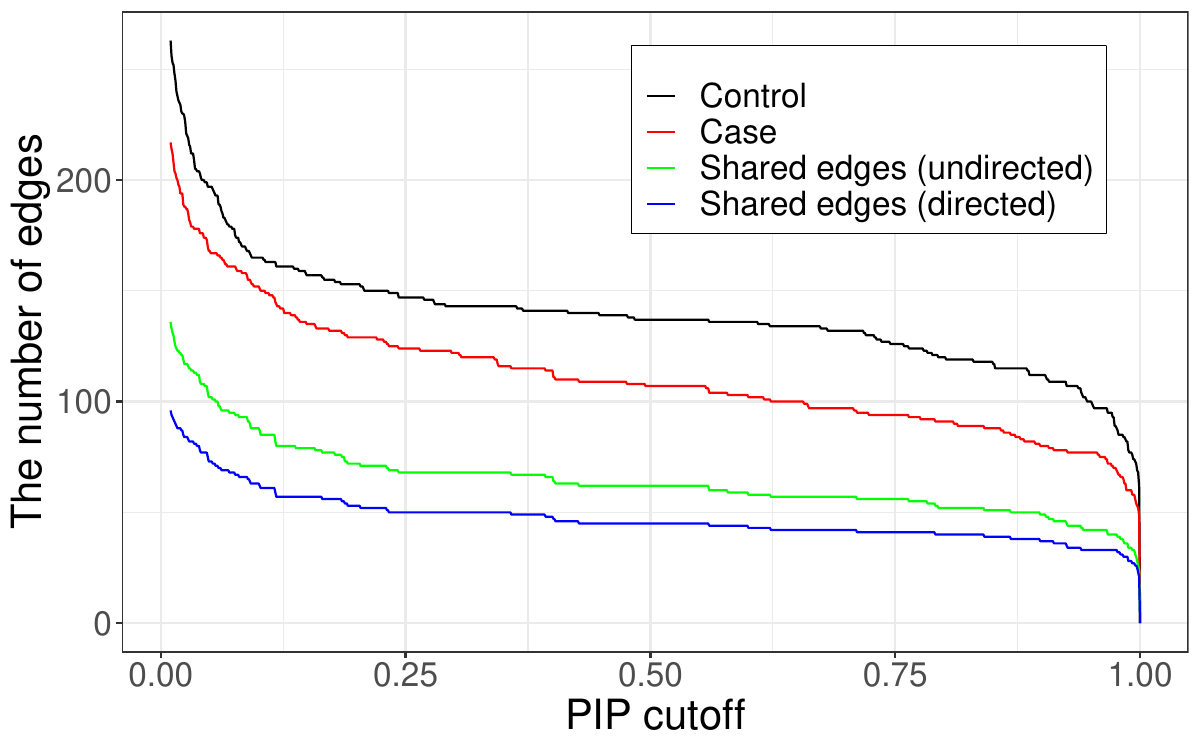}
    \caption{Result of the proposed method for the real data analysis. Given  $\hat{\Gamma}^{\rm{RB}}_{ij}$, we infer the edge $i \rightarrow j$ exists in the DAG if $\hat{\Gamma}^{\rm{RB}}_{ij} > c$ where $c$ is the cutoff of posterior inclusion probability. For each $c$, we count the number of edges occurring in the DAG for control samples (black), the number of edges in the DAG for case samples (red), the number of edges with edge direction ignored in both DAGs (green), and the number of directed edges in both DAGs (blue).}
    \label{fig:real_data}
\end{figure}
 
\section{Single-cell real data analysis}\label{sec:real}
We use a real data set from the single-cell RNA database for Alzheimer's disease, known as scREAD~\citep{jiang2020scread}, to illustrate the advantages of the proposed algorithm. We only consider genes involved in the brain-derived neurotrophic factor signaling pathway and expressed in the layer 2--3 glutamatergic neurons. 
The goal is to learn two DAG models, one from case samples and the other from control samples, and then inspect how different the two DAGs are.  
To mitigate potential batch effects, we only use samples that are generated at similar sequencing depths by checking the total and median expression level across all genes for each sample cell, which results in $n_0 = 2300$ control samples and $n_1 = 1666$ case samples. 
Next, we select the genes in this pathway expressed in at least half of the samples in both data sets, which yields $p = 73$. The data matrices for both case and control samples are obtained by performing normalization of log-transformed expression levels~\citep[Chapter 6]{lee2007analysis}.  

For each of the two data sets, we run the proposed order MCMC sampler with iterative top-down initialization for $2\times 10^5$ MCMC iterations, and then discard the first $10^5$ iterations as burn-in. It only takes about 480 seconds for each data set. 
To infer the edge posterior inclusion probabilities, we use the conditioning scheme described in Section~\ref{subsec:RB}, and the result is presented in Fig.~\ref{fig:real_data}. 
The two DAGs learned from the data share a significant proportion of  undirected edges, and more importantly, most of these edges have the same direction in both data sets: the gap between the blue and green lines in Fig.~\ref{fig:real_data} is narrow. In other words, the orderings of the variables learned from the two data sets are very similar.  
The true ordering of the variables is hard to determine as there may even exist feedback loops among the selected genes, and we do not know to what extent the true model satisfies the equal variance assumption.   But Fig.~\ref{fig:real_data}  suggests that the use of this score is very reasonable from a pragmatic perspective. 
For comparison, we have also tried the minimal I-MAP MCMC with the decomposable score given in~\eqref{eq:equivalent_score}, which represents a state-of-the-art score equivalent Bayesian structure learning procedure, and the result is shown in Section~\ref{subsec:realdata_supp} of the supplementary material. Given the same initialization and same number of MCMC and burn-in iterations, our method yields a higher proportion of shared directed edges than the minimal I-MAP MCMC. For example, with the posterior inclusion probability cutoff being 0.5, for our method 41\% of the edges in the inferred DAG for case samples also occur in the same direction in the DAG for control samples, while this ratio drops to 26\% for the minimal I-MAP MCMC.  

To provide further evidence for the advantage of the proposed structure learning method, we repeat the above analysis 30 times using both our sampler with non-decomposable score and the minimal I-MAP MCMC with decomposable score. Then, for each pair $(i, j)$ with $i \neq j$, we calculate the Gelman-Rubin scale factor~\citep{gelman1992inference} using $\Gamma_{ij}$, which is equal to $1$ if $i \rightarrow j$ is in the sampled DAG and $0$ otherwise. Thus, we get $p(p - 1)$ Gelman-Rubin statistics for each data set, one for each directed edge.  We find that 99.7\% of the directed edges in the two DAGs have Gelman-Rubin statistics lower than $1.1$ for our method, and 93.7\%  for the minimal I-MAP MCMC; we use the threshold $1.1$ since this is the most common choice according to~\citet{vats2021revisiting}. Moreover, for the minimal I-MAP MCMC,  
Gelman-Rubin statistics of 90 directed edges yield infinity, which means that the within-chain variance of $\Gamma_{ij}$ is zero for all 30 runs, but the between-chain variance is nonzero; that is, in some runs the edge $i \rightarrow j$ is selected in every iteration excluding burn-in, while in the other runs the edge $i \rightarrow j$ is never selected.  
This observation again illustrates that for a score equivalent procedure, traversing equivalence classes can sometimes be very difficult and cause slow mixing of MCMC samplers.  
In contrast,  the maximum Gelman-Rubin statistic for our method is  $2.56$ for the control data set and $1.26$ for the case data set.

\section*{Acknowledgement}
The authors would like to thank the anonymous reviewers for their comments which helped improve the paper, and thank Prof. Mohsen Pourahmadi and Yongjian Yang for helpful discussions. 
HC and QZ were supported in part by NSF grant DMS-2245591.   
All authors were supported by the Triads for Transformation Grant of Texas A\&M University.

\bibliographystyle{plainnat}
\bibliography{paper-ref}

\setcounter{section}{0}
\renewcommand\thesection{\Alph{section}}
\renewcommand\thesubsection{\Alph{section}.\arabic{subsection}}

\setcounter{equation}{16}
\setcounter{algocf}{2}

\newpage
\begin{center}
\LARGE{Supplementary material}
\end{center}

\section{Algorithms}\label{subsec:alg}

\subsection{Overview of the proposed method}
We outline the proposed order MCMC algorithm in Algorithm~\ref{alg:full}. 
For all displayed algorithms, we assume the data matrix $X$ and model parameters $(c_0, \gamma, \alpha, \kappa, \din)$ are given.
The \texttt{R} code for the proposed method and simulation studies can be found at \url{https://github.com/hwchang1201/bayes.eqvar}.

\begin{algorithm}[h!]
\caption{Bayesian order-based structure learning} \label{alg:full}
\KwInput{Number of MCMC iterations $T$, neighborhood function $\cN = \cN_{\mathrm{adj}}, \cN_{\rm{rtr}}$ or $\cN_{\mathrm{rrs}}$, a DAG selection procedure $\hat{G} \colon \bbS^{p} \rightarrow \cG_p$ (e.g. Algorithm~\ref{alg:dagwise})}
$\sigma^{(0)} \leftarrow\, \hat{\sigma}^{\mathrm{ITD}}$  
\tcp{$\hat{\sigma}^{\mathrm{ITD}}$ is the output of Algorithm~\ref{alg:ITD}}
$G^{(0)} \leftarrow \, \hat{G} ( \sigma^{(0)})$  \\ 
\For{$t = 1, \dots, T$}
{ 
Draw $\sigma$ uniformly from $\cN(\sigma^{(t-1)}))$\\
Draw $u \sim \mathrm{Uniform}(0,1)$ \\ $a \leftarrow\, \min(\post(\sigma)/\post(\sigma^{(t-1)}),1)$   \\
\If{$u \leq a $}{
$ \sigma^{(t)} \leftarrow\, \sigma $ \\ 
$ G^{(t)} \leftarrow\, \hat{G}( \sigma )$ 
}\Else{
$ \sigma^{(t)} \leftarrow \, \sigma^{(t - 1)}$ \\
$ G^{(t)} \leftarrow\, G^{(t - 1)} $
}
$\hat{\Gamma}^{(t)} = \hat{\Gamma}( \sigma^{(t)}, G^{(t)})$ \tcp{See~\eqref{eq:RB} for the definition of $\hat{\Gamma}$}
}
\KwOutput{``Rao-Blackwellized'' adjacency matrices $\{ \hat{\Gamma}^{(t)} \}_{t=1}^T$ }
\end{algorithm}

\subsection{Forward-backward algorithms  with non-decomposable scores}\label{subsec:stepwise}
Recall the posterior score of a DAG  given in~\eqref{eq:post_score}. Define the nodewise score at node $j$ by 
\begin{align}\label{eq:nodewise_step}
    \phi_j(S, \SSE_{\minus j}) = -|S|  \log \left\{ p^{c_0}\sqrt{ (1 + \alpha/\gamma)} \right\}  -  \frac{\alpha p n + \kappa}{2} \log  \left(  \SSE_{\minus j}  +  X_j^\T \oproj_S X_j \right), 
\end{align}
for $S \subseteq P_j$, where $\SSE_{\minus j}$ denotes the total residual sum of squares of nodes other than $j$, and $P_j$ is the potential parent set defined in \eqref{eq:potential}. 
Hence, given $\SSE_{\minus j}$, we can use the standard forward-backward stepwise algorithm to select the parent set of node $j$; this is described in Algorithm~\ref{alg:nodewise_step}. 
We allow using two different estimates for $\SSE_{\minus j}$, one for the forward phase and the other for the backward phase; the reason will become clear in the next subsection.  

\begin{algorithm}[t!]
\caption{Nodewise forward-backward selection} \label{alg:nodewise_step}
\DontPrintSemicolon
\KwInput{Node index $j \in [p]$, a set of potential parent nodes $P_j \subset [p]$, two estimates for the total residual of sum of squares of other nodes $\SSE_{\minus j}$, $\SSE'_{\minus j}$}
\textbf{Forward phase:} $S_\mathrm{f} \leftarrow\, \emptyset $ \\ 
\For{$k = 1, \dots, |P_j|$}{
$\ell_0 \leftarrow\, \argmax_{ \ell \in P_j \setminus S_\mathrm{f} } \phi_j( S_\mathrm{f} \cup \{\ell \},  \SSE_{\minus j} )$  \\
$\tilde{S}_\mathrm{f} \leftarrow\, S_\mathrm{f} \cup \{ \ell_0 \}$

\If{ $\phi_j ( \tilde{S}_\mathrm{f}, \SSE_{\minus j }) \geq \phi_j (S_\mathrm{f}, \SSE_{\minus j} )$}{
$S_\mathrm{f} \leftarrow\, \tilde{S}_\mathrm{f}$
}\Else{\textbf{break}}
}
\textbf{Backward phase:} $S_\mathrm{b} \leftarrow\, S_\mathrm{f} $ \\ 
\For{$k = 1, \dots, |S_\mathrm{f}|$}{
$\ell_1 \leftarrow\, \argmax_{ \ell \in S_\mathrm{b} } \phi_j( S_\mathrm{b} \setminus \{\ell \},  \SSE'_{\minus j} )$  \\
$\tilde{S}_\mathrm{b} \leftarrow\, S_\mathrm{b} \setminus \{ \ell_1 \}$\\
\If{ $\phi_j ( \tilde{S}_\mathrm{b}, \SSE'_{\minus j }) \geq \phi_j (S_\mathrm{b}, \SSE'_{\minus j} )$}{
$S_\mathrm{b} \leftarrow\, \tilde{S}_\mathrm{b}$
}\Else{\textbf{break}}
}
\KwOutput{A parent set $S_\mathrm{b}$ of node $j$}
\end{algorithm}

\begin{algorithm}[t!]
\caption{Forward-backward DAG selection} 
\label{alg:dagwise}
\KwInput{$\sigma \in \bbS^p$} 
$G \leftarrow\, \text{empty DAG} $\\
\tcp{Forward phase}
\While{$1$}{
$(i_0, j_0) \leftarrow\, \argmax_{i, j \colon \sigma^{-1}(i) < \sigma^{-1}(j), \{i \rightarrow j \}  \notin G} \phi ( G \cup \{ i \rightarrow j\}  )$ \\
$\tilde{G} \leftarrow\, G \cup \{ i_0 \rightarrow j_0 \}$\\
\If{$\phi(\tilde{G}) \geq \phi(G)$}{
$G \leftarrow\, \tilde{G}$
}\Else{\textbf{break}}
}
\tcp{Backward phase}
\While{$1$}{
$(i_1, j_1) \leftarrow\, \argmax_{i, j \colon \{i \rightarrow j \}  \in G} \phi ( G \setminus \{ i \rightarrow j\}  )$ \\
$\tilde{G} \leftarrow\, G \setminus \{ i_1 \rightarrow j_1 \}$\\
\If{$\phi(\tilde{G}) \geq \phi(G)$}{
$G \leftarrow\, \tilde{G}$
}\Else{\textbf{break}}
}
\KwOutput{DAG $G$}
\end{algorithm}

\subsection{Implementation of order MCMC with non-decomposable scores}\label{subsubsec:effective} 
For our model, the main computational challenge is that a local change to the ordering $\sigma$ can cause some global changes to the maximum a posteriori DAG estimator $\hat{G}_\sigma^{\MAP}$, due to the use of the non-decomposable posterior score.  
Were the posterior score decomposable,  whenever we use an adjacent transposition to move from $\sigma$ to $\sigma' =  \sigma \circ (i, i+1)_{\mathrm{c}}$, we know that $\Pa_j(\hat{G}_\sigma^{\MAP}) = \Pa_j(\hat{G}_{\sigma'}^{\MAP})$ for any $j \notin \{ \sigma(i), \sigma(i + 1) \}$,  since maximizing the score of the entire DAG is equivalent to maximizing the local score at each node separately. 

We describe a strategy for implementing local moves on $\bbS^p$ for our model, which is as efficient as with a decomposable posterior score. 
We start by proving two monotone properties of the nodewise score defined in~\eqref{eq:nodewise_step}.   
\begin{lemma}\label{lm:nodewise.range}
Let $\phi_j$ be as given in~\eqref{eq:nodewise_step}, $S \subset [p] \setminus \{j\}$, $k \notin S \cup \{j\}$ and $a > 0$. 
\begin{enumerate}[(i)]
    \item If $\phi_j(S \cup \{k\}, a) > \phi_j(S, a)$, then $\phi_j(S \cup \{k\}, b) > \phi_j(S, b)$ for any $0 < b < a$. 
    \item If $\phi_j(S \cup \{k\}, a) < \phi_j(S, a)$, then $\phi_j(S \cup \{k\}, b) <  \phi_j(S, b)$ for any $b > a$. 
\end{enumerate} 
\end{lemma}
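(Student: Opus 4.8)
The statement is a monotonicity-in-$\SSE_{\minus j}$ property of the nodewise score, so the natural approach is to isolate the only part of $\phi_j$ that depends on both $S$ and the scalar argument, and analyze how the comparison changes as we vary that scalar. Writing $r_S = X_j^\T \oproj_S X_j$ and $r_{S \cup \{k\}} = X_j^\T \oproj_{S \cup \{k\}} X_j$, note first that $r_{S\cup\{k\}} \le r_S$ always, since adding a regressor cannot increase the residual sum of squares. The inequality $\phi_j(S\cup\{k\}, a) > \phi_j(S,a)$ unpacks, after canceling common terms and rearranging, to
\begin{align*}
    \frac{\alpha p n + \kappa}{2} \log\!\left(\frac{a + r_S}{a + r_{S\cup\{k\}}}\right) > \log\!\left\{p^{c_0}\sqrt{1 + \alpha/\gamma}\right\},
\end{align*}
i.e. the ``variance-reduction gain'' on the log scale exceeds the fixed complexity penalty for one extra edge. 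So the whole lemma reduces to showing that the left-hand ratio $\log\big((a + r_S)/(a + r_{S\cup\{k\}})\big)$ is \emph{decreasing} in $a$ on $(0,\infty)$.

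\textbf{Key step.} Set $t = r_S - r_{S\cup\{k\}} \ge 0$ and consider $g(a) = \log(a + r_S) - \log(a + r_{S\cup\{k\}}) = \log\big(1 + t/(a + r_{S\cup\{k\}})\big)$. Since $t \ge 0$ and $a \mapsto a + r_{S\cup\{k\}}$ is strictly increasing and positive, the argument $1 + t/(a + r_{S\cup\{k\}})$ is non-increasing in $a$, hence $g$ is non-increasing; and if $t > 0$ it is strictly decreasing. Therefore: if $g(a)$ already exceeds the complexity threshold, then $g(b) \ge g(a)$ for any $b < a$, giving part (i); and if $g(a)$ falls below the threshold, then $g(b) \le g(a)$ for any $b > a$, giving part (ii). (In the degenerate case $r_S = r_{S\cup\{k\}}$, i.e. $k$ adds nothing, $g \equiv 0$ and the left-hand side of the inequality above is $0$, which cannot exceed the strictly positive threshold, so the hypothesis of (i) is vacuous; the hypothesis of (ii) holds for all $a$ and the conclusion is immediate since $g$ is constant.) One just has to be slightly careful that $\alpha p n + \kappa > 0$ so that multiplying through by it preserves the direction of the inequality, which holds since $\alpha, p, n > 0$ and $\kappa \ge 0$.

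\textbf{Main obstacle.} There is no real obstacle here — the content is a one-variable monotonicity fact — so the only thing to get right is the bookkeeping: correctly canceling the $|S|$-dependent penalty terms (they differ by exactly one edge's worth, $\log\{p^{c_0}\sqrt{1+\alpha/\gamma}\}$, between $S$ and $S\cup\{k\}$), keeping track of the sign when the $\log$ is on the denominator versus numerator, and handling the non-strict ($t = 0$) corner case so that the stated strict inequalities in the conclusions are still justified by the strict inequalities in the hypotheses. I would present it in the three lines above: reduce to comparing $g(a)$ with a constant, observe $g$ is (strictly) monotone via the $\log(1 + t/(\cdot))$ rewriting, and read off (i) and (ii).
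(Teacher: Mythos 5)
Your proposal is correct and follows essentially the same route as the paper's proof: cancel the common terms, reduce the comparison to $\log\bigl((a + r_S)/(a + r_{S\cup\{k\}})\bigr)$ versus the fixed per-edge penalty $\log\{p^{c_0}\sqrt{1+\alpha/\gamma}\}$ scaled by $2/(\alpha p n + \kappa)$, and conclude by monotonicity of that log-ratio in $a$. Your extra details (the $\log(1 + t/(a + r_{S\cup\{k\}}))$ rewriting and the $t=0$ corner case) simply make explicit what the paper leaves as an observation.
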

\begin{proof}
To simplify the notation, let $K_0 = \log \{ p^{c_0}\sqrt{ (1 + \alpha/\gamma)} \}$ and $K_1 = (\alpha p n + \kappa) / 2$. A routine calculation shows that $\phi_j(S \cup \{k\}, a) > \phi_j(S, a)$ if and only if 
\begin{align*}
    \log \frac{a +  X_j^\T \oproj_{S} X_j }{a +   X_j^\T \oproj_{S  \cup \{k\}  } X_j} > \frac{K_0}{K_1}. 
\end{align*}
The claim follows by observing that the left-hand side is monotonically decreasing in $a$. 
\end{proof}

Motivated by Lemma~\ref{lm:nodewise.range}, we use the following procedure to find $\hat{G}_\sigma^{\MAP}$ for a given $\sigma \in \bbS^p$. First, for $j = 1, \dots, p$, we find a lower bound and an upper bound on $\SSE_j$ such that both bounds do not depend on $\sigma$. An obvious choice for the upper bound on $\SSE_j$ is given by  $\umax_{j} = X_j^\T X_j$, and if $p < n$, a lower bound is given by $\umin_{j} = X_j^\T \oproj_{[p] \setminus \{j\}} X_j$ (we assume $\umin_{j}$ is strictly positive). 
Next, for $j=1, \dots, p$, we apply Algorithm~\ref{alg:nodewise_step} with input $(j, P_j, \sum_{k \neq j} \umin_{k},  \sum_{k \neq j} \umax_{k})$; that is, in the forward stage, we let the algorithm  select as many parent nodes as possible by using minimum estimates for the residual sum of squares of other nodes, and in the backward stage, we let the algorithm remove as many nodes as possible. 
For all nodes, save the search paths of Algorithm~\ref{alg:nodewise_step}, including  the changes in residual sum of squares in each step, in the internal memory, and let $\overline{S}_j^\sigma$ denote the parent set of node $j$ at the end of the forward stage. 
Denote by $\overline{G}_\sigma$ the DAG such that $\Pa_j(\overline{G}_\sigma) = \overline{S}_j^\sigma$ for each $j$. Now to find $\hat{G}_\sigma^{\MAP}$, we simply apply the backward stage of Algorithm~\ref{alg:dagwise} by initializing the DAG to $\overline{G}_\sigma$. This  can be done very efficiently by using the search paths of  Algorithm~\ref{alg:nodewise_step}; no calculation of residual sum of squares is needed.  

The above procedure enables an efficient updating algorithm for finding $\hat{G}_\sigma^{\MAP}$ when we move locally on the ordering space $\bbS^p$.  For example, consider moving from $\sigma$ to $\sigma' =   \sigma \circ (i, i+1)_{\mathrm{c}}$. We only need to apply Algorithm~\ref{alg:nodewise_step} at nodes $\sigma(i)$ and $\sigma(i+1)$, and then perform backward DAG selection using the saved search paths of nodewise forward-backward selection. The computational time of the DAG selection step is negligible compared to that of Algorithm~\ref{alg:nodewise_step}. 
Note that the parent sets of nodes other than $\sigma(i)$ and $\sigma(i+1)$ may change.  

\subsection{Three random walk proposals}\label{subsubsec:proposal} 

Figure~\ref{fig:proposal} describes  (1) adjacent transposition, (2) random transposition, and (3) random-to-random shuffle, given the current topological ordering $\sigma$. The random transposition $\sigma \circ (i, j)_{\mathrm{c}}$ interchanges the $i$-th and the $j$-th elements of $\sigma$ while keeping the others unchanged.  The adjacent transposition is a special case of random transposition where $i$ and $j$ are adjacent, i.e., $|i-j| = 1$. The random-to-random shuffle $\sigma \circ \xi(i, j)$ inserts the $i$-th element of $\sigma$ to the $j$-th position.

\begin{figure}[h!]
    \centering
    \includegraphics[width=0.96\textwidth]{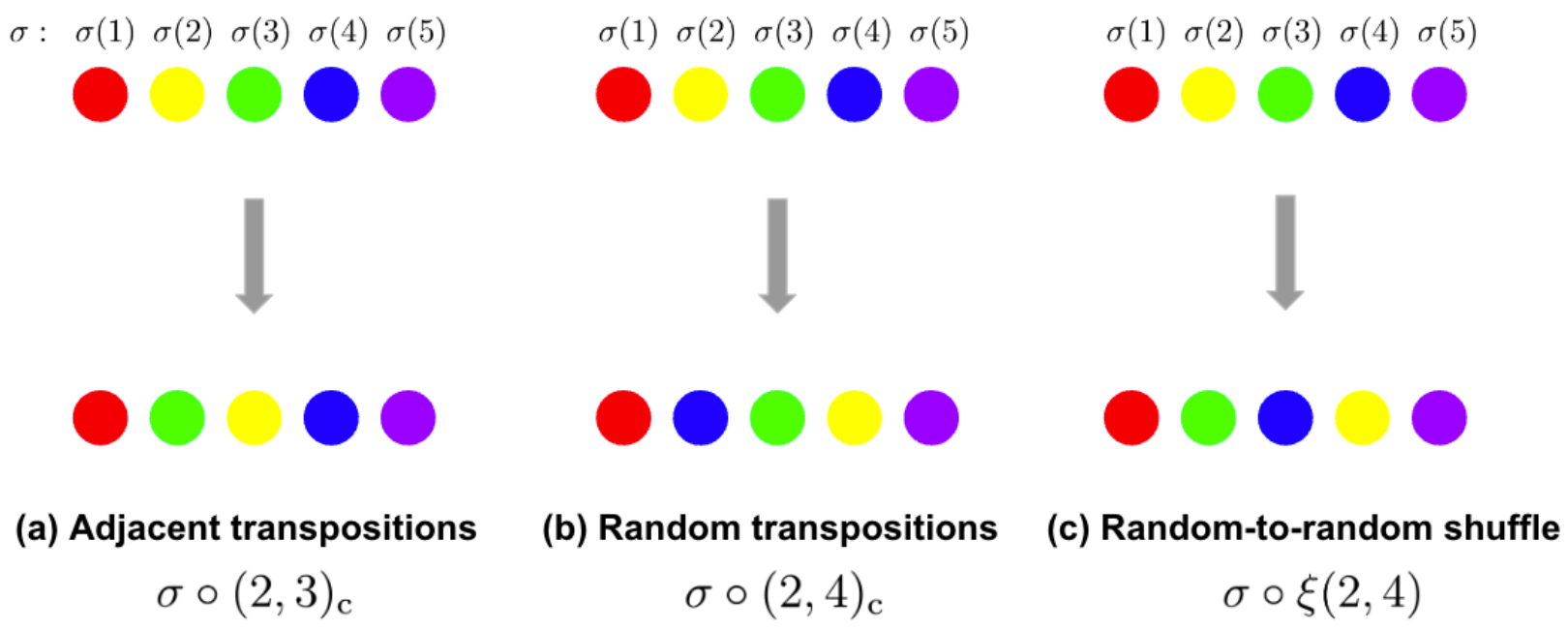}
    \caption{Illustration of the three proposals introduced in Section~\ref{subsec:MH}: adjacent transposition, the random transposition and the random-to-random shuffle.}
    \label{fig:proposal}
\end{figure}

\section{Proofs} \label{subsec:proof}

\subsection{High-probability events}\label{subsec:event}

Recall that we assume the data is generated according to the linear structural equation model (SEM) given in~\eqref{eq:true.sem}. 
Since the rows of $X$ are assumed to be i.i.d. copies of $\sX$, we have 
\begin{align}\label{st.eq:true}
    X_j = \sum_{i = 1}^p (B^*)_{ij} X_i + \epsilon_j, \text{ where }\epsilon_j \sim N_n(0, \omega^*_{j} I), \text{ for all } j \in [p].
\end{align} 
By Remark~\ref{remark:minimap}, for each  $\sigma \in \bbS^p$, we can derive a linear SEM equivalent to~\eqref{st.eq:true}, which is given by 
\begin{align}\label{st.eq:minimap}
    X_j = \sum_{i = 1}^p (B_{\sigma}^*)_{ij} X_i + \epsilon_j^{\sigma}, \text{ where }\epsilon_j^\sigma \sim N_n(0, \omega^{\sigma}_j I), \text{ for all } j \in [p].
\end{align}   
We define the normalized error vectors by
\begin{gather*}
    z_j = (\omega^*_{j})^{-\frac{1}{2}} \epsilon_j \text{ for } j \in [p], \qquad \qquad 
    z_j^\sigma = (\omega_{j}^{\sigma})^{-\frac{1}{2}} \epsilon_j^\sigma \text{ for }  \sigma \in \bbS^p, j \in [p],
\end{gather*}
where $z_j$ and $z_j^\sigma$ are associated with the true model given in~\eqref{st.eq:true} and the linear SEM in~\eqref{st.eq:minimap}, respectively. The sets of the corresponding normalized errors are defined by $\cZ_0 = \{ z_j \colon  j \in [p]\}$ and $\cZ_1 = \{ z_j^\sigma \colon \sigma \in \bbS^p, j \in [p]\}$. Clearly, $\cZ_0 \subseteq \cZ_1$ and $|\cZ_0| = p$. Further, one can show that 
\begin{align*}
    |\cZ_1| \leq p \cdot \binom{p}{d^*}, 
\end{align*} 
where $d^*$ is defined in~\eqref{eq:def.dstar}.

Before we prove the results given in the main text, we first define some event sets on which the random components of our generating SEM  behaves as desired, and use concentration inequalities to show that they happen with high probability. We will then prove the main results of the paper by conditioning on these high-probability events. 
Recall $P_j^\sigma$  defined in \eqref{eq:potential} and let $\model(d, P) = \{S \subseteq P \colon  |S| \leq d\}$.  Define 
\begin{gather*}
    \cA  = \Big\{   n \vmin  \leq   \min_{\substack{S \subseteq \model(2\din, [p])}} \lmin( \X_S^\T \X_S ) \leq  \max_{\substack{S \subseteq \model(2\din, [p])}} \lmax( \X_S^\T \X_S ) \leq n \vmax  \Big\},  \\
    \cB   = \left\{\min_{j \in [p]}\min_{ S \subseteq \model(2\din, P_j^{\sigma^*})} (z_j)^\T \oproj_S z_j \geq \frac{1}{2}n  \right\}, \\
    \cB'   = \left\{\min_{j \in [p] , \sigma \in \bbS^p}\min_{ S \subseteq \model(2\din, P_j^\sigma)} (z_j^\sigma)^\T \oproj_S z_j^\sigma \geq \frac{1}{2}n  \right\}, \\
    \cC  = \left\{ \max_{j \in [p]} \max_{  \substack{k \not\in S \\ S \cup \{k\} \subseteq \model(2\din, P_j^{\sigma^*}) }} z_j^\T (\proj_{S\cup\{k\}} - \proj_S )z_j \leq \rho \log p \right\}, 
\end{gather*}
\begin{gather*}
    \cD  = \left\{\min_{j \in [p] , \sigma \in \bbS^p}\min_{ S \subseteq \model(2\din, P_j^\sigma)}  (z_j^\sigma)^\T \oproj_S z_j^\sigma > (1 - \frac{1}{  2\eta})n  \right\},   \\
    \cE  = \left\{\max_{j \in [p]} \max_{  S \subseteq \model(2\din, P_j^{\sigma^*}) } z_j^\T \oproj_S z_j < (1 + \frac{1}{4 \eta})n \right\}, \\
    \cJ = \bigcap_{i, j \in [p]} \left\{ \left|\frac{X_i^\T X_j}{n} - \Sigma_{ij}^* \right| \leq 160 \vmax \sqrt{\frac{\log p}{n}}\right\},
\end{gather*}
where $\eta, \rho > 0$ are universal constants. 

\begin{lemma}\label{L:set_ordering}
Under the conditions of Proposition~\ref{prop:freq}, we have $\bbP^*(\cA \cap \cB \cap \cC) \geq 1 - 4p^{-1}$ for sufficiently large $n$.
\end{lemma}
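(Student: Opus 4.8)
The plan is to bound $\bbP^*(\cA^c)$, $\bbP^*(\cB^c)$ and $\bbP^*(\cC^c)$ separately, each by a term of order $p^{-1}$ for $n$ large, and then conclude by a union bound. All three estimates rest on Gaussian concentration combined with a crude union bound over the $p^{O(\din)}$ candidate sets appearing in each definition. The structural fact I would extract first is that, under the true ordering $\sigma^*$, the normalized error $z_j$ is independent of $\{X_i: i \in P_j^{\sigma^*}\}$ for every $j$: by~\eqref{st.eq:true} and acyclicity of $G^*$, each $X_i$ with $i$ preceding $j$ in $\sigma^*$ is a linear combination of $\{\epsilon_k : k \neq j\}$, hence independent of $\epsilon_j$ and thus of $z_j$. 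Conditioning on $X_{P_j^{\sigma^*}}$, the vector $z_j$ remains $N_n(0, I_n)$ and independent of every projection $\proj_S$ with $S \subseteq P_j^{\sigma^*}$, so on the probability-one event that $X_S$ has full column rank, $z_j^\T \oproj_S z_j \sim \chi^2_{n-|S|}$ and $z_j^\T(\proj_{S\cup\{k\}} - \proj_S)z_j \sim \chi^2_1$; these exact distributions are what make the tail bounds clean.

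For $\cA$, I would fix $S \subseteq [p]$ with $|S| \le 2\din$ and factor $X_S = W(\Sigma^*_{S,S})^{1/2}$, where $W \in \bbR^{n\times|S|}$ has i.i.d.\ $N(0,1)$ entries. Since $\Sigma^*_{S,S}$ is a principal submatrix of $\Sigma^*$, condition (C\ref{A:eigen}) gives $\lmin(\Sigma^*_{S,S}) \ge \vmin/(1-\delta)^2$ and $\lmax(\Sigma^*_{S,S}) \le \vmax/(1+\delta)^2$, hence $\lmin(X_S^\T X_S) \ge \vmin(1-\delta)^{-2}\smin(W)^2$ and $\lmax(X_S^\T X_S) \le \vmax(1+\delta)^{-2}\smax(W)^2$. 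The standard nonasymptotic singular-value bounds for Gaussian matrices give $\smin(W) \ge \sqrt n - \sqrt{|S|} - t$ and $\smax(W) \le \sqrt n + \sqrt{|S|} + t$ simultaneously with probability at least $1 - 2e^{-t^2/2}$; taking $t = \sqrt n\,\delta/2$ and using $|S| \le 2\din = o(n)$, for $n$ large this forces $n\vmin \le \lmin(X_S^\T X_S) \le \lmax(X_S^\T X_S) \le n\vmax$. A union bound over the at most $(2\din+1)\binom{p}{2\din} \le p^{2\din+1}$ such $S$ then gives $\bbP^*(\cA^c) \le 2p^{2\din+1}e^{-n\delta^2/8}$, which is $\le p^{-1}$ for $n$ large because $\din\log p = o(n)$.

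For $\cB$ and $\cC$ I would use the conditioning observation directly. For $\cB$, $z_j^\T\oproj_S z_j \sim \chi^2_m$ with $m = n - |S| \ge \tfrac34 n$ for $n$ large, and a one-sided chi-square lower-tail bound gives $\bbP^*(z_j^\T\oproj_S z_j \le \tfrac12 n) \le \bbP(\chi^2_m \le \tfrac23 m) \le e^{-c_1 n}$ for a universal $c_1 > 0$; summing over $j \in [p]$ and the $\le p^{2\din+1}$ admissible $S$ yields $\bbP^*(\cB^c) \le p^{2\din+1}e^{-c_1 n} \le p^{-1}$ for $n$ large. For $\cC$, $z_j^\T(\proj_{S\cup\{k\}} - \proj_S)z_j \sim \chi^2_1$, so $\bbP^*(z_j^\T(\proj_{S\cup\{k\}}-\proj_S)z_j > \rho\log p) \le e^{-\rho(\log p)/2} = p^{-\rho/2}$, and since the number of admissible triples $(j,S,k)$ is at most $(2\din+1)^2 p^{2\din+1}$,
\begin{equation*}
  \bbP^*(\cC^c) \;\le\; (2\din+1)^2\, p^{2\din + 1 - \rho/2}.
\end{equation*}
Because (C\ref{A:prior}) imposes $\rho > 4\din + 6$, the exponent satisfies $2\din + 1 - \rho/2 < -2$, so the right-hand side is $\le p^{-1}$ for large $p$. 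Adding the three bounds gives $\bbP^*(\cA^c\cup\cB^c\cup\cC^c) \le 3p^{-1} \le 4p^{-1}$, the desired conclusion.

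The part I expect to be the main obstacle is the bookkeeping for $\cC$: there, unlike for $\cA$ and $\cB$, the per-event deviation probability decays only polynomially in $p$ (as $p^{-\rho/2}$), so the $p^{O(\din)}$ union-bound factor must be strictly beaten, and this is exactly what the hypothesis $\rho > 4\din + 6$ in (C\ref{A:prior}) is engineered to provide. A secondary point to handle with care is the independence claim $z_j \independent X_{P_j^{\sigma^*}}$: it is essential that $\cB$ and $\cC$ only involve sets $S \subseteq P_j^{\sigma^*}$, since it is precisely this restriction — and not independence from the full design $X_{[p]}$ — that is available. The remaining ingredients (Gaussian singular-value estimates, chi-square tail inequalities, counting candidate sets) are routine.
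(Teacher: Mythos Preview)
Your proposal is correct and follows essentially the same route as the paper's own proof: bound $\bbP^*(\cA^c)$, $\bbP^*(\cB^c)$, $\bbP^*(\cC^c)$ separately via Gaussian/chi-square concentration plus a union bound over the $p^{O(\din)}$ admissible index sets, then combine. The paper is terser---it cites Lemma~E1 of \citet{zhou2021complexity} for $\cA$ and the Laurent--Massart chi-square tails for $\cB$ and $\cC$---whereas you spell out the singular-value argument for $\cA$ and make the independence $z_j \independent X_{P_j^{\sigma^*}}$ explicit, but the substance is the same; in particular, both arguments hinge on $\rho > 4\din + 6$ to make the polynomial tail $p^{-\rho/2}$ beat the $p^{2\din + O(1)}$ union-bound count in $\cC$.
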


\begin{proof}
From Lemma F1 of~\citet{zhou2021complexity}, we have $\bbP^*(\cA) \geq 1 - p^{-1}$ for sufficiently large $n$. The proof for the bounds of $\bbP^*(\cB)$ and $\bbP^*(\cC)$ is analogous to that of Lemma F2 of~\citet{zhou2021complexity}. A standard calculation using the tail bounds for chi-squared distributions~\citep{laurent2000adaptive}[Lemma 1] yields 
\begin{gather*}
    \bbP^* \left\{z_j^\T \oproj_S z_j \leq \frac{1}{2}n\right\} \leq e^{-n/48},\\
    \bbP^* \left\{z_j^\T (\proj_{T\cup\{k\}} - \proj_T )z_j \geq \rho \log p\right\} \leq 2e^{-\rho \log p /2},
\end{gather*}
for any $j \in [p]$, $S \subseteq \model(2\din, P_j^{\sigma^*})$ and $T \cup \{k\} \subseteq \model(2\din, P_j^{\sigma^*})$. 
To conclude the proof,  apply union bounds with the observations $|Z_0| \leq p$ and $|\model(2\din, P_j^{\sigma^*}) \}| \leq p^{2\din + 1}$ and  the assumptions $\din \log p = o(n)$ and $\rho > 4\din + 6$.  
\end{proof}

\begin{lemma}\label{L:event_theorem}
Assume $\din \log p = o(n)$ and $d^* \leq \din$. There exists some universal constant $c' = c'(\eta) > 0$ such that $\bbP^*(\cD \cap \cE) \geq 1 - 2e^{-c' n}$ for all sufficiently large $n$. 
\end{lemma}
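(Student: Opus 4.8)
The plan is to bound $\bbP^*(\cD^c)$ and $\bbP^*(\cE^c)$ separately via conditional $\chi^2$ tail estimates followed by a union bound, then add them. The structural fact that makes this work is an independence observation: for every $\sigma \in \bbS^p$ and $j \in [p]$, the normalized residual $z_j^\sigma \sim N_n(0, I_n)$ is independent of the columns $\{X_i : i \in P_j^\sigma\}$. Indeed, by the modified Cholesky decomposition~\eqref{st.eq:minimap} together with the Markov property of $G_\sigma^*$, the error $\epsilon_j^\sigma$ is the population least-squares residual of $X_j$ on $X_{P_j^\sigma}$, hence (being jointly Gaussian with $X$) uncorrelated with, and therefore independent of, $X_{P_j^\sigma}$. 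Consequently, for any $S \in \model(2\din, P_j^\sigma)$, conditionally on $X_S$ the operator $\oproj_S$ is a fixed orthogonal projection of rank $n - |S|$, so $(z_j^\sigma)^\T \oproj_S z_j^\sigma \sim \chi^2_{n - |S|}$; the case $\sigma = \sigma^*$ gives $z_j^{\sigma^*} = z_j$ and handles $\cE$.

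For $\cE$, fix $j \in [p]$ and $S \in \model(2\din, P_j^{\sigma^*})$. Since $\din \log p = o(n)$ forces $\din = o(n)$, for $n$ large the mean $n - |S|$ is at most $n$, so the event inside $\cE^c$ requires $\chi^2_{n-|S|}$ to exceed its mean by at least $n/(4\eta)$. Applying the upper-tail inequality of \citet[Lemma~1]{laurent2000adaptive} with $x = a n$ for a sufficiently small $a = a(\eta) > 0$ (chosen so that $2\sqrt{n x} + 2x \le n/(4\eta)$) gives probability at most $e^{-a(\eta) n}$ for each $(j,S)$. A union bound over the $p$ choices of $j$ and the at most $p^{2\din+1}$ choices of $S$ contributes a factor $p^{2\din+2} = e^{o(n)}$, so $\bbP^*(\cE^c) \le e^{o(n)} e^{-a(\eta) n} \le e^{-c_2 n}$ for $n$ large.

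For $\cD$, fix a valid triple $(j,\sigma,S)$ with $S \subseteq P_j^\sigma$; by the independence above, $(z_j^\sigma)^\T \oproj_S z_j^\sigma \sim \chi^2_{n-|S|}$ with $n - |S| \ge n(1 - o(1))$, so the threshold $(1 - \tfrac{1}{2\eta})n$ lies at least $n/(4\eta)$ below the mean once $n$ is large (and if $1 - \tfrac{1}{2\eta} < 0$ the event has probability $0$). The lower-tail bound of \citet[Lemma~1]{laurent2000adaptive} with $x = n/(64\eta^2)$ then yields probability at most $e^{-n/(64\eta^2)}$. Here the union involves only $e^{o(n)}$ genuinely distinct random variables: $(z_j^\sigma)^\T\oproj_S z_j^\sigma$ depends on $\sigma$ solely through $z_j^\sigma$, which by the Markov property is determined by $j$ and $\Pa_j(G_\sigma^*)$ with $|\Pa_j(G_\sigma^*)| \le d^*$, giving at most $|\cZ_1| \le p\binom{p}{d^*}$ values of $z_j^\sigma$ and at most $p^{2\din+1}$ choices of $S$; both factors are $e^{o(n)}$ since $d^* \le \din$ and $\din \log p = o(n)$. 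Hence $\bbP^*(\cD^c) \le e^{-c_1 n}$ for $n$ large, and combining, $\bbP^*(\cD \cap \cE) \ge 1 - \bbP^*(\cD^c) - \bbP^*(\cE^c) \ge 1 - 2 e^{-c' n}$ with $c' = c'(\eta) := \tfrac12 \min(c_1,c_2) > 0$.

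The only real subtlety is the independence $z_j^\sigma \perp X_{P_j^\sigma}$ together with the accompanying bookkeeping: one must ensure that, although there are $p!$ orderings, only $e^{o(n)}$ distinct events actually enter the union bound — this is precisely the role of the cardinality estimate $|\cZ_1| \le p\binom{p}{d^*}$. Once this is in hand, the argument reduces to two routine applications of the Laurent--Massart $\chi^2$ deviation inequalities.
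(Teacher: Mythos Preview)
Your proposal is correct and follows essentially the same approach as the paper: conditional $\chi^2$ tail bounds from \citet{laurent2000adaptive} applied to $(z_j^\sigma)^\T \oproj_S z_j^\sigma$, followed by a union bound controlled via $|\cZ_1| \le p\binom{p}{d^*}$ and $|\model(2\din,\cdot)| \le p^{2\din+1}$. The only difference is that you make the independence $z_j^\sigma \perp X_{P_j^\sigma}$ explicit, whereas the paper leaves this implicit.
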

\begin{proof}
By Lemma 1 of~\citet{laurent2000adaptive}, 
\begin{equation}\label{laurent1}
        \bbP^* \left\{ \frac{\chi^2_{d}}{d} \leq 1- a \right\} \leq e^{- a^2 d / 4},  \quad 
    \bbP^* \left\{ \frac{\chi^2_{d}}{d} \geq 1 + a + \frac{a^2}{2} \right\} \leq e^{-a^2d/4},  
\end{equation}
where $\chi^2_d$ denotes a chi-squared random variable with  $d$  degrees of freedom and  $a>0$ is arbitrary. 
Consider $\bbP^*(\cD)$ first. 
For any $j \in [p], \sigma \in \bbS^p$, and $S \in \model(2\din, P_j^{\sigma})$, by~\eqref{laurent1}, 
\begin{align*}
    \bbP^* \left\{ \frac{(z_j^\sigma)^\T \oproj_{S} z_j^\sigma}{n - |S|} \leq 1 - \frac{1}{4\eta} \right\} \leq \exp \left(-\frac{ n - |S|}{64 \eta^2} \right).
\end{align*}
Since  $|S| \leq 2 \din  = o(n / \log p)$,  $n (n-|S|)^{-1} (1 - (2\eta)^{-1}) \leq 1 - (4\eta)^{-1} $ for sufficiently large $n$. 
Applying the union bound with $|\cZ_1| \leq p^{\din + 1}$ and $|\model(2\din, P_j^\sigma)| \leq p^{2 \din +1}$, we obtain   
\begin{align*}
    \bbP^* (\cD^\c) \leq p^{3\din + 2} \exp \left(-\frac{n}{128 \eta^2} \right) \leq e^{-c' n}, 
\end{align*}
for sufficiently large $n$. 
Next, consider $\bbP^*(\cE)$. For any $j \in [p]$ and $S \in \model(2\din, P_j^{\sigma^*})$, we have 
\begin{align*}
    \bbP^* \left\{ \frac{z_j^\T \oproj_{S} z_j}{n - |S|} \geq 1 + \frac{1}{8 \eta} + \frac{1}{128 \eta^2} \right\} \leq \exp \left(-\frac{n - |S|}{256 \eta^2} \right),
\end{align*}
by~\eqref{laurent1}. Since $|\cZ_0| = p$ and $|\model(2\din, P_j^{\sigma^*})| \leq p^{2 \din +1}$, the union bound gives 
\begin{align*}
    \bbP^* (\cE^\c) \leq p^{2\din + 2} \exp \left(-\frac{n}{512 \eta^2} \right) \leq e^{-c'n}. 
\end{align*}
Another application of the union bound yields the conclusion.  
\end{proof}

\begin{lemma}\label{L:event}
Under the conditions of Proposition~\ref{prop:lower_bound}, 
we have $\bbP^*(\cA \cap \cB' \cap \cJ) \geq 1 - 6p^{-1}$ for all sufficiently large $n$.
\end{lemma}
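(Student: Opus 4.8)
The plan is to establish the three high-probability bounds $\bbP^*(\cA)$, $\bbP^*(\cB')$, and $\bbP^*(\cJ)$ separately, each with failure probability at most $2p^{-1}$, and then combine by a union bound. The event $\cA$ is exactly the restricted-eigenvalue-type event already controlled in Lemma E1 of~\citet{zhou2021complexity} under condition (C\ref{A:eigen}) and $\din \log p = o(n)$ (which follows from (C\ref{c2.din}')), so I would simply cite that: $\bbP^*(\cA) \geq 1 - p^{-1}$ for $n$ large. For $\cB'$, the structure is the same as $\cB$ in Lemma~\ref{L:set_ordering} but now ranging over all orderings $\sigma \in \bbS^p$ and all potential-parent subsets. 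First I would note that, by the definition of the normalized errors $z_j^\sigma$ and the SEM~\eqref{st.eq:minimap}, each quadratic form $(z_j^\sigma)^\T \oproj_S z_j^\sigma$ with $|S| \le 2\din$ is distributed as $\chi^2_{n - |S|}$ (the projection $\oproj_S$ depends on the design columns $X_S$, not on $z_j^\sigma$, because $z_j^\sigma$ is the exogenous noise in that particular SEM). Then the lower-tail bound from~\citet{laurent2000adaptive}[Lemma 1] gives $\bbP^*\{ (z_j^\sigma)^\T \oproj_S z_j^\sigma \le n/2 \} \le e^{-(n-|S|)/24} \le e^{-n/48}$ for $n$ large. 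A union bound over $j \in [p]$, over $\sigma \in \bbS^p$, and over $S \in \model(2\din, P_j^\sigma)$ costs at most $|\cZ_1| \cdot p^{2\din+1} \le p^{\din+1} \cdot p^{2\din+1} = p^{3\din+2}$ terms, so $\bbP^*(\cB'^\c) \le p^{3\din+2} e^{-n/48} \le p^{-1}$ for $n$ large, again using $\din \log p = o(n)$.

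For $\cJ$, I would control the entrywise deviation $|X_i^\T X_j / n - \Sigma^*_{ij}|$. Writing rows of $X$ as i.i.d.\ $N(0, \Sigma^*)$ vectors, $X_i^\T X_j / n$ is an average of $n$ i.i.d.\ products of jointly Gaussian coordinates, and a standard sub-exponential (Bernstein-type) concentration inequality — or the polarization identity reducing it to chi-squared tails as in~\citet{laurent2000adaptive} — gives, for $t \le $ (a constant),
\begin{align*}
\bbP^*\left\{ \left| \frac{X_i^\T X_j}{n} - \Sigma^*_{ij} \right| \ge t\, \lmax(\Sigma^*) \right\} \le 2 e^{-c n t^2}
\end{align*}
for a universal $c > 0$. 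Choosing $t = C\sqrt{(\log p)/n}$ with $C$ large enough (absorbing the $c$ and the eigenvalue bound $\lmax(\Sigma^*) \le \vmax/(1+\delta)^2 \le \vmax$ from (C\ref{A:eigen}) to reach the stated constant $160\vmax$) makes each tail at most $2 p^{-3}$; a union bound over the $p^2$ pairs $(i,j)$ yields $\bbP^*(\cJ^\c) \le 2 p^{-1}$. Finally, a union bound over $\cA^\c, \cB'^\c, \cJ^\c$ gives the claimed $\bbP^*(\cA \cap \cB' \cap \cJ) \ge 1 - 6p^{-1}$ (using the looser constants so the three bounds sum to at most $6p^{-1}$).

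The main obstacle is bookkeeping the union-bound cardinalities in $\cB'$: because $\sigma$ ranges over all $p!$ orderings, one must be careful that the relevant events depend on $\sigma$ only through the finite collection $\cZ_1$ of normalized error vectors and through the subsets $S$, not through $\sigma$ directly, so that $|\cZ_1| \le p \binom{p}{d^*} \le p^{\din+1}$ (using $d^* \le \din$) is the correct count rather than $p!$; the key is that $z_j^\sigma$ and $z_j^{\sigma'}$ coincide whenever $\sigma, \sigma'$ induce the same regression of $X_j$ on its predecessors, which is why $\cZ_1$ is small. Everything else is a routine application of chi-squared and sub-exponential tail bounds together with the degrees-of-freedom-adjusting step $n/(n-|S|) \to 1$ already used in Lemma~\ref{L:event_theorem}. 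Details are deferred to the obvious adaptation of the proofs of Lemmas~\ref{L:set_ordering} and~\ref{L:event_theorem}.
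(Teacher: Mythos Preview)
Your proposal is correct and follows essentially the same route as the paper. The paper handles $\cA$ identically, cites Lemma~E2 of \citet{zhou2021complexity} for the bound on $\bbP^*(\cB')$ (rather than rederiving the chi-squared tail plus $|\cZ_1|\le p^{\din+1}$ union bound you spell out), and for $\cJ$ invokes Lemma~1 of \citet{ravikumar2011high} to obtain $\bbP^*(\cJ_{ij}^\c)\le 4p^{-3}$ and hence $\bbP^*(\cJ^\c)\le 4p^{-1}$, so that the three failure probabilities total exactly $p^{-1}+p^{-1}+4p^{-1}=6p^{-1}$; your generic sub-exponential/polarization argument is an equivalent substitute.
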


\begin{proof}
We have obtained the bound $\bbP^*(\cA) \geq 1 - p^{-1}$ from Lemma~\ref{L:set_ordering}, and the bound on $\bbP^*(\cB')$ is proved in Lemma F2 of~\citet{zhou2021complexity}. 
Consider $\bbP^*(\cJ^\c)$. Let 
\begin{align*}
\cJ_{ij}^\c = \left\{ \left|\frac{X_i^\T X_j}{n} - \Sigma_{ij}^* \right| > 160 \vmax \sqrt{\frac{\log p}{n}}\right\}.
\end{align*}
By~\citet[Lemma 1]{ravikumar2011high}, 
\begin{align*}
   \bbP^*(\cJ_{ij}^\c) \leq 4 \exp(-3 \vmax^2 \log p / (\max_{i}\Sigma_{ii}^*)^2 )\leq 4 p^{-3},
\end{align*}
from which we obtain $\bbP^*(\cJ^\c) = \bbP^*(\cup_{i,j \in [p]} \cJ_{ij}^\c) \leq 4 p^{-1}$ by the union bound.
\end{proof}

\subsection{Proof of Proposition~\ref{prop:freq}}\label{proof:freq}
We consider the proof of consistency for the estimator $\hat{G}_\sigma^{\MAP}$ defined in~\eqref{def.map};  that is, we show that the scoring criterion $\phi$ is consistent when the ordering $\sigma$ is known. 
We first prove a technical lemma, which bounds the residual sum of squares $\SSE_j(G)$ when the node $j$ is underfitted (i.e., $\Pa_j(G^*) \not\subseteq \Pa_j(G)$). 

\begin{lemma}\label{L:lower_bound_betamin}
Fix some $S \subseteq [p]$ such that $|S| \leq \din$ and $S \neq S^* = \Pa_j(G^*)$. 
Suppose we are on the event $\cA \cap \cB \cap \cC$ and the conditions of Proposition~\ref{prop:freq} hold.   Then
\begin{align*}
    X_j^\T (\proj_{S \cup \{k_0\}} - \proj_{S} ) X_j \geq 
9 c_0 \vmax \log p / \alpha,
\end{align*}
for some $k_0 \in S^* \setminus  S$.
\end{lemma}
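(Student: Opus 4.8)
The goal is to lower-bound the drop in residual sum of squares at node $j$ when some true parent $k_0 \in S^* \setminus S$ is added to the candidate set $S$. The natural starting point is the true linear model \eqref{st.eq:true}, which gives $X_j = X_{S^*} B^*_{S^*,j} + \epsilon_j$ with $\epsilon_j = \sqrt{\omega^*}\, z_j$. The plan is to decompose $X_j^\T(\proj_{S\cup\{k_0\}} - \proj_S) X_j$ and show it is dominated by the ``signal'' contribution of the omitted coordinate $k_0$, which is of order $n \betamin$, and then invoke condition (C\ref{A:beta-min}) to conclude this is at least $9c_0\vmax \log p / \alpha$.

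\textbf{Key steps.}
First I would pick $k_0 \in S^* \setminus S$ to be the index maximizing $|B^*_{k,j}|$ over $k \in S^* \setminus S$ (or any index with $|B^*_{k_0,j}|^2 \geq \betamin$, which exists since $S \neq S^*$ means $S^* \setminus S \neq \emptyset$ and every nonzero true edge weight has squared magnitude at least $\betamin$). Second, write the projection difference in terms of the regression residual: $X_j^\T(\proj_{S\cup\{k_0\}} - \proj_S)X_j = \norm{\oproj_S X_{k_0}}^{-2}\, (X_{k_0}^\T \oproj_S X_j)^2$, the standard ``added-variable'' identity. Third, I would expand $X_j = \sum_i B^*_{ij} X_i + \epsilon_j$; the contribution of $X_i$ for $i \in S^* \cap S \subseteq S$ vanishes after $\oproj_S$, so $\oproj_S X_j = \sum_{i \in S^* \setminus S} B^*_{ij} \oproj_S X_i + \oproj_S \epsilon_j$. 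Then $X_{k_0}^\T \oproj_S X_j = B^*_{k_0,j} \norm{\oproj_S X_{k_0}}^2 + \sum_{i \in S^*\setminus (S\cup\{k_0\})} B^*_{ij} X_{k_0}^\T \oproj_S X_i + X_{k_0}^\T \oproj_S \epsilon_j$. The leading term has magnitude $\geq \sqrt{\betamin}\, n\vmin$ on $\cA$ (using $\lmin(X_{k_0}^\T X_{k_0}) \geq n\vmin$ and that projecting can only... — actually here I need to be slightly careful, since $\norm{\oproj_S X_{k_0}}^2$ can be smaller than $\norm{X_{k_0}}^2$; but on $\cA$, $\norm{\oproj_S X_{k_0}}^2 = X_{k_0}^\T\oproj_S X_{k_0} \geq \lmin(X_{S\cup\{k_0\}}^\T X_{S\cup\{k_0\}}) \geq n\vmin$ by the eigenvalue interlacing/Schur-complement bound already baked into the definition of $\cA$). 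Fourth, I would bound the cross terms and the noise term: each $|X_{k_0}^\T \oproj_S X_i|$ is controlled either via $\cA$ (eigenvalue bounds give $\leq n\vmax$ times something, but this is the same order as the signal, so a cruder bound won't do) — so instead I should bound $|B^*_{ij}|$ by its maximum and absorb; alternatively, and more cleanly, recognize $\sum_{i \in S^*\setminus S} B^*_{ij}\oproj_S X_i = \oproj_S(X_j - \epsilon_j)$, so really $X_{k_0}^\T \oproj_S X_j - X_{k_0}^\T\oproj_S\epsilon_j = X_{k_0}^\T\oproj_S(X_j - \epsilon_j)$, and I want to lower-bound $\norm{\oproj_S(X_j-\epsilon_j)}$ — the norm of the projected true signal. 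Fifth, combine: on $\cA\cap\cB\cap\cC$, the noise-type terms $X_{k_0}^\T\oproj_S\epsilon_j$ are $O(\sqrt{n\log p})$ (this is where $\cC$ enters, after normalizing $\epsilon_j = \sqrt{\omega^*}z_j$ and using $z_j^\T(\proj_{S\cup\{k_0\}}-\proj_S)z_j \leq \rho\log p$ together with Cauchy–Schwarz), while the signal term is $\Omega(n\sqrt{\betamin})$, which dominates since $\betamin \gg \log p / n$ by (C\ref{A:beta-min}). Dividing by $\norm{\oproj_S X_{k_0}}^2 \leq n\vmax$ and tracking the constants should yield the bound $9c_0\vmax\log p/\alpha$; the factor $16$ in (C\ref{A:beta-min}) versus the target constant $9$ leaves room for the cross-term and noise losses.

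\textbf{Main obstacle.}
The delicate point is handling the ``interference'' from the other omitted true parents $i \in S^* \setminus (S \cup \{k_0\})$: their contribution to $X_{k_0}^\T\oproj_S X_j$ is genuinely of the same order $n$ as the leading signal term, so it cannot simply be treated as lower-order noise. The right way around this is probably not to isolate a single $k_0$ coordinate-wise but to argue about $\norm{\oproj_S(X_j - \epsilon_j)}^2 = \norm{\oproj_S X_{S^*\setminus S}B^*_{S^*\setminus S,j}}^2$ directly: on $\cA$ this is $\geq n\vmin \norm{B^*_{S^*\setminus S,j}}^2 \geq n\vmin\betamin$ (using that $S\cup(S^*\setminus S)$ has size $\leq 2\din$ so the eigenvalue bound in $\cA$ applies), and then $\max_{k_0 \in S^*\setminus S} (X_{k_0}^\T\oproj_S(X_j-\epsilon_j))^2$ is at least $\norm{\oproj_S(X_j-\epsilon_j)}^4 / (\din \max_{k_0}\norm{\oproj_S X_{k_0}}^2)$ by an averaging/Cauchy–Schwarz argument over the $\leq \din$ coordinates. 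This introduces a $\din$ in the denominator, so one must check the constants in (C\ref{A:prior})–(C\ref{A:beta-min}) still close the gap — I expect they do, since (C\ref{A:beta-min}) carries $\log p / n$ and the in-degree growth is controlled. Reconciling this $\din$ factor with the clean constant $9c_0\vmax\log p/\alpha$ in the statement is the step I'd spend the most care on; everything else is bookkeeping on the high-probability events $\cA,\cB,\cC$.
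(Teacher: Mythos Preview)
Your overall architecture matches the paper's: write $X_j = Z_j + \epsilon_j$ with $Z_j = X_{S^*}B^*_{S^*,j}$, choose $k_0$ via an averaging argument over the omitted true parents, control the noise contribution through the event $\cC$, and combine. The paper does exactly this, picking $k_0 = \argmax_{k\in S^*\setminus S}\,Z_j^\T(\proj_{S\cup\{k\}}-\proj_S)Z_j$, applying the triangle inequality at the norm level,
\[
X_j^\T(\proj_{S\cup\{k_0\}}-\proj_S)X_j \ge \bigl(\,\norm{(\proj_{S\cup\{k_0\}}-\proj_S)Z_j}-\norm{(\proj_{S\cup\{k_0\}}-\proj_S)\epsilon_j}\,\bigr)^2,
\]
and then citing Lemma~D2 of \citet{zhou2021complexity} for the signal lower bound $\norm{(\proj_{S\cup\{k_0\}}-\proj_S)Z_j}^2 \ge \dfrac{\norm{B^*_{S^*\setminus S,j}}^2}{|S^*\setminus S|}\cdot\dfrac{n\vmin^2}{\vmax}$.

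The gap in your plan is the averaging inequality itself. The claimed bound
\[
\max_{k_0\in S^*\setminus S}\bigl(X_{k_0}^\T\oproj_S(X_j-\epsilon_j)\bigr)^2 \;\ge\; \frac{\norm{\oproj_S(X_j-\epsilon_j)}^4}{\din\,\max_{k_0}\norm{\oproj_S X_{k_0}}^2}
\]
is false in general (take orthonormal $u_1,\dots,u_m$ and $v=\sum_k u_k$: the left side is $1$, the right side is $m$). More importantly, routing through $\norm{v}^4$ produces a $\betamin^2$ dependence, which is an order too weak and is the real source of your lingering $\din$ worry---not just a constant to be reconciled. The correct step is linear in $\betamin$: writing $M=S^*\setminus S$, $u_k=\oproj_S X_k$, $v=\oproj_S Z_j=\oproj_S X_M B^*_{M,j}$, one has
\[
\sum_{k\in M}(u_k^\T v)^2 \;=\; \norm{(X_M^\T\oproj_S X_M)\,B^*_{M,j}}^2 \;\ge\; (n\vmin)^2\norm{B^*_{M,j}}^2,
\]
so that $\max_k(u_k^\T v)^2 \ge (n\vmin)^2\norm{B^*_{M,j}}^2/|M|$. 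The factor $|M|$ from averaging then cancels exactly against $\norm{B^*_{M,j}}^2\ge |M|\betamin$, leaving $\max_k\norm{(\proj_{S\cup\{k\}}-\proj_S)Z_j}^2 \ge n\vmin^2\betamin/\vmax \ge 16c_0\vmax\log p/\alpha$ with no $\din$. Together with the noise bound $\norm{(\proj_{S\cup\{k_0\}}-\proj_S)\epsilon_j}^2<\omega^*\rho\log p<c_0\vmax\log p/\alpha$ from $\cC$ and (C\ref{A:prior}), the triangle inequality yields the constant $(\sqrt{16}-1)^2=9$.
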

\begin{proof}
We denote  $X_j = Z_j + \epsilon_j$, $Z_j = X_{S^*} (B^*_{j})_{S^*}$, where $B_j^*$ is $j$-th column of the true weighted adjacency matrix $B^*$. Let $k_{0} = \argmax_{k \in S^* \backslash S} Z_j^\T (\proj_{S \cup \{k\}} - \proj_{S}) Z_j$. By the triangle inequality, 
\begin{equation}\label{lm2:tri}
\begin{aligned}
    X_j^\T (\proj_{S \cup \{k_0\} } - \proj_{S} ) X_j &\; \geq  (|| (\proj_{S \cup \{k_0\} } - \proj_{S}) Z_j || - || (\proj_{S \cup \{k_0\} } - \proj_{S} ) \epsilon_j ||  )^2. 
\end{aligned}
\end{equation}
On the event set $\cC$, we can use $c_0 > \alpha \rho$ from condition (C\ref{A:prior}) to obtain that 
\begin{align*}
    || (\proj_{S \cup \{k_0\} } - \proj_{S}) \epsilon_j ||^2 \leq \rho \omega_{j}^* \log p \leq \rho \vmax \log p < \frac{c_0}{\alpha} \vmax \log p, 
\end{align*}
and thus by Lemma E2 of~\citet{zhou2021complexity},  
\begin{align*}
    || (\proj_{S \cup \{k_0\} } - \proj_{S}) Z_j ||^2   \geq  
 \frac{||B^*_{S^*\backslash S} ||^2 }{|S^* \backslash S|}  \frac{ n \vmin^2 }{ \vmax} \geq 16 c_0 \frac{\vmax^2 \log p}{\alpha \vmin^2 n} \frac{ n \vmin^2 }{ \vmax} \geq \frac{16 c_0}{\alpha} \vmax \log p. 
\end{align*}
The second inequality follows from condition (C\ref{A:beta-min}). 
Plugging the above two displayed bounds into~\eqref{lm2:tri}, we obtain the asserted result. 
\end{proof}
 
\begin{proof}[Proof of Proposition~\ref{prop:freq}]
On  the event $\cA \cap \cB \cap \cC$ defined in Section~\ref{subsec:event}, we will show that all the three events stated in the proposition happen. For a non-negative integer $d$, define
\begin{align*}
\cG_p^*(d)  = \bigcup_{\sigma \in [\sigma^*]} \cG_p^\sigma(d). 
\end{align*}
 
\noindent \textit{Event (i).}  Fix an arbitrary $G \in  \cG_p^*(2\din) $ such that $\Pa_j(G^*) \subset  \Pa_j(G)$  for some $j \in [p]$. We prove that we can remove all the redundant parents of node $j$. This is slightly stronger than the asserted result, but it will be useful later for proving the claim for event (iii).  
Pick an arbitrary $k \in \Pa_j(G) \setminus \Pa_j(G^*)$ and define $G' = G \setminus \{k \rightarrow j\}$.    
On the event $\cB \cap \cC$, we have 
\begin{gather*}
    X_j^\T (\oproj_{\Pa_j(G')} - \oproj_{\Pa_j(G)}) X_j = \epsilon_j^\T (\proj_{\Pa_j(G)} - \proj_{\Pa_j(G')}) \epsilon_j \leq  \omega^*_{j} \rho \log p ,   \\
    \SSE_i(G) = X_i^\T \oproj_{\Pa_i(G)} X_i   \geq   \epsilon_i^\T \oproj_{\Pa_i(G) } \epsilon_i \geq \frac{n \omega^*_{i}}{2} \text{ for } i \in [p].  
\end{gather*} 
Since   $1+x \leq \exp(x)$ for   $x \in \bbR$ and  $ \sqrt{1 + \alpha/\gamma} > 1$, we find that 
\begin{align*}
    \frac{\exp(\phi(G))}{\exp(\phi(G'))} &\;= \left(p^{c_0}\sqrt{1+\alpha/\gamma}\right)^{-1} \left(\frac{\sum_{i\neq j}^p \SSE_i(G) + \SSE_j(G')}{\sum_{i=1}^p \SSE_i(G)   } \right)^{\frac{\alpha p n + \kappa}{2}} \\
    &\; <  p^{-c_0 } \left( 1 + \frac{X_j^\T (\oproj_{\Pa_j(G')} - \oproj_{\Pa_j(G)}) X_j}{\sum_{i=1}^p \SSE_i(G) } \right)^{\frac{\alpha p n + \kappa}{2}} \\
    &\; \leq p^{-c_0  } \exp \left( \frac{\alpha n p + \kappa}{2} \frac{X_j^\T (\oproj_{\Pa_j(G')} - \oproj_{\Pa_j(G)}) X_j}{\sum_{i=1}^p \SSE_i(G) } \right) \\
    &\; \leq p^{-c_0 } \exp \left\{    \frac{ (\alpha n p + \kappa) \omega_j^*  \rho \log p }{ (\min_{i}\omega_i^*) np  } \right\} \\
 &\; \leq p^{  \{ \max_{i\neq j} (\omega_j^*/ \omega_i^*) \} (\alpha+1) \rho - c_0 } < 1. 
\end{align*}
In the last line, we have used $\kappa \leq np$ and $c_0 > \max_{i\neq j} (\omega_j^*/ \omega_i^*)  (\alpha + 1)\rho$ from   condition (C\ref{A:prior}).   
The same argument implies that if we define  $G_0$ such that $\Pa_j(G_0) = \Pa_j(G^*)$ and $\Pa_i(G_0) = \Pa_i(G)$ for $i \neq j$, then we have 
\begin{align*}
     \frac{\exp(\phi(G))}{\exp(\phi(G_0))} < p^{(|\Pa_j(G)| - |\Pa_j(G^*)|) \{\max_{i\neq j} (\omega_j^*/ \omega_i^*) (\alpha+1) \rho  - c_0\}} < 1. 
\end{align*} 

\medskip 
\noindent \textit{Event (ii).}   Fix an arbitrary $G \in \cG_p^*(\din) $ such that $\Pa_j(G^*) \not\subseteq \Pa_j(G)$  for some $j \in [p]$. 
Since there exists some $\sigma \in [\sigma^*]$ such that $G, G^* \in \cG_p^\sigma(\din)$, we can apply Lemma~\ref{L:lower_bound_betamin} to show that there exists some $k \in \Pa_j(G^*) \setminus \Pa_j(G)$ such that the DAG $G' = G \cup \{k \rightarrow j\}$ satisfies
$ X_j^\T (\proj_{\Pa_j(G')} - \proj_{\Pa_j(G)}) X_j \geq 9  c_0 \vmax \log p / \alpha$.
Further, on the event $\cA$, we have $\SSE_i(G) \leq X_i^\T X_i \leq n \vmax$. 
Now using $\sqrt{1+\alpha/\gamma} \leq p$, which follows from condition (C\ref{A:prior}), we find that 
\begin{align*} 
    \frac{\exp(\phi(G))}{\exp(\phi(G'))} &\;= \left(p^{c_0}\sqrt{1+\alpha/\gamma}\right)  \left(\frac{\sum_{i\neq j}^p  \SSE_i(G) +  \SSE_j(G') }{\sum_{i=1}^p \SSE_i(G) } \right)^{\frac{\alpha p n + \kappa}{2}} \\
    &\; \leq  p^{  (c_0 + 1 )} \left( 1 - \frac{X_j^\T (\oproj_{\Pa_j(G)} - \oproj_{\Pa_j(G')}) X_j}{\sum_{i=1}^p \SSE_i(G) } \right)^{\frac{\alpha p n + \kappa}{2}} \\
    &\; \leq p^{   (c_0 + 1 )} \exp \left( - \frac{\alpha n p + \kappa}{2} \frac{X_j^\T (\proj_{\Pa_j(G')} - \proj_{\Pa_j(G)}) X_j}{\sum_{i=1}^p \SSE_i(G)} \right) \\
    & \;  \leq p^{  (c_0 + 1)} \exp\left\{   - \frac{\alpha n p + \kappa}{2} \frac{9   c_0 \vmax \log p / \alpha }{n p \vmax }\right\} \leq p^{ (- 7c_0/2 + 1)}. 
\end{align*} 
This implies $\exp(\phi(G)) < \exp(\phi(G'))$ since $c_0  > 4 \din + 6 > 2/7$ . 
The same argument shows that if we define $G_1 \in \cG_p^\sigma$ such that $\Pa_j(G_1) = \Pa_j(G^*) \cup \Pa_j(G)$  and $\Pa_i(G_1) = \Pa_i(G)$ for $i \neq j$, then we have 
\begin{equation}\label{eq:under}
    \frac{\exp(\phi(G))}{\exp(\phi(G_1))}   \leq p^{|\Pa_j(G^*) \setminus \Pa_j(G)| (- 7c_0/2 + 1)}. 
\end{equation} 

\medskip 

\noindent \textit{Event (iii).}  Consider an arbitrary $G \in \cG_p^*(\din)$ such that $G \neq G^*$. Then, there exists some $j \in [p]$ such that $\Pa_j(G) \neq \Pa_j(G^*)$.  If the node $j$ is overfitted (i.e., $\Pa_j(G^*) \subset  \Pa_j(G)$), event~(i) shows that there exists some $G_0 \in \cG_p^*(\din)$ such that $\phi(G_0) > \phi(G)$. 
If the node $j$ is underfitted, i.e., $\Pa_j(G^*) \not\subseteq \Pa_j(G)$,  inequality~\eqref{eq:under} shows that there exists some $G_1 \in \cG_p^*(2\din)$ such that $\phi(G_1) > \phi(G)$ and node $j$ is overfitted. But event (i) again implies that there exists some $G_2 \in \cG_p^*(\din)$ such that $\phi(G_2) > \phi(G_1)$. 
Hence, $G$ cannot be the maximizer of $\phi$ in $\cG_p^\sigma(\din)$; that is, 
$G^*$ is the unique DAG in $\cG_p^*(\din)$ that maximizes $\phi$, which completes the proof. 
\end{proof}

\subsection{Proof of Theorem~\ref{thm.consistency}}\label{proof:consistency}

For $\tau \notin [\sigma^*]$, the ratio of $\exp( \phi(\hat{G}_\tau) )$ to $\exp( \phi(G^*) )$ is
\begin{align}\label{eq:ratio}
    \frac{ \exp( \phi(\hat{G}_\tau) )}{\exp( \phi(G^*) ) } = \left(p^{c_0}\sqrt{1+\alpha/\gamma}\right)^{| G^*|-|\hat{G}_{\tau}|} \left( \frac{\sum_{j=1}^p  \SSE_j(\hat{G}_\tau) }{\sum_{j=1}^p \SSE_j(G^*) } \right)^{-\frac{\alpha p n + \kappa}{2}}. 
\end{align} 
On the event $\cD \cap \cE$ defined in Section~\ref{subsec:event}, we have 
\begin{align*}
    \frac{\sum_{j=1}^p \mathrm{RSS}_j\left(\hat{G}_\tau\right)}{\sum_{j=1}^p \mathrm{RSS}_j\left(G^*\right)} & \geq \frac{\sum_{j=1}^p X_j^{\mathrm{T}} \Phi_{\mathrm{Pa}_j\left(\hat{G}_\tau\right) \cup \mathrm{Pa}_j\left(G_\tau^*\right)}^{\perp} X_j}{\sum_{j=1}^p X_j^{\mathrm{T}} \Phi_{\mathrm{Pa}_j\left(G^*\right)}^{\perp} X_j} \\
& =\frac{\sum_{j=1}^p(\epsilon_j^\tau)^{\mathrm{T}} \Phi_{\mathrm{Pa}_j\left(\hat{G}_\tau\right) \cup \mathrm{Pa}_j\left(G_\tau^*\right)}^{\perp} \epsilon_j^\tau}{\sum_{j=1}^p \epsilon_j^{\mathrm{T}} \Phi_{\mathrm{Pa}_j\left(G^*\right)}^{\perp} \epsilon_j} \\
& \geq \frac{\mathrm{tr}(\Omega^*_{\tau})}{\mathrm{tr}(\Omega^*_{\sigma^*})} \cdot \frac{(1-1 /(2 \eta))}{1+1 /(4 \eta)} 
\end{align*}
where the error vectors $\epsilon_j, \epsilon_j^\tau$ are as defined in~\eqref{st.eq:true} and~\eqref{st.eq:minimap}. 
Without loss of generality, we can assume $\eta > 3$ in Assumption~\ref{A:omega}, from which we obtain that 
\begin{align*}
       \frac{\sum_{j=1}^p  \SSE_j(\hat{G}_\tau) }{\sum_{j=1}^p \SSE_j(G^*) }  \geq 
     \frac{(1+1/\eta)(1 - 1/(2\eta))}{1+1/(4 \eta)}  > \frac{1+1/(3\eta)}{1+1/(4 \eta)} > 1+\frac{1}{\eta'},
\end{align*}
for some universal $\eta' > 0$.  Hence, 
\begin{align*}
    \frac{ \exp( \phi(\hat{G}_\tau) )}{\exp( \phi(G^*) ) }  
    \leq p^{c_0 |G^*|} \left(1 + \frac{1}{\eta'} \right)^{-\frac{\alpha p n + \kappa}{2}}  
    \leq p^{c_0 p \din } \left(1 + \frac{1}{\eta'} \right)^{-\frac{\alpha p n + \kappa}{2}}.   
\end{align*} 
Using  $\din \log p = o(n)$ and Stirling's formula, we get 
\begin{align*}
    \frac{\sum_{\tau \notin [\sigma^*] }\exp( \phi(\hat{G}_\tau) )}{\exp( \phi(G^*) )} \leq p! \frac{ \exp( \phi(\hat{G}_\tau) )}{\exp( \phi(G^*) ) }    \leq e^{- C n p}, 
\end{align*}
for some universal  $C > 0$.  
For sufficiently large $n$,  by Assumption~\ref{A:freq} and Lemma~\ref{L:event_theorem},  the event $\cD \cap \cE \cap \left( \cap_{\sigma \in [\sigma^*]} \{ \hat{G}_{\sigma} = G^*  \} \right)$ happens with probability at least $1 - \zeta(p) - 2 e^{-c'n}$, on which we have 
\begin{align*}
    \post( G^* ) = \frac{ \sum_{\sigma \in [\sigma^*]} e^{\phi(G^*)} }{ \sum_{\tau \in \bbS^p} e^{\phi(\hat{G}_\tau)} } \geq 1 - \frac{ \sum_{\tau \notin [\sigma^*]} e^{\phi(\hat{G}_\tau)} }{ \sum_{\sigma \in [\sigma^*]} e^{\phi(G^*)} } \geq 1 - e^{-Cnp}. 
\end{align*} 
That is, $\post(G^*)$ converges to $1$ in probability. 
\hfill\qedsymbol{}

\subsection{Proof for the case of sub-Gaussian errors}\label{proof:subgaussian}
Let $X$ be an $n \times p$ random matrix, each of whose rows is an i.i.d. copy of $p$-dimensional sub-Gaussian random vector with mean zero and covariance matrix $\Sigma^*$ with a sub-Gaussian parameter bounded by a universal constant $C_\mathrm{sub}$. We define $\Sigma^*_{S}$ as the submatrix of $\Sigma^*$ with both rows and columns indexed by the set $S$. Let $\Sigma_{j|S}^* = \Sigma^*_{j,j} - \Sigma^*_{j,S}(\Sigma^*_{S})^{-1}\Sigma^*_{S, j}$ denote the partial covariance and let
$\hat{\Sigma}_{j|S} =  n^{-1} X_j \oproj_S X_j$ be its estimator for $|S| \leq \din$ and $j \notin S$. Denote $\left\| \cdot \right\|_{\mathrm{op}}$ as the operator norm. 

In the sub-Gaussian case, zero correlation does not imply independence anymore, and thus we need more stringent assumptions. The first condition is that
\begin{align}\label{eq:assume_sub1}
     \frac{\vmax^4d_{\mathrm{in}} \log p}{\vmin^6n} \rightarrow 0,
\end{align}
as $n$ goes to infinity. Second, we need $\Pa_j(\hat{G}_\tau) \subseteq \Pa_j(G^*_\tau)$ for $\tau \notin [\sigma^*]$, which means that the stepwise selection method should estimate the minimal I-map $G^*_\tau$ sparser and should not include an edge that is not in $G^*_\tau$. 
For the consistency result, the ratio $\hat{\Sigma}_{j|S}/\Sigma^*_{j|S}$ need to be controlled. To this end, we need the following lemmas. 
\begin{lemma}\label{L:subG1}
Suppose $\din \log p = o(n)$. There exists a constant $K_0$, which only depend on $C_\mathrm{sub}$, satisfying for sufficiently large $n$,
\begin{align*}
     \max _{S \in \mathcal{M}_{p}\left(2 d_{\mathrm{in}},  [p]\right)}\left\|n^{-1} X_{S}^{\top} X_{S}-\Sigma_{S}^{*}\right\|_{\mathrm{op}} \leq K_{0} \sqrt{\frac{d_{\mathrm{in}} \log p}{n}},
\end{align*}
with probability at least $1- 2p^{-\din}$.
\end{lemma}
\begin{proof}
See Lemma F3 in~\citet{zhou2021complexity}.
\end{proof}
\begin{lemma}\label{L:subG2}
Suppose $\din \log p = o(n)$ and a set $S$ and $j$ satisfy $|S| \leq \din$ and $j \notin S$. Let $K_0$ be the constant in Lemma~\ref{L:subG1}. Then, for sufficiently large $n$, we have
\begin{align*}
    |\hat{\Sigma}_{j|S} - \Sigma_{j|S}^{*}| \leq  K_{0} \frac{\vmax^2}{\vmin^2} \sqrt{\frac{d_{\mathrm{in}} \log p}{n}},
\end{align*}
with probability at least $1- 2p^{-\din}$.
\end{lemma}
\begin{proof} 
Apply the proof of Lemma E4 of~\citet{zhou2021complexity} by setting $T = \{j\}$, where $T$ is a set defined in Lemma E4 of~\citet{zhou2021complexity}. 
\end{proof}
Now, we are ready to prove the sub-Gaussian case. It is sufficient to show
\begin{align*}
    \frac{\sum_{j=1}^p  \SSE_j(\hat{G}_\tau) }{\sum_{j=1}^p \SSE_j(G^*) } > 1+\frac{1}{\eta'}.
\end{align*}
For fixed $\eta > 0$, by the condition \eqref{eq:assume_sub1}, a sufficiently large $n$ satisfies $K_{0} (\vmax^2/\vmin^2) \sqrt{d_{\mathrm{in}} \log p/n}$ $< \vmin/(4\eta)$. It follows that 
\begin{align*}
    \hat{\Sigma}_{j|S} & > \Sigma_{j|S}^{*} - K_{0} \frac{\vmax^2}{\vmin^2} \sqrt{\frac{\din \log p}{n}} \\
     & >\Sigma_{j|S}^{*} - \frac{\vmin}{2 \eta},
\end{align*}
which implies that $\hat{\Sigma}_{j|S}  /\Sigma_{j|S}^{*} > 1 - (2\eta)^{-1}$ by the fact $\vmin \leq \Sigma_{j|S}^{*}$. 
The other direction can be obtained by 
\begin{align*}
    \hat{\Sigma}_{j|S} & < \Sigma_{j|S}^{*} + K_{0} \frac{\vmax^2}{\vmin^2} \sqrt{\frac{\din \log p}{n}} \\
     & < \Sigma_{j|S}^{*} + \frac{\vmin}{4 \eta},
\end{align*}
which yields $\hat{\Sigma}_{j|S}  /\Sigma_{j|S}^{*} < 1 + (4\eta)^{-1}$. Therefore, 
\begin{align*}
    \frac{\sum_{j=1}^p \mathrm{RSS}_j\left(\hat{G}_\tau\right)}{\sum_{j=1}^p \mathrm{RSS}_j\left(G^*\right)} & \geq \frac{\sum_{j=1}^p X_j^{\mathrm{T}} \Phi_{\mathrm{Pa}_j\left(G_\tau^*\right)}^{\perp} X_j}{\sum_{j=1}^p X_j^{\mathrm{T}} \Phi_{\mathrm{Pa}_j\left(G^*\right)}^{\perp} X_j} \\
& =\frac{\sum_{j=1}^p \hat{\Sigma}_{j|\mathrm{Pa}_j\left(G_\tau^*\right)}}{\sum_{j=1}^p \hat{\Sigma}_{j|\mathrm{Pa}_j\left(G^*\right)}} \\
& \geq \frac{\mathrm{tr}(\Omega^*_{\tau})}{\mathrm{tr}(\Omega^*_{\sigma^*})} \cdot \frac{(1-1 /(2 \eta))}{1+1 /(4 \eta)} \\
& \geq \frac{(1+1/\eta) (1-1 /(2 \eta))}{1+1 /(4 \eta)} > 1 + \frac{1}{\eta'},
\end{align*}
for some universal constant $\eta' > 0$. 
The rest of the proof is identical to the Gaussian case.
\hfill\qedsymbol{}

\subsection{Proof of Proposition~\ref{prop:lower_bound}}\label{proof:lower_bound}
By $(C\ref{c2.beta.min}')$ , we have $\omega^*_1 = \dots = \omega^*_p = \omega^*$ in~\eqref{st.eq:true} for the true data generating model.
Without loss of generality, assume that $\id = (1, \dots, p)$ is a true ordering. Define
\begin{align*}
    \theta = \din^{2} \frac{\vmax^2 \log p}{\vmin^3 n}.
\end{align*}
\begin{lemma}\label{L:asym}
Under the setting of Proposition~\ref{prop:lower_bound}, 
\begin{align*}
    \Sigma_{ii}^*   = \omega^* + O(\theta/\din),  \quad \quad 
     \Sigma_{ij}^*   = O( \sqrt{\theta} / \din), 
\end{align*}
for all $i, j \in [p]$ and $i \neq j$.
\end{lemma}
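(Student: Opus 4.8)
The plan is to write $\Sigma^*$ explicitly through the Neumann series of $B^*$ and then read off the entrywise estimates. Recall that in the proof of Proposition~\ref{prop:lower_bound} we have already assumed $\id = (1,\dots,p)$ is a true ordering, so $B^*$ is strictly upper triangular and hence nilpotent; thus $M := (I_p - B^*)^{-1} = \sum_{k \geq 0} (B^*)^k$ is a finite sum, and by~\eqref{eq:mcd} with $\Omega = \omega^* I_p$ (using $(I_p-(B^*)^\top)^{-1} = ((I_p-B^*)^{-1})^\top = M^\top$) one gets $\Sigma^* = \omega^*\, M^\top M$. Writing $N := M - I_p = \sum_{k \geq 1} (B^*)^k$ (again strictly upper triangular), letting $N_i$, $M_i$ denote $i$-th columns and $e_i$ the $i$-th standard basis vector, this yields the exact identity
\begin{align*}
    \Sigma^*_{ij} \;=\; \omega^*\bigl( \delta_{ij} + N_{ij} + N_{ji} + \langle N_i, N_j \rangle \bigr), \qquad i,j \in [p].
\end{align*}
So the lemma reduces to two uniform bounds, $|N_{ij}| = O(b)$ and $\norm{N_i}_2 = O(\sqrt{\din}\,b)$, where $b := \max_{i,j}|B^*_{ij}|$. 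I will use that (C\ref{c2.beta.min}') gives $b = O(\vmax\sqrt{\log p}/(\vmin\sqrt{n})) = O(\sqrt{\theta}/\din)$; that $d^* \leq \din$ forces $|\Pa_i(G^*)| \leq \din$ for every $i$, so each column of $B^*$ has at most $\din$ nonzero entries; and that (C\ref{c2.din}') gives $\din b = O(\sqrt{\theta}) \to 0$, which makes every geometric series below convergent for $n$ large. (The eigenvalue condition (C\ref{A:eigen}) is not needed here.)

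Both bounds will be proved by induction on $k$. For the $\ell_2$ bound I would show $\norm{(B^*)^k e_i}_2 \leq (\din b)^{k-1}\sqrt{\din}\,b$ for all $i$ and $k \geq 1$: the base case is $\norm{B^* e_i}_2 = \norm{B^*_i}_2 \leq \sqrt{\din}\,b$ since column $i$ has at most $\din$ entries, each of magnitude at most $b$; and since $B^* e_i = \sum_{l \in \Pa_i(G^*)} B^*_{li}\, e_l$, the triangle inequality together with the inductive hypothesis gives the step. Summing the geometric series gives $\norm{N_i}_2 \leq \sqrt{\din}\,b/(1 - \din b) = O(\sqrt{\din}\,b)$. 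For the entrywise bound I would note that $(B^*)^k_{ij}$ is a sum over directed paths $i = i_0 \to i_1 \to \cdots \to i_k = j$ in $G^*$; fixing the endpoint $j$ and walking backwards along parents shows there are at most $\din^{k-1}$ such paths, each a product of $k$ edge weights of magnitude $\leq b$, so $|(B^*)^k_{ij}| \leq (\din b)^{k-1} b$ and hence $|N_{ij}| \leq b/(1 - \din b) = O(b)$, all uniformly in $i,j$.

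Plugging these back into the identity finishes the proof. Since $N$ is strictly upper triangular, $N_{ii} = 0$, so $\Sigma^*_{ii} = \omega^*(1 + \norm{N_i}_2^2)$; because $\omega^*$ is a fixed constant and $\norm{N_i}_2^2 = O(\din b^2) = O(\theta/\din)$ (using $b^2 = O(\vmax^2\log p/(\vmin^2 n))$), this is $\omega^* + O(\theta/\din)$. For $i \neq j$ we have $|\langle N_i, N_j\rangle| \leq \norm{N_i}_2\norm{N_j}_2 = O(\din b^2) = O(\theta/\din)$, which is $O(\sqrt{\theta}/\din)$ since $\theta \to 0$, while $|N_{ij}| + |N_{ji}| = O(b) = O(\sqrt{\theta}/\din)$; hence $\Sigma^*_{ij} = O(\sqrt{\theta}/\din)$, again uniformly in $i,j$.

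The main obstacle is getting the diagonal estimate with the correct power of $\din$. The naive route — bounding $\norm{N_i}_2^2 = \sum_m N_{mi}^2$ by applying the entrywise estimate $|N_{mi}| = O(b)$ to each of the up to $p-1$ ancestors $m$ of $i$ — only yields $O(p\,b^2)$, which is far too weak. The remedy is to run the recursion in the $\ell_2$ norm, so that at every step one pays only for the $\leq \din$ parents of the current node rather than for all of its ancestors; this is precisely the content of the inductive bound on $\norm{(B^*)^k e_i}_2$. Beyond that, the only things requiring care are keeping every $O(\cdot)$ uniform in $i,j$ and checking $\din b \to 0$ so that the Neumann/geometric series are summable, both of which follow from (C\ref{c2.beta.min}') and (C\ref{c2.din}').
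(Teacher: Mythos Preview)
Your proof is correct. Both you and the paper start from the Neumann series for $(I_p-B^*)^{-1}$ and exploit that each column of $B^*$ has at most $\din$ nonzero entries, but the bookkeeping is organized differently. The paper expands $\Sigma^*=\sum_{r,s}(B^{*\top})^r(B^*)^s$ directly and bounds each term via trek counting (pairs of directed paths sharing a top node), obtaining $N^{r,s}(i,j)\le \din^{\,r+s-1}$ and then summing a geometric series in $r+s$. You instead factor $\Sigma^*=\omega^* M^\top M$ with $M=I+N$, reduce to bounding $|N_{ij}|$ and $\norm{N_i}_2$, and prove both by an induction on $k$ for $(B^*)^k$. Your column-$\ell_2$ bound $\norm{(B^*)^k e_i}_2\le(\din b)^{k-1}\sqrt{\din}\,b$ is exactly what replaces the trek count and is the step that recovers the crucial extra factor of $\din^{-1}$ in the diagonal estimate; it is arguably a cleaner packaging of the same idea, and it makes the uniformity in $i,j$ and the role of the hypothesis $\din b\to 0$ more transparent.
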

\begin{proof}
For ease of notation, in this proof we write $B = B^*$, and without loss of generality, we assume the true error variance $\omega^*$ equals 1.
Since $B$ is a strictly upper triangular matrix, its operator norm is zero and $B^p = 0$. So we can expand $\Sigma$ using the Neumann series by 
\begin{align*}
    \Sigma =&\; (I - B^\T)^{-1} (I - B)^{-1}  = \sum_{k = 0}^\infty (B^\T)^k \sum_{k = 0}^\infty B^k \\
    &\; = \sum_{k=0}^\infty \sum_{r+s = k} (B^\T)^r B^s  =  \sum_{k=0}^{2p-2} \sum_{\substack{r+s = k \\ r,s < p}} (B^\T)^r B^s.
\end{align*} 
We can calculate $B^s$ and $(B^\T)^r$ by treating $B^*$ and $(B^*)^\T$ as weighted transition matrices for a random walk on the DAG  with weighted adjacency matrix $B$. Explicitly, define the set of all paths from node $i$ to node $j$ with $s$ steps by 
\begin{align*}
    \mathrm{PATH}_{ij}^s &\; = \{q = (q_0, q_1, \dots, q_s)\colon  
    B_{q_k q_{k+1}} \neq 0,  \text{ for } k = 0, \dots, s-1, q_0 = i, q_s = j \},
\end{align*}
and the weight $W_q$ of an $s$-length path $q = (q_0, \dots, q_s)$ by  $W_q =\prod_{k=1}^{s} B_{q_{k-1} q_{k}}.$ 
We have $| \mathrm{W}_q | = O(\theta^{s/2}/ \din^{s})$, since $|B_{ij}| = O( \sqrt{\theta} / \din)$ for any $i, j$ by the condition (C\ref{c2.beta.min}'). It follows that the $(i,j)$-th entry of $(B^\T)^r B^s$ is given by 
\begin{align*}
      \left( (B^\T)^r B^s \right)_{ij} =& \sum_{k \in [p]} (B^\T)^r_{ik} B^s_{kj} = \sum_{k \in [p]}\left( \sum_{q \in \mathrm{PATH}_{kj}^s} W_q \right) \left(\sum_{q \in \mathrm{PATH}_{ki}^r} W_q  \right) \\
      =&  \sum_{k \in [p]}  \sum_{q \in \mathrm{PATH}_{kj}^s, q' \in \mathrm{PATH}_{ki}^r}  W_{q'}W_q = N^{r,s}(i, j) O( \theta^{(r + s)/2} / \din^{r+s} ),  
\end{align*}
where $N^{r,s}(i, j)$ denotes the number of possible ``paths''  that start from node $i$, move backwards for $r$ steps, move forwards for $s$ steps and arrive at node $j$; such paths are called treks~\citep{uhler2013geometry,sullivant2010trek} and we denote them by $q = (q'_0, q'_{1},  \dots, q'_{r-1}, q'_{r} = q_{s}, q_{s - 1}, \dots, q_1, q_0)$, where $q'_0 = i, \, q_0 = j.$  
Since $d$ is the maximum number of parent nodes, given $i, j$, there are at most $\din$ different choices for $q'_1$ and $q_1$. Similarly, given $q'_1$ and $q_1$, there are at most $\din$ choices for $q'_2$ and $q_2$. Repeating this argument yields that $N^{r, s}(i, j) \leq \din^{r + s - 1}$, and it follows that $ \left( (B^\T)^r B^s \right)_{ii} = O( \theta^{(r + s) / 2} / \din).$  
Therefore,  for sufficiently large $n$,
\begin{align*}
    \Sigma_{ii} &\; = \sum_{k=0}^{2p-2} \sum_{\substack{r+s = k \\ r,s < p}} ( (B^\T)^r B^s)_{ii}\\
    &\; = 1 + \sum_{k=2}^{p} \sum_{1 \leq r \leq k-1} ( (B^\T)^r B^{k-r})_{ii} + \sum_{k=p+1}^{2p - 2} \sum_{k - p + 1 \leq r \leq p - 1 } ( (B^\T)^r B^{k - r})_{ii} \\
    &\; = 1 + \sum_{k=2}^{p} \din^{-1} (k-1) O (\theta^{k/2})  + \sum_{k=p+1}^{2p-2} \din^{-1}(2p- 1 -k) O ( \theta^{k/2}) \\
    &\; =  1 + \sum_{k=2}^\infty \din^{-1} O(2^{k-2} \theta^{k/2}) 
    = 1 + O(\theta/\din). 
\end{align*} 
Similarly, for any $i < j$,
\begin{align*}
    \Sigma_{ij} &\; = \sum_{k=0}^{2p-2} \sum_{\substack{r+s = k \\ r,s < p}}  ((B^\T)^r B^s)_{ij}  \\
    &\; = B_{ij} +  \sum_{k=2}^{p} \din^{-1} (k-1) O (\theta^{k/2})  + \sum_{k=p+1}^{2p-2} \din^{-1}(2p- 1 -k) O ( \theta^{k/2}),
\end{align*}
from which we obtain that $\Sigma_{ij} = O( \sqrt{\theta} / \din) + O(\theta/\din) = O( \sqrt{\theta} / \din). $
\end{proof}

\begin{figure}[t]
    \centering
    \includegraphics[width=.5\linewidth]{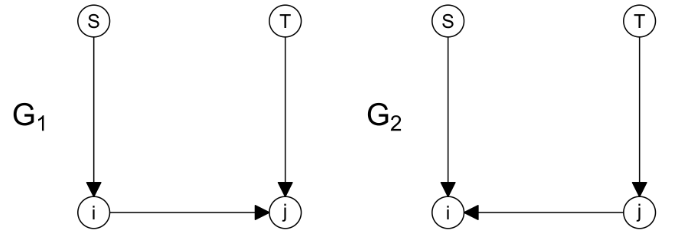}
    \caption{Local structure of $G_1, G_2$ in the proof of  Proposition~\ref{prop:lower_bound}.}
    \label{fig:rev.edge}
\end{figure}

\begin{proof}[Proof of Proposition~\ref{prop:lower_bound}]
Define $\dag(\din) = \cup_{\sigma \in \bbS^p} \cG_p^\sigma(\din)$. 
Let $G_1, G_2 \in \dag(\din)$ be such that $\{i \rightarrow j\} \in G_1$ and $G_2$ can be obtained from $G_1$ by reversing $i \rightarrow j$. Let $S = \Pa_i(G_1)$ and $T = \Pa_j(G_2)$; see Fig.~\ref{fig:rev.edge}. The sets $S$ and $T$ may not be disjoint. 

Assume we are on the event $ \cB' \cap \cJ$ defined in Section~\ref{subsec:event}. Since $G_1, G_2$ have the same number of edges, the posterior ratio of $G_1$ to $G_2$ is 
\begin{align*}
    \frac{\exp(\phi(G_1))}{\exp(\phi(G_2))} &\;= \left(\frac{\sum_{k=1}^{p} X_k^\T \oproj_{\Pa_k(G_2)} X_k}{\sum_{k=1}^{p} X_k^\T \oproj_{\Pa_k(G_1)} X_k} \right)^{\frac{\alpha p n+\kappa}{2}} \\
    &\;= \left(1 + \frac{X_j^\T(\proj_{T\cup\{i\}} -\proj_T)X_j - X_i^\T(\proj_{S\cup\{j\}} -\proj_S)X_i}{\sum_{k=1}^{p} X_k^\T \oproj_{\Pa_k(G_1)} X_k} \right)^{\frac{\alpha p n+\kappa}{2}} \\
    &\; \leq \exp \left( \frac{\alpha p n+\kappa}{2} \frac{X_j^\T(\proj_{T\cup\{i\}} -\proj_T)X_j - X_i^\T(\proj_{S\cup\{j\}} -\proj_S)X_i}{np \vmin /2}\right) \\
    &\; \leq \exp \left\{ \frac{ \alpha+1 }{\vmin } [X_j^\T(\proj_{T\cup\{i\}} -\proj_T)X_j - X_i^\T(\proj_{S\cup\{j\}} -\proj_S)X_i]  \right\},
\end{align*}
where the first inequality follows from the inequality $1+x \leq \exp(x)$ for all $x \in \bbR$ and the second follows from the observation that $X_k^\T \oproj_{\Pa_k(G_1)} X_k \geq n \vmin / 2$  for any $k \in [p]$ on the event $\cB'$.  
To conclude the proof, we need to show 
\begin{equation}\label{eq:bound}
    X_j^\T(\proj_{T\cup\{i\}} -\proj_T)X_j - X_i^\T(\proj_{S\cup\{j\}} -\proj_S)X_i  =    o( (\vmax^2/\vmin^2) \log p ).  
\end{equation} 
By Lemma~\ref{L:asym} and condition (C\ref{c2.din}'), on the event $\cJ$, we have 
\begin{align*}
    \frac{X_i^\T X_i}{n} &\; = \Sigma_{ii} + O(\vmin \sqrt{\theta}/ \din) = \omega^* + O(\theta / \din) + O(\vmin \sqrt{\theta}/ \din)= \omega^* + o(1), \\
    \frac{X_i^\T X_j}{n} &\; = \Sigma_{ij} + O(\vmin \sqrt{\theta}/ \din)= O(\sqrt{\theta}/ \din) = o(1).
\end{align*}
Hence, by Neumann series, for any $S \subseteq [p]$  such that $|S| \leq \din$, we have $(n^{-1}X_S^\T X_S)^{-1}  = (\omega^*)^{-1}I + R_S$  where $R_S$ is a matrix with all entries being $O(\sqrt{\theta} / \din )$. This yields, for all $i,j \in [p] \setminus S$, 
\begin{align*}
     \frac{X_i^\T \proj_S X_j}{n}  
    &\; = \frac{X_i^\T X_S}{n} \left( \frac{X_S^\T X_S}{n} \right)^{-1} \frac{X_S^\T X_j}{n} \\
&\; = \begin{bmatrix}
O(\sqrt{\theta}/ \din) & \cdots & O(\sqrt{\theta}/ \din)
\end{bmatrix} ((\omega^*)^{-1}I + R_S ) \begin{bmatrix}
O(\sqrt{\theta}/ \din)\\
\vdots \\
O(\sqrt{\theta}/ \din))
\end{bmatrix}  \\
&\; = \din  O( \theta / \din^2) + \din^2 O( \theta^{3/2} / \din^3 ) = O( \theta / \din ) = o(1). 
\end{align*}
It follows that
\begin{align*}
   &\; X_j^\T(\proj_{T\cup\{i\}} -\proj_T)X_j - X_i^\T(\proj_{S\cup\{j\}} -\proj_S)X_i = \frac{(X_j^\T\oproj_TX_i)^2}{X_i^\T\oproj_TX_i} - \frac{(X_j^\T\oproj_SX_i)^2}{X_j^\T\oproj_S X_j} \\
    &\;= n \frac{\left[\frac{X_j^\T X_i}{n} - \frac{X_j^\T \proj_T X_i}{n} \right]^2}{\frac{X_i^\T X_i}{n} - \frac{X_i^\T \proj_T X_i}{n}} - n \frac{\left[\frac{X_j^\T X_i}{n} - \frac{X_j^\T \proj_S X_i}{n} \right]^2}{\frac{X_j^\T X_j}{n} - \frac{X_j^\T \proj_S X_j}{n}} \\
    &\; = n (\omega^*)^{-1} \left\{ (1+o(1)) \left[\frac{X_j^\T X_i}{n} - \frac{X_j^\T \proj_T X_i}{n} \right]^2 - (1+o(1)) \left[\frac{X_j^\T X_i}{n} - \frac{X_j^\T \proj_S X_i}{n} \right]^2\right\} \\
    &\; = n (\omega^*)^{-1} \left\{ - \frac{2 X_j^\T X_i}{n} \left[\frac{X_j^\T \proj_T X_i}{n} - \frac{X_j^\T \proj_S X_i}{n} \right] + \left( \frac{X_j^\T \proj_T X_i}{n} \right)^2  - \left( \frac{X_j^\T \proj_S X_i}{n} \right)^2+ o(\theta/ \din^2) \right\}\\
    &\; = n \left\{  O(\sqrt{\theta}/ \din)  O(\theta/ \din)+  O(\theta^2 / \din^2) + o(\theta/ \din^2) \right\} = n   o(\theta/ \din^2)= o((\vmax^2/\vmin^2) \log p),
\end{align*}
which completes the proof of \eqref{eq:bound}. 
\end{proof}

\subsection{Proof of Theorem~\ref{thm:ITD}}\label{proof:ITD}

Let $\delta =  \vmin^2 \betamin   (\din + 1)^{-1} (\vmin \betamin + 3 \omega^*(1+\betamin))^{-1}$ and $\hat{\Sigma}_{ij} = X_i^\T X_j / n$ for each $(i, j)$.
Define 
     $\cK = \left\{ \max_{i,j \in [p]} | \hat{\Sigma}_{ij} - \Sigma_{ij}^*| \leq \delta \right\}.$   
For any $\epsilon > 0$, using Lemma 1  of \citet{ravikumar2011high} and our Lemma~\ref{L:set_ordering}, we can show that $\bbP^*( \cA \cap \cB \cap \cC \cap \cK ) \geq 1 - \epsilon$ and 
\begin{align*}
     \bbP^*\{|\hat{\Sigma}_{ij} - \Sigma_{ij}^*| > \delta\} \leq 4 \exp\left\{  - \frac{n \delta^2}{3200 \max_k(\Sigma^*_{ij})^2} \right\} \leq \frac{\epsilon}{p(p+1)}. 
\end{align*}
Further, from the proof of Proposition~\ref{prop:freq}, we know that on the event $\cA \cap \cB \cap \cC$, we have 
$$\argmax_{S \subset P_j \colon  |S| \leq \din} \phi_j\left(S,  \sum_{i \neq j} \SSE_i(G) \right) = \Pa_j(G^*),$$ 
for any $j \in [p]$,  $P_j \supseteq \Pa_j(G^*)$, and $G \in \cG^*_p(2\din)$. 
Observe that Theorem~\ref{thm:ITD} holds if we can show that for any $G \in \cG^*_p(\din)$, Algorithm~\ref{alg:STD} with input $\SSE = (\SSE_1(G), \dots, \SSE_p(G))$ returns some $\sigma \in [\sigma^*]$, but this follows by an argument completely analogous to the proof of Theorem 2 of~\citet{chen2019causal}. 
\hfill\qedsymbol{} 

\subsection{Derivation of the posterior distribution }\label{subsec:post}
Let $L(B, \omega)$ be the likelihood function in~\eqref{eq:ln.str.eq}. The $\alpha$-fractional posterior distribution of $B, \omega$, given the prior distributions in~\eqref{prior:B} and \eqref{prior:omega}, is 
\begin{align*}
    \pi_n (B, \omega \mid G, \sigma) & \propto \pi_0 (B, \omega \mid G, \sigma) L(B, \omega)^\alpha \\
    &= \frac{\pi_0 (B, \omega \mid G, \sigma)}{L(B, \omega)^{1-\alpha}}L(B, \omega),
\end{align*}
where the first term in the last equation can be regarded as the effective prior distribution for $(B, \omega) \mid (G, \sigma)$. By the normal-inverse-gamma conjugacy, the $\alpha$-fractional marginal likelihood of $(G, \sigma)$ is given by
\begin{align*}
    & f_\alpha(G, \sigma) \propto \int \pi_0 (B, \omega \mid G, \sigma) L(B, \omega)^\alpha d(B, \omega) \\
    & = \int \pi_0 (B \mid  \omega, G, \sigma) \pi_0 ( \omega \mid G, \sigma) L(B, \omega)^\alpha d(B, \omega) \\
    & \propto \int \left(\frac{\omega}{\gamma}\right)^{-|G|/2} \prod_{j=1}^p \mathrm{det}\left(X_{\Pa_j}^\T  X_{\Pa_j}\right)^{1/2} \exp\left\{-\frac{\gamma}{2\omega} \sum_{j=1}^p ( B_{\Pa_j,j} - \hat{B}_{\Pa_j,j} )^\mathrm{T}(X_{\Pa_j}^\mathrm{T}X_{\Pa_j})(B_{\Pa_j,j} - \hat{B}_{\Pa_j,j})  \right\}  \times \\ 
    & (\omega^{-\frac{\kappa}{2}-1}) \left[\omega^{-\frac{\alpha n p}{2}}  \exp\left\{-\frac{\alpha}{2\omega} \sum_{j=1}^p (X_j -B_{\Pa_j,j}^\mathrm{T} X_{\Pa_j} )^\mathrm{T}(X_j -B_{\Pa_j,j}^\mathrm{T} X_{\Pa_j} )  \right\}\right] d(B,\omega)  \\
    & \propto  \int \left(\frac{\omega}{\gamma}\right)^{-|G|/2} \omega^{-\frac{\alpha np + \kappa}{2} -1} \exp \left\{ - \frac{\alpha}{2\omega} \sum_{j=1}^p X_j^\mathrm{T} \oproj_{\Pa_j} X_j \right\}  \left(\frac{\alpha +\gamma}{\omega}\right)^{-|G|/2} \times \\
    & \int \left(\frac{\omega}{\alpha +\gamma}\right)^{-|G|/2} \prod_{j=1}^p \mathrm{det}\left(X_{\Pa_j}^\T  X_{\Pa_j}\right)^{1/2} \times \\
    & \exp\left\{-\frac{\alpha + \gamma}{2\omega} \sum_{j=1}^p ( B_{\Pa_j,j} - \hat{B}_{\Pa_j,j} )^\mathrm{T}(X_{\Pa_j}^\mathrm{T}X_{\Pa_j})(B_{\Pa_j,j} - \hat{B}_{\Pa_j,j})\right\} dB d\omega  \\
    & = \left(1+\frac{\alpha}{\gamma}\right)^{-|G|/2} \int  \omega^{-\frac{\alpha np + \kappa}{2} -1} \exp \left\{ - \frac{\alpha}{2\omega} \sum_{j=1}^p X_j^\mathrm{T} \oproj_{\Pa_j} X_j \right\}  d\omega\\
    & \propto \left(1+\frac{\alpha}{\gamma}\right)^{-|G|/2} \left(\sum_{j=1}^p \mathrm{RSS}_j(G)\right)^{-\frac{\alpha np + \kappa}{2}}. 
\end{align*}
Given the prior distribution~\eqref{prior:G}, 
we obtain the posterior distribution of $(G, \sigma)$ as 
\begin{align*}
    \pi_n(G, \sigma) & \propto f_\alpha(G, \sigma)\pi_0 (G, \sigma) \\
    & = \left(1+\frac{\alpha}{\gamma}\right)^{-|G|/2} \cdot \left(\sum_{j=1}^p \mathrm{RSS}_j(G)\right)^{-\frac{\alpha np + \kappa}{2}} \cdot p^{-c_0 \log p} \cdot \ind_{\{ \hat{G}_\sigma\}}(G) \\
    &  = e^{\phi(G)} \ind_{\{ \hat{G}_\sigma\}}(G). 
\end{align*}

\section{Simulation results}\label{subsec:simulation}

\subsection{Mixing behavior}\label{subsec:mixing_supp} 
In Fig.~\ref{fig:mixing1} we examine the mixing behavior of the three types of proposals for a moderately small sample size. 
We repeat the simulation studies shown in panels (a), (b), and (c) of Fig.~\ref{fig:mixing} in Section~\ref{subsec:mixing} by choosing $n = 100$ and keeping all the other simulation settings unchanged. 
We confirm that all $90$ trajectories have reached the red line, which appears to be the global mode.   
Figure~\ref{fig:mixing2} shows the mixing behavior of our method and the minimal I-MAP MCMC for the heterogeneous case where, for each $j \in [p]$, we sample error variance $\omega_j$ for node $j$ uniformly from $[0.5, 1.5]$. 
We still observe that some trajectories of the minimal I-MAP MCMC get stuck at local modes, while the mixing behavior of the proposed method is consistently good despite of the model misspecification.  

\begin{figure}[t!]
    \centering
    \includegraphics[width=0.32\textwidth]{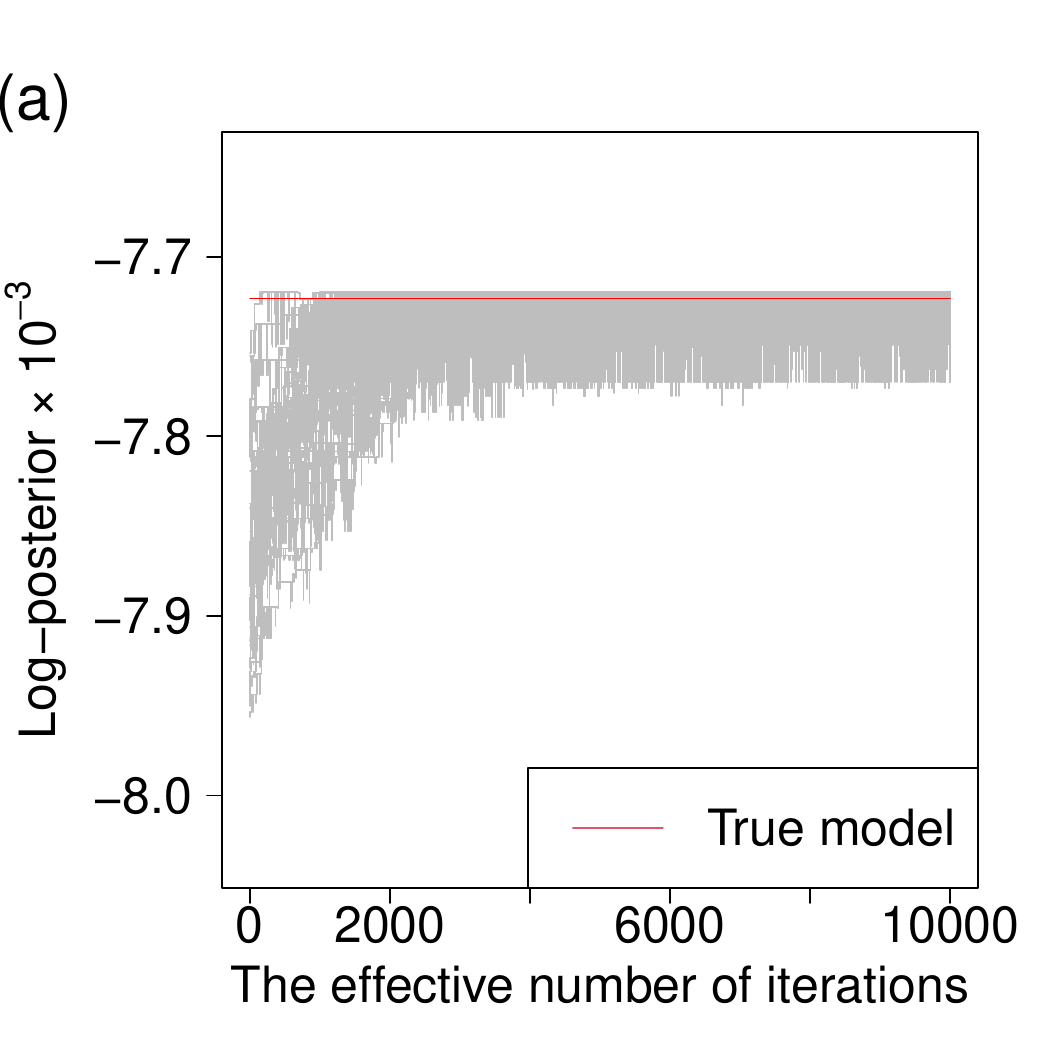}
    \includegraphics[width=0.32\textwidth]{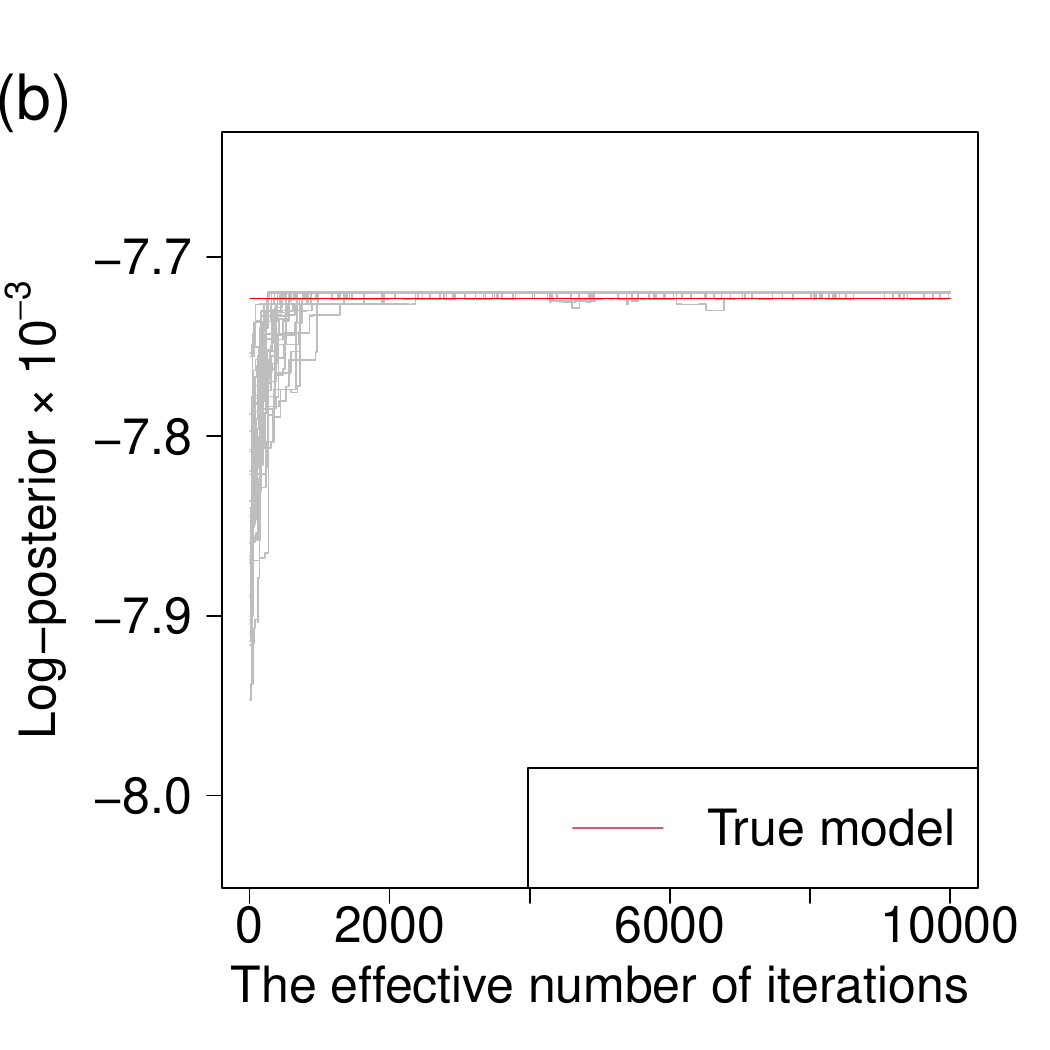}
    \includegraphics[width=0.32\textwidth]{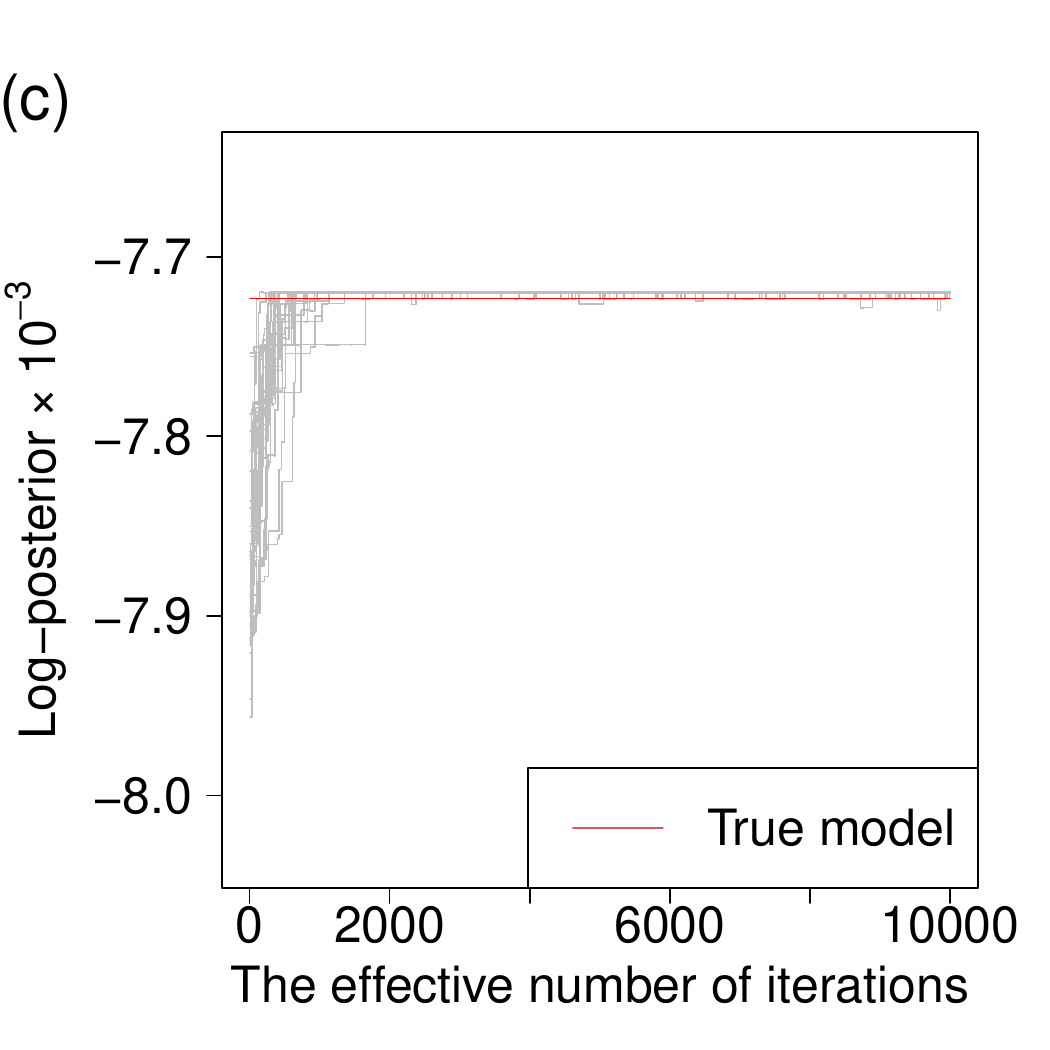}
    \caption{Log posterior probability times $10^{-3}$ versus the effective number of iterations of 30 MCMC runs for $p = 20$ and $n = 100$.  The red line represents the  true ordering $\sigma^*$.}
    \label{fig:mixing1}
\end{figure} 

\begin{figure}[!t]
    \centering
    \includegraphics[width=0.32\textwidth]{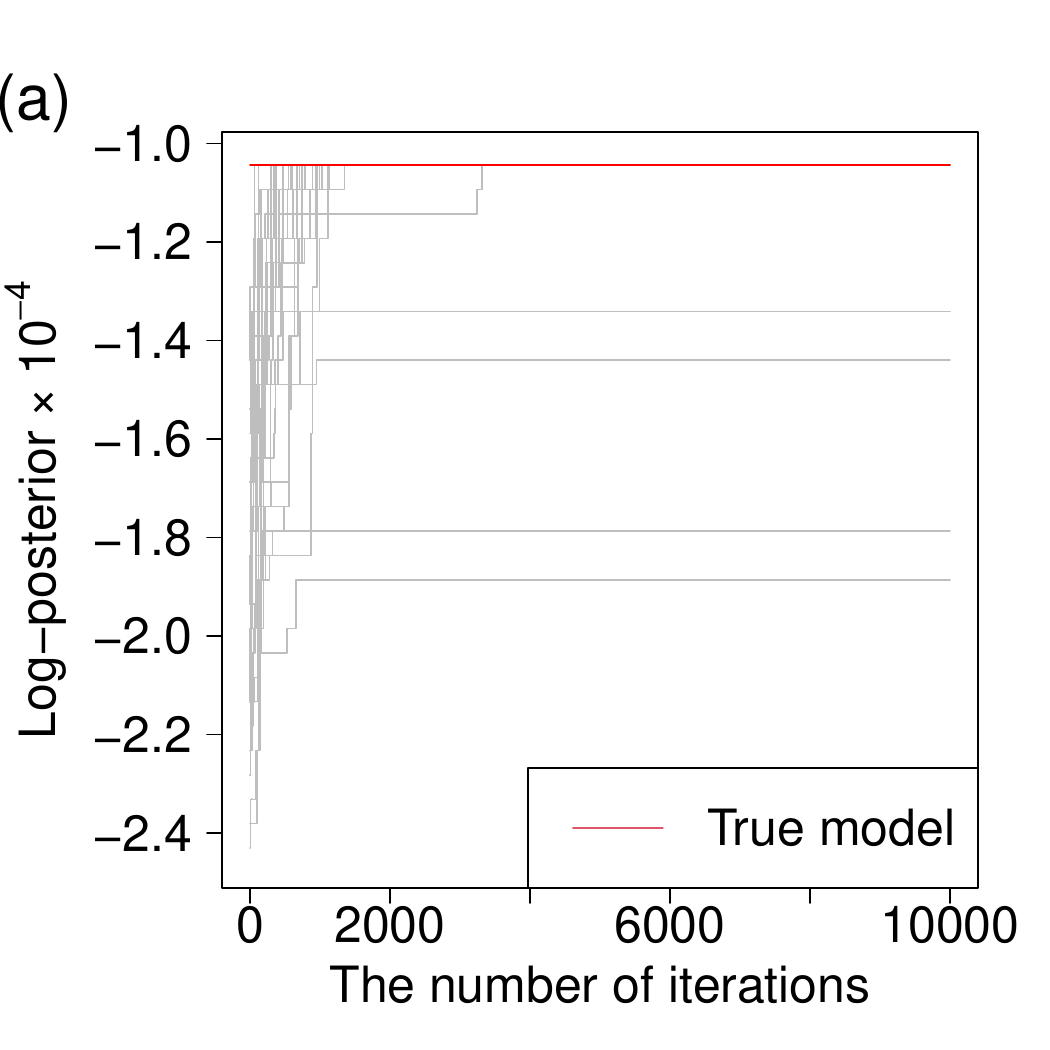}
    \includegraphics[width=0.32\textwidth]{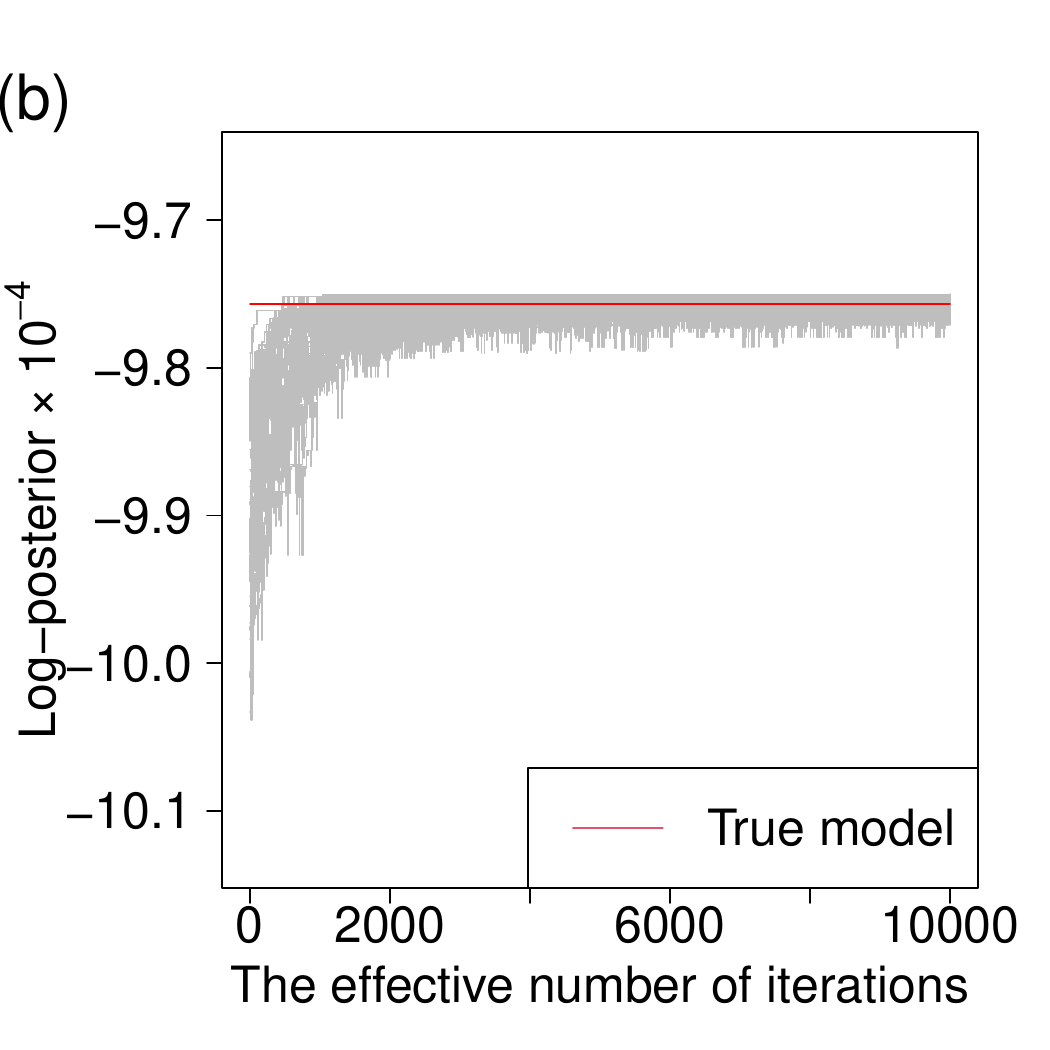} 
    \caption{Log posterior probability $\times 10^{-4}$ versus the  effective number of iterations of 30 MCMC runs with random initialization for the heterogeneous case with $p = 20$ and $n = 1000$: (a) minimal I-MAP MCMC, (b) the proposed method.  
    The red line represents the true ordering $\sigma^*$.  }
    \label{fig:mixing2}
\end{figure} 

\subsection{Performance evaluation}\label{subsec:compare_supp} 
We consider more scenarios for the simulation study described in Section~\ref{subsec:compare}.  We always fix $p = 40$. 
In Table~\ref{table:normal}, we still generate $X$ under the equal variance assumption but we sample each $B^*_{ij}$ for each edge $i \rightarrow j$ in the DAG $G^*$ from the standard Gaussian distribution.  The advantage of the proposed method is as significant as in Table~\ref{table:uniform} presented in the main text. 
In Table~\ref{table:hetero}, we sample the error variance $\omega_j$ for each $j$  uniformly from $[0.7, 1.3]$ and sample each $B^*_{ij}$ from the uniform distribution on $[-1, -0.3] \cup [0.3, 1]$. 
Comparing Table~\ref{table:hetero} with the left column of   Table~\ref{table:uniform}, we see that the advantage of our method over the competing ones becomes more substantial. 

\newpage

\begin{table}[ht]
\centering
\begin{adjustbox}{width=0.7\textwidth}
\small
\begin{tabular}{ccccc}
  & Signal & \multicolumn{3}{c}{$\mathrm{N}(0,1)$}         \\  
Method   & $n$      & 100              & 500             & 1000    \\ 
Proposed 
         & HD     & \textbf{10.4$\pm$0.8}    & \textbf{5.2$\pm$0.5}    & \textbf{4.2$\pm$0.4}    \vspace{-1mm}\\
         & FNR & \textbf{34.2$\pm$1.7}    & 17.0$\pm$1.7     & 13.8$\pm$1.2    \vspace{-1mm}\\
         & FDR    & \textbf{2.3$\pm$0.5}     & \textbf{1.7$\pm$0.5}    & \textbf{1.6$\pm$0.4}   \vspace{-1mm} \\
         & Flip   & 0.8$\pm$0.3       & \textbf{1.2$\pm$0.3}    & \textbf{1.0$\pm$0.3}    \vspace{-1mm}\\
         & Time   & 12.8$\pm$0.2      & 13.2$\pm$0.2     & 13.2$\pm$0.2   \\
TD      
         & HD     & 12.0$\pm$0.8      & 6.3$\pm$0.6      & 6.4$\pm$0.6       \vspace{-1mm} \\
         & FNR & 39.3$\pm$1.8      & 18.1$\pm$1.5     & 15.5$\pm$1.1      \vspace{-1mm} \\
         & FDR    & 3.7$\pm$0.8       & 4.8$\pm$1.2      & 6.8$\pm$1.2    \vspace{-1mm} \\
         & Filp   & 1.3$\pm$0.4       & 2.4$\pm$0.6      & 3.1$\pm$0.6      \vspace{-1mm} \\
         & Time   & 0.6$\pm$0.0       & 0.5$\pm$0.0      & 0.5$\pm$0.0      \\
LISTEN   
         & HD     & 12.5$\pm$0.8      & 6.5$\pm$0.6      & 5.9$\pm$0.6    \vspace{-1mm} \\
         & FNR & 39.3$\pm$1.8      & 18.8$\pm$1.5     & 15.3$\pm$1.2    \vspace{-1mm} \\
         & FDR    & 6.6$\pm$1.1       & 4.8$\pm$1.1      & 5.8$\pm$1.1     \vspace{-1mm} \\
         & Flip   & 2.0$\pm$0.4       & 2.6$\pm$0.6      & 2.8$\pm$0.5     \vspace{-1mm} \\
         & Time   & 0.5$\pm$0.0       & 0.5$\pm$0.0      & 0.5$\pm$0.0      
\end{tabular}
\end{adjustbox}
\caption{Standard Gaussian signal case with $p = 40$. Each entry gives mean $\pm$ 1 standard error.  The best performance with a margin of more than one $\mathrm{se}$ is highlighted in boldface. Time is measured in seconds.}
\label{table:normal}

\end{table}

\begin{table}[t]
\centering
\begin{adjustbox}{width=0.7\textwidth}
\small
\begin{tabular}{ccccc}
  & Signal &  \multicolumn{3}{c}{Heterogeneity}          \\ 
Method   & $n$            & 100           & 500           & 1000    \\     
Proposed 
         & HD      & \textbf{10.3$\pm$0.6} & \textbf{3.2$\pm$0.5}  & \textbf{4.4$\pm$0.8}  \vspace{-1mm}\\
         & FNR  & \textbf{33.1$\pm$1.6} & \textbf{6.0$\pm$1.0} & \textbf{6.0$\pm$0.8} \vspace{-1mm}\\
         & FDR      & \textbf{4.4$\pm$0.7}  & \textbf{6.1$\pm$1.2}  & \textbf{8.9$\pm$1.5} \vspace{-1mm} \\
         & Flip     & \textbf{2.8$\pm$0.5}  & \textbf{5.4$\pm$1.0}  &\textbf{ 6.0$\pm$0.8}  \vspace{-1mm}\\
         & Time   & 12.0$\pm$0.2   & 11.6$\pm$0.2   & 12.3$\pm$0.2   \\
TD      
         & HD        & 15.8$\pm$1.0   & 6.8$\pm$0.8    & 8.0$\pm$1.2   \vspace{-1mm} \\
         & FNR   & 45.5$\pm$2.0   & 10.0$\pm$1.1   & 9.1$\pm$1.2  \vspace{-1mm} \\
         & FDR      & 14.8$\pm$1.6   & 13.4$\pm$1.6   & 16.3$\pm$2.3  \vspace{-1mm} \\
         & Filp        & 7.5$\pm$0.9    & 9.2$\pm$1.1    & 9.0$\pm$1.2   \vspace{-1mm} \\
         & Time       & 0.5$\pm$0.0    & 0.5$\pm$0.0    & 0.5$\pm$0.0    \\
LISTEN   
         & HD        & 16.0$\pm$1.0   & 8.4$\pm$1.0    & 8.9$\pm$1.2   \vspace{-1mm} \\
         & FNR   & 46.2$\pm$1.9   & 11.3$\pm$1.0   & 10.0$\pm$1.1  \vspace{-1mm} \\
         & FDR        & 15.2$\pm$1.8   & 16.4$\pm$1.8   & 17.9$\pm$2.3  \vspace{-1mm} \\
         & Flip    & 7.1$\pm$0.8    & 10.5$\pm$1.0   & 9.7$\pm$1.1   \vspace{-1mm} \\
         & Time       & 0.5$\pm$0.0    & 0.6$\pm$0.0    & 0.5$\pm$0.0    
\end{tabular}
\end{adjustbox}
\caption{Heterogeneous error variance case with $p = 40$. Each entry gives mean $\pm$ 1 standard error. The best performance with a margin of more than one $\mathrm{se}$ is highlighted in boldface. Time is measured in seconds.}
\label{table:hetero} 
\end{table}

We also conduct simulation studies on the proposed algorithm with weakly increasing error variances.  We fix $n = 1,000$ and $p = 40$, and sample the error variance $\omega_j \sim \mathrm{Uniform}([1-b, 1+b])$ for 6 different heterogeneity levels $b$. We set $\sigma^* = (1, \dots, p)$ to be the true ordering and sort the error variances in ascending order to make them weakly increasing in $\sigma^*$. We generate $G^*$ by adding $i \rightarrow j$ for $i < j$ with probability $p_\mathrm{edge} = 3/(2p-2)$ and draw the edge weight $B_{ij}^*$ independently from some distribution $F$. In Table~\ref{table:increasing}, we present the results with 4 metrics: Hamming distance (HD), the false negative rate (FNR), false discover rate (FDR), and the percentage of flipped edges (Flip). The rows of $\mathrm{Uniform}$ and $\mathrm{Gaussian}$ indicate the result for $F$ being $\mathrm{Uniform}([-1, -0.3] \cup [0.3, 1])$ and that for $F$ being the standard normal distribution, respectively. 
Notably, the Flip rate is always very low, which indicates that  the algorithm can accurately identify the true ordering. When $b = 0.9$, FNR tends to be significantly larger. This is because some nodes may have very large error variances when $b = 0.9$, and thus the signal-to-noise ratio  is low, making it challenging for the algorithm to detect edges. 

\begin{table}[t]
\begin{tabular}{cccccccc}
Signal   &      & $b =$ 0           & $b =$ 0.1         & $b =$ 0.3         & $b =$ 0.5         & $b =$ 0.7         & $b =$ 0.9         \\\vspace{-1mm} 
Uniform  & HD   & 0.2$\pm$0.1  & 0.1$\pm$0.1  & 0.1$\pm$0.1  & 0.1$\pm$0.1  & 0.2$\pm$0.1  & 1.2$\pm$0.2  \\\vspace{-1mm} 
         & FNR  & 0.3$\pm$0.2  & 0.3$\pm$0.2  & 0.2$\pm$0.2  & 0.3$\pm$0.2  & 0.5$\pm$0.2  & 4.0$\pm$0.6  \\\vspace{-1mm} 
         & FDR  & 0.3$\pm$0.2  & 0.2$\pm$0.1  & 0.1$\pm$0.1  & 0.2$\pm$0.1  & 0.2$\pm$0.1  & 0.3$\pm$0.2  \\ 
         & Flip & 0.3$\pm$0.2  & 0.2$\pm$0.1  & 0.1$\pm$0.1  & 0.2$\pm$0.1  & 0.2$\pm$0.1  & 0.2$\pm$0.1  \\ \vspace{-1mm}
Gaussian & HD   & 4.9$\pm$0.5  & 4.3$\pm$0.4  & 4.5$\pm$0.4  & 4.7$\pm$0.4  & 5.1$\pm$0.4  & 6.0$\pm$0.4  \\\vspace{-1mm} 
         & FNR  & 15.4$\pm$1.4 & 15.3$\pm$1.3 & 14.9$\pm$1.2 & 15.5$\pm$1.2 & 17.1$\pm$1.3 & 20.2$\pm$1.3 \\\vspace{-1mm} 
         & FDR  & 2.2$\pm$0.6  & 0.4$\pm$0.2  & 0.5$\pm$0.2  & 0.3$\pm$0.2  & 0.3$\pm$0.2  & 0.4$\pm$0.2  \\\vspace{-1mm} 
         & Flip & 1.4$\pm$0.4  & 0.3$\pm$0.1  & 0.4$\pm$0.2  & 0.2$\pm$0.1  & 0.2$\pm$0.1  & 0.2$\pm$0.1 
\end{tabular}
\caption{A table for increasing error variances with heterogeneity level $b = 0, 0.1, \dots, 0.9$ with $p = 40$. We sample error variance from $\mathrm{Uniform}([1-b, 1+b])$ and sort in ascending order. Nonzero edge weights are from $\mathrm{Uniform}([-1, -0.3] \cup [0.3, 1])$ in Uniform case and $N(0,1)$ in Gaussian case.  Each entry gives mean $\pm$ 1 standard error.}
\label{table:increasing}
\end{table}

\subsection{Single-cell real data analysis}\label{subsec:realdata_supp} 
Figure~\ref{fig:real_data1} shows the result of the minimal I-MAP MCMC (with decomposable score) for the real data analysis. See Section~\ref{sec:real} in the main text for details. 

\begin{figure}[h]
    \begin{minipage}[c]{0.6\linewidth}
    \includegraphics[width=0.97\textwidth]{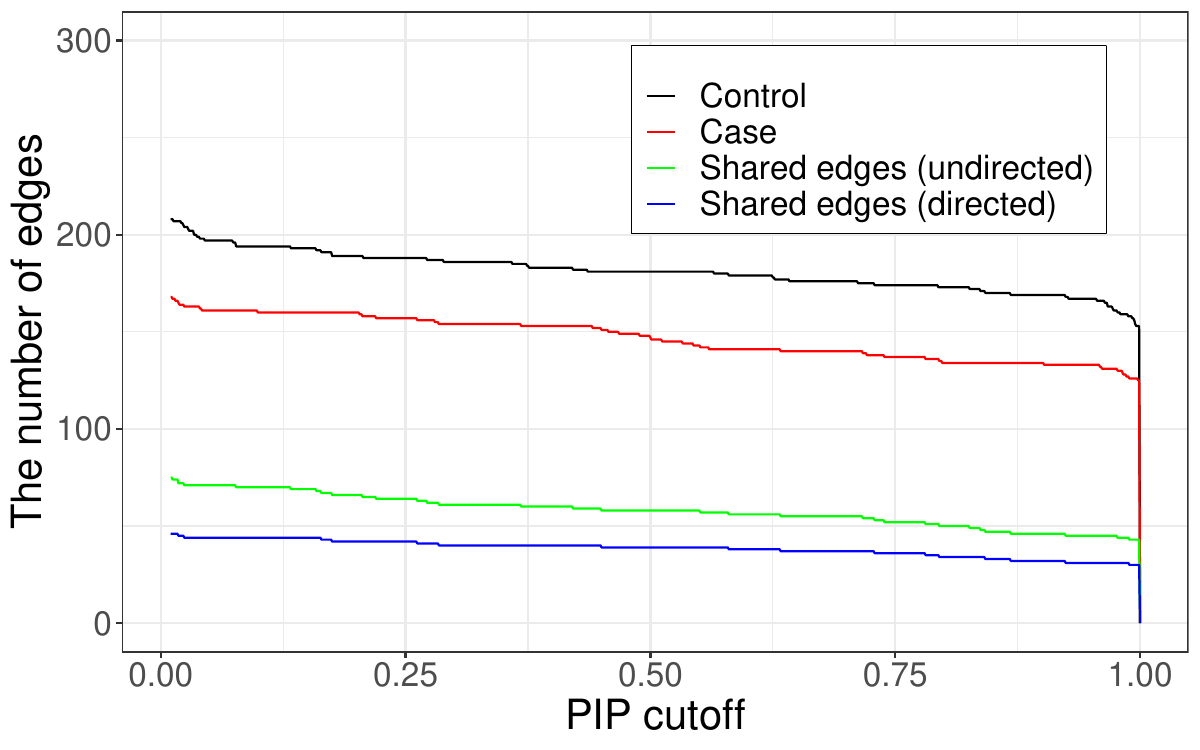}
\end{minipage}\hfill 
\begin{minipage}[c]{0.38\linewidth}
    \caption{Result of the minimal I-MAP MCMC for the real case-control data analysis. Given an estimate $\hat{\Gamma}_{ij}$ from MCMC samples, we infer the edge $i \rightarrow j$ exists in the DAG if $\hat{\Gamma}_{ij} > c$ where $c$ is the posterior inclusion probability cutoff. For each $c$, we count the number of edges occurring in the DAG for control samples (black), the number of edges in the DAG for case samples (red), the number of edges (edge direction ignored) in both DAGs (green), and the number of directed edges in both DAGs (blue).}
    \label{fig:real_data1}
    \end{minipage}
\end{figure}

\end{document}